\documentclass[a4paper,12pt]{book}
\usepackage{ucs}
\usepackage[utf8x]{inputenc} 

\usepackage{multicol} 
\setlength{\columnsep}{7mm}

\usepackage{mathrsfs} 
\usepackage{amsmath, amsthm, amssymb,bm}

\RequirePackage[T1]{fontenc}
\usepackage{amsfonts}

\usepackage[
  breaklinks,
  naturalnames,
  ]{hyperref}

\usepackage{amsrefs}

\makeatletter
\renewcommand\PrintNames@a[4]{%
    \PrintSeries{\name}
        {#1}
        {}{ y \set@othername}
        {,}{ \set@othername}
        {,}{ y \set@othername}
        {#2}{#4}{#3}%
}

\BibSpec{article}{%
    +{}  {\PrintAuthors}                {author}
    +{,} { \textit}                     {title}
    +{.} { }                            {part}
    +{:} { \textit}                     {subtitle}
    +{,} { \PrintContributions}         {contribution}
    +{.} { \PrintPartials}              {partial}
    +{,} { }                            {journal}
    +{}  { \textbf}                     {volume}
    +{}  { \PrintDatePV}                {date}
    +{,} { \issuetext}                  {number}
    +{,} { \eprintpages}                {pages}
    +{,} { }                            {status}
    +{,} { \PrintDOI}                   {doi}
    +{,} { available at \eprint}        {eprint}
    +{}  { \parenthesize}               {language}
    +{}  { \PrintTranslation}           {translation}
    +{;} { \PrintReprint}               {reprint}
    +{.} { }                            {note}
    +{.} {}                             {transition}
    +{}  {\SentenceSpace \PrintReviews} {review}
}
\makeatother

\usepackage[all]{xy}

\usepackage{braket} 
\usepackage{enumitem}

\newtheorem{thm}{Theorem}[section]
\newtheorem*{thm*}{Theorem}
\newtheorem{coro}[thm]{Corollary}

\newtheorem{lemma}[thm]{Lemma}
\newtheorem{prop}[thm]{Proposition}
\theoremstyle{plain}

\newtheorem{definicion}[thm]{Definition}
\theoremstyle{definition}

\newtheorem{ej}[thm]{Example}
\newtheorem{rem}[thm]{Remark}
\theoremstyle{remark}

\usepackage[pdftex]{graphicx}
\usepackage{capt-of}

\usepackage[usenames,dvipsnames,svgnames,table]{xcolor}

\setlength{\textwidth}{17cm}
\setlength{\evensidemargin}{-0.54cm}
\setlength{\oddsidemargin}{-0.54cm}
\setlength{\textheight}{25.9cm}
\setlength{\topmargin}{-1.3cm}

\usepackage{a4,color,palatino,fancyhdr}

\fancyhf{} 
\pagestyle{fancy}
\pagenumbering{roman} 
\pagenumbering{arabic} 
\lhead{\chaptername \ \thechapter} \chead{} \rhead{\thepage} 



\usepackage{longtable}
\usepackage[acronym,xindy,toc]{glossaries}
\newacronym{qft}{QFT}{quantum field theory}
\newacronym{pqft}{pQFT}{perturbative quantum field theory}
\newacronym{tvs}{TVS}{topological vector space}
\newacronym{lcs}{LCS}{locally convex space}
\newglossaryentry{angelsperarea}{
  name = $a$ ,
  description = The number of angels per unit area,
}
\newglossaryentry{infdiff}{name=space of infinitely differentiable functions,symbol={\[\mathscr{E}\left( \Omega\right)  \]},
description=blah}
\makeglossaries

\usepackage{hyperref}
\hypersetup{
    colorlinks=true,
    linkcolor=blue,
    filecolor=magenta,      
    urlcolor=blue,
    citecolor=violet,
}
\usepackage{makeidx}
\makeindex

\begin{document}

\thispagestyle{empty}

\begin {center}

\medskip
\textbf{UNIVERSIDAD DE BUENOS AIRES}

\smallskip

\textbf{Facultad de Ciencias Exactas y Naturales}

\smallskip

\textbf{Departamento de Matem\'atica}

\vspace{3.5cm}

\textbf{\large Tesis de Licenciatura}

\vspace{1.5cm}

\textbf{\large On the renormalization group of quantum field theory \\
\vspace{0.3cm}
(after R. Bocherds)}

\vspace{1.5cm}

\textbf{Gabriel Alexis Palau}

\end {center}

\vspace{1.5cm}

\noindent \textbf{Director: Estanislao Benito Herscovich Ramoneda} \

\vspace{3cm}


\chapter*{Agradecimientos}
 Me gustaría empezar esta tesis agradeciendo a todas las personas que de una forma u otra me incentivaron, ayudaron y/o soportaron durante el estudio de esta carrera. Empezando por mi padre Hector Palau, quien en mi infancia siempre que podía me daba algún acertijo sobre la naturaleza, por ejemplo contándome que es posible que las estrellas que uno ve en la noche ya no existan, o enseñándome a calcular la distancia a la que una tormenta se encuentra a partir del sonido de sus truenos. 
También quiero agradecer a mi madre, Isabel do Eyo quien desde siempre me apoyó e incentivó en este camino. A mis hermanas Cintia y Flavia, por aguantar tanta desatención en ciertos períodos de la carrera, y a mi novia Elim Garcia por lo mismo sobre todo en la etapa final de la misma y por tanta ayuda y apoyo.
 Agradezco también a mi abuelo Eduardo por sus charlas y a mi abuela Emilia por el apoyo.


A mis compañeros y amigos por los ratos compartidos Melanie Bondorevsky, Virginia Gali y los "eje eques" y todos aquellos con quienes compartí momentos gratos.    

Académicamente le agradezco infinitamente a mi director de tesis Estanislao Herscovich quien estuvo tan presente durante la tesis y me ayudó tanto que nadie adivinaría que nos separaba un océano, por la enorme paciencia y la gran enseñanza que me dejó. También por brindarme el placer de tener con él una que otra charla sobre física.

Agradezco también al jurado de esta tesis Gabriel Larotonda y Mauricio Leston por leer y juzgar este trabajo.

\chapter*{Introduction}
\par The aim of this thesis is to analyze part of the results presented by R.E. Borcherds in his article on renormalization and quantum field theories \cite{bor}. To facilitate this, the first chapter presents the basic facts on category theory, sheaf theory and the theory of distributions on a manifold. In the particular case of distributions on manifolds we differentiate between the H-distribution (introduced by Hörmander \cite{hor}) and the D-distributions (described in the book of Dieudonn{\'e} \cite{di}) and extend the product of distributions defined by Hörmander only for H-distibution, to an action from H-distributions over D-distributions. 
 Each one of these facts will be used along the thesis in order to exhibit more detailed descriptions and explanations for the results. 
\par In Chapter 2 the concepts of spacetime, fields, Lagrangians, propagators and Feynman measure are defined. The notion of Feynman measure is introduced with a more detailed explanation of the Gaussian condition and its relation with the cut propagator. 
\par In the last chapter we define the renormalization group and immediately characterize it using the results from Chapter 1. The algebraic structure of this group is described in detail. We present a descending filtration of the renormalization group by normal subgroups, compute the corresponding quotients and commutators, and show that any element of the renormalization group can be written as an infinite product of elements each one living in one of the aforementioned subgroups. Lastly we show that an action from the renormalization group over the set of Feynman measures is well defined and prove that this action is transitive over the Feynman measures associated with a given cut local propagator.


\tableofcontents
\pagebreak

\chapter{Preliminary results} 

\par Along this thesis we will need certain level of knowledge in category theory, sheaf theory and the theory of distributions on manifolds. Since we shall just use some specific result in each of these areas, we shall present them. The reader familiarized with these concepts can omit this chapter and eventually come back next to regard a specific result if needed.

\section{Results on category theory}

\subsection{Basic notions}

\par For the basic definitions on category theory, we refer the reader to \cite{Mariano}, Ch.9, or \cite{chris}, Ch.XI.1. 
Moreover, the definitions of monoidal (or tensor) category, monoidal (or tensor) functor and monoidal natural transformation can be found in \cite{chris}, Ch.XI.2. and XI.4.

\par We shall use the letters $\mathcal{C}$, $\mathcal{D}$,... to denote a monoidal category. 
If $\mathcal{C}$ is a monoidal category we will denote by  $\otimes_{\mathcal{C}}:\mathcal{C}\times\mathcal{C}\rightarrow\mathcal{C}$ the \emph{tensor product} bifunctor and by $\textbf{1}_{\mathcal{C}}$ the unit object.
The left and right unit morphisms will be denoted by $l_{M}:\textbf{1}_{\mathcal{C}}\otimes_{\mathcal{C}}M\to M$ and $r_{M}:M\otimes_{\mathcal{C}}\textbf{1}_{\mathcal{C}}\to M$ for any $M\in Obj(\mathcal{C})$; whereas the associativity morphism will be denoted by $a_{A,B,C}:(A\otimes B)\otimes C \to A\otimes (B\otimes C)$ for any $A,B$ and $C$ objects of $\mathcal{C}$.

\par  A strict monoidal category is a category where the associativity and unit morphism are the identities of the category, i.e. for all $X$, $Y$ and $Z$ objects of $\mathcal{C}$ we have $(X\otimes Y)\otimes Z=X\otimes(Y\otimes Z)$ and $\textbf{1}_{\mathcal{C}}\otimes X=X=X\otimes\textbf{1}_{\mathcal{C}}$.

\par We present a well-known result about monoidal categories, whose proof can be found in \cite{chris} Theorem XI.5.3  

\begin{thm} Any monoidal category is monoidal (or tensor) equivalent to a strict one.
\end{thm}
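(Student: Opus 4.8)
The plan is to construct an explicit strict monoidal category $\mathcal{C}^{str}$ out of $\mathcal{C}$ and to exhibit a monoidal equivalence between them. The objects of $\mathcal{C}^{str}$ will be the finite sequences (words) $(V_1,\dots,V_n)$ of objects of $\mathcal{C}$, the empty word included. To define the morphisms, and later the coherence isomorphisms, I first fix an ``evaluation'' functor $F\colon\mathcal{C}^{str}\to\mathcal{C}$ sending the empty word to $\mathbf{1}_{\mathcal{C}}$, a length-one word $(V)$ to $V$, and a longer word to its left-bracketed tensor product $(\cdots(V_1\otimes_{\mathcal{C}}V_2)\otimes_{\mathcal{C}}\cdots)\otimes_{\mathcal{C}}V_n$; then I set $\mathrm{Hom}_{\mathcal{C}^{str}}(S,S')=\mathrm{Hom}_{\mathcal{C}}(F(S),F(S'))$, with composition and identities inherited from $\mathcal{C}$. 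By construction $F$ is fully faithful, and since every object $V$ of $\mathcal{C}$ equals $F$ of the word $(V)$, it is also essentially surjective, so $F$ is already an equivalence of the underlying categories.

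Next I would equip $\mathcal{C}^{str}$ with the tensor product given on objects by concatenation of words. This operation is strictly associative and has the empty word as a strict two-sided unit, so the associativity and unit constraints of $\mathcal{C}^{str}$ can be taken to be the identities of the category, which is exactly what strictness demands. The content lies in defining $\otimes$ on morphisms. Given $f\colon S\to S'$ and $g\colon T\to T'$, that is, morphisms $F(S)\to F(S')$ and $F(T)\to F(T')$ in $\mathcal{C}$, I set $f\otimes g$ to be the composite
\[
F(S\ast T)\xrightarrow{\ \cong\ }F(S)\otimes_{\mathcal{C}}F(T)\xrightarrow{\ f\otimes_{\mathcal{C}} g\ }F(S')\otimes_{\mathcal{C}}F(T')\xrightarrow{\ \cong\ }F(S'\ast T'),
\]
where $\ast$ denotes concatenation and the outer isomorphisms are the canonical ones rebracketing a tensor product, built entirely from the constraints $a$, $l$ and $r$ of $\mathcal{C}$.

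The crux of the argument, and the step I expect to be the main obstacle, is establishing that these canonical rebracketing isomorphisms are well defined, that is, independent of the particular chain of structural constraints used to build them, and mutually compatible under the three arrows above. This is precisely the content of Mac Lane's coherence theorem for monoidal categories, which I would invoke here and which must therefore be proved beforehand. Granting it, one checks routinely that $\otimes$ is a bifunctor on $\mathcal{C}^{str}$ (preservation of identities and of composition reduces to the naturality of the canonical isomorphisms), and that, together with the identity constraints, $\mathcal{C}^{str}$ is a genuine strict monoidal category.

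Finally, to promote the equivalence $F$ to a \emph{monoidal} equivalence I would endow it with the structure morphisms $\varphi_{S,T}\colon F(S)\otimes_{\mathcal{C}}F(T)\to F(S\ast T)$ given again by the canonical rebracketing isomorphism, and $\varphi_{0}=\mathrm{id}\colon \mathbf{1}_{\mathcal{C}}\to F(\emptyset)$. Coherence once more guarantees that $(F,\varphi,\varphi_{0})$ satisfies the associativity and unit axioms of a strong monoidal functor. Since $F$ is then a fully faithful, essentially surjective strong monoidal functor whose structure maps $\varphi$ are isomorphisms, it is a monoidal equivalence, and $\mathcal{C}^{str}$ is the desired strict category monoidally equivalent to $\mathcal{C}$.
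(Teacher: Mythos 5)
Your construction is correct, and it is in fact the same underlying construction as in the proof the paper points to (it cites Kassel, Theorem XI.5.3, rather than giving an argument): objects of the strict category are words in objects of $\mathcal{C}$, the evaluation functor $F$ is the left-bracketed tensor product, morphisms are transported along $F$, and the tensor product is concatenation of words. The genuine difference is how the rebracketing isomorphisms $F(S)\otimes_{\mathcal{C}}F(T)\to F(S\ast T)$ are handled. You take them to be \emph{the} canonical isomorphisms and invoke Mac Lane's coherence theorem to guarantee that they are well defined, natural, and mutually compatible; this is legitimate only if coherence is proved beforehand by an independent argument (e.g.\ Mac Lane's original pentagon--triangle induction), which you correctly flag as a prerequisite. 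Kassel's proof goes the other way around: it defines $\varphi(S,T)\colon F(S)\otimes_{\mathcal{C}}F(T)\to F(S\ast T)$ explicitly, by induction on the length of $T$ (starting from the unit constraint $r_{F(S)}$ and extending one letter at a time via the associativity constraint), and then verifies bifunctoriality of the concatenation tensor product and the monoidal-functor axioms for $(F,\varphi,\varphi_{0})$ directly from the pentagon and triangle axioms; the full coherence theorem is then deduced as a corollary of strictification. This difference matters in the context of the paper, which states immediately after the theorem that it \emph{implies} Mac Lane's coherence theorem: combined with your proof that implication would be circular. So either cite an independent proof of coherence as your external input, or, to fit the paper's logical order, replace the appeal to coherence by the inductive definition of the $\varphi(S,T)$ together with the direct pentagon/triangle verifications. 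What your route buys is brevity and conceptual transparency once coherence is available; what Kassel's route buys is a self-contained proof from the axioms and coherence for free afterwards.
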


\par This theorem implies Mac Lane´s coherence theorem which states that in a monoidal category all diagrams built with the unit, associativity or identity morphisms commute. 
\par  Hence from now on we will suppose for simplicity in the exposition that all our monoidal categories are strict.

\begin{definicion}
 Given a monoidal category $\mathcal{C}$, a triplet $(A,m,u)$ with $A\in Obj(\mathcal{C})$, $m\in Hom_{\mathcal{C}}(A\otimes A,A)$ and $u\in Hom_{\mathcal{C}}(\textbf{1},A)$ is called an \emph{unitary algebra in $\mathcal{C}$} if we have:
\begin{eqnarray*}
& m\circ(m\otimes Id_{A})=m \circ(Id_{A}\otimes m),\\
& m\circ(u\otimes Id_{A})=Id_{A}, \\
& m\circ(Id_{A}\otimes u)=Id_{A}. 
\end{eqnarray*}
\end{definicion}

\begin{rem}
$\textbf{1}_{\mathcal{C}}$ with the product $l_{\textbf{1}_{\mathcal{C}}}=r_{\textbf{1}_{\mathcal{C}}}$ (see \cite{chris}, Lemma XI.2.3) and unit $Id_{\textbf{1}_{\mathcal{C}}}$ is a unitary algebra. 
\end{rem}
\par Notice that the natural domain of $m\otimes Id_{A}$ is $(A\otimes A)\otimes A$ and for 
$Id_{A}\otimes m$ it is $A\otimes (A\otimes A)$, so the only reason why the first equation of the definition makes sense is because in a strict monoidal category the two domains are equal. A similar observation applies to the second and third equations: if we were not in a strict monoidal category the first of the equations above should take the form $m\circ(m\otimes Id_{A})=m\circ(Id_{A}\otimes m)\circ a_{A,A,A}$, and the same applies to the rest. 

\par The morphism between two unitary algebras $(A,m_{A},u_{A})$ and $(B,m_{B},u_{B})$ in $\mathcal{C}$ is a morphism $f\in Hom_{\mathcal{C}}(A,B)$ such that
 \begin{align*}
 u_{B}\circ f=u_{A} & & \text{and} & & m_{B}\circ(f\otimes f)=f\otimes m_{A} .
 \end{align*}

\par The class of unitary algebras in $\mathcal{C}$ together with the previous morphisms form a category which we note by $Alg_{\mathcal{C}}$.  
Since we will be interested in unitary algebras, we will drop the adjective unitary and refer to them as algebras.

\begin{definicion}
Let $\mathcal{C}$ be a monoidal category and $(A,m,u)$ a unitary algebra in $\mathcal{C}$. A \emph{right module over $A$} is a pair $(M,\rho)$ such that $M\in Obj(\mathcal{C})$ and $\rho \in Hom_{\mathcal{C}}(M\otimes A,M) $ satisfying the following equations
\begin{eqnarray*}
& \rho \circ (Id_{M}\otimes m)=m\circ (\rho \otimes Id_{A}), \\
& \rho \circ (Id_{M}\otimes u)=Id_{M}.
\end{eqnarray*}
\end{definicion}   

\par One defines left modules over an algebra $A$ analogously.  A \emph{bimodule over $A$} is a triple $(M,\lambda,\rho)$ such that $(M,\lambda)$ is a left module and $(M,\rho)$ is a right module over $A$ and $\rho \circ (\lambda\otimes Id_{A})=\lambda\circ (Id_{A}\otimes \rho)$. Note that $(A,\mu,\mu)$ is a bimodule, called \emph{regular}. A morphism of bimodules is a morphism of left and right modules. They clearly form a category.

\begin{definicion}
 Let $\mathcal{C}$ be a monoidal category and $(A,m,u)$ a unitary algebra in $\mathcal{C}$. An \emph{ideal} I of $A$ is a subobject of $A$ in the category of bimodules over $(A,m,u)$.
\end{definicion} 

 \par Let $A$ be a commutative algebra and $_{A}Mod$
be the category of $A$-modules. It is a symmetric monoidal category for the canonical tensor product over $A$ and the usual braiding denote by $\tau$.  A \emph{good category} $\mathcal{C}$ will be a symmetric monoidal category together with a fully faithfull monoidal functor from $\mathcal{C}$ to the symmetric monoidal category $_{A}Mod$, for some $A$.

\par For the following definition we suppose our category satisfies Grothendieck's axiom $Ab3^{\ast}$ (see \cite{wei}, Appendix for the definition). 

\begin{definicion}
 Given an algebra $A$ in a monoidal category $\mathcal{C}$, and a subobject $X$ of $A$, the \emph{ideal generated} by $X$ is the intersection of the family of ideals $I$ of $A$ such that $X$ is a subobject of $I$ (see \cite{mitch} Ch.I, section 9). 
 
\end{definicion}

\begin{definicion}\label{coalgebra}
Given a monoidal category $\mathcal{C}$, a \emph{counitary coalgebra in $\mathcal{C}$} is a triplet $(C,\varepsilon_{C},\Delta_{C})$ where $C\in Obj(\mathcal{C})$,  $\varepsilon_{C} \in Hom(C,\textbf{1})$, $\Delta_{C}\in Hom(C, C\otimes C)$ such that

\begin{eqnarray*}
&(\Delta_{C}\otimes Id_{C})\circ\Delta_{C}=(Id_{C}\otimes\Delta_{C})\circ \Delta_{C},\\
& Id_{C}=(\varepsilon_{C}\otimes Id_{C})\circ\Delta_{C},\\
& Id_{C}=(Id_{C}\otimes \varepsilon_{C})\circ\Delta_{C}.
\end{eqnarray*}
\end{definicion}

\par The morphism $\varepsilon_{C}$ is called the \emph{counit} of $C$ and $\Delta_{C}$ the \emph{coproduct}.

\begin{rem}
The object $\textbf{1}_{\mathcal{C}}$ together with coproduct $l_{\textbf{1}_{\mathcal{C}}}^{-1}=r_{\textbf{1}_{\mathcal{C}}}^{-1}$ (see previous remark) and counit $Id_{\mathcal{C}}$ is a counitary coalgebra.
\end{rem}

\par Since we are mainly interested in counitary coalgebras, we will only refer to them as coalgebras.

\par Similarly to the case of algebras the class of coalgebras in $\mathcal{C}$ constitute a category if we take as morphism from  $(C,\varepsilon_{C},\Delta_{C})$ to $(D,\varepsilon_{D},\Delta_{D})$ the maps $g\in Hom_{\mathcal{C}}(C,D)$, satisfying
\begin{align*}
 f\circ \varepsilon_{C}=\varepsilon_{D} &  & \text{and} & &  (f\otimes f)\circ \Delta_{C}=\Delta_{D}\circ f.
\end{align*}

\par We denote this category by  $Coalg_{\mathcal{C}}$. Similarly one defines coalgebras without counit and their morphisms. We denote this category  $_{n}Coalg_{\mathcal{C}}$
\par We refer the reader to \cite{chris}, Ch XIII for the definition of braided category. For each pair of objects $V$ and $W$  of the category $\mathcal{C}$ denote by $c_{V,W}\in Hom(V\otimes W,W\otimes V)$ the \emph{braiding} isomorphism. A symmetric monoidal category is a braided monoidal category such that $c_{V,W}\circ c_{W,V}=Id_{W\otimes V}$ for all objects $V$ and $W$ in the category.


\begin{definicion}
 Given a symmetric monoidal category $\mathcal{C}$, a unitary algebra $(A,m,u)$ in $\mathcal{C}$ is called a \emph{commutative} if $m=m\circ c_{A,A}$. 
\end{definicion}

\par If $(A,m_{A},u_{A})$ and $(B,m_{B},u_{B})$ are two commutative unitary algebras, a morphism between them will be a map $f\in Hom_{Alg_{\mathcal{C}}}(A,B)$. 
The class of commutative unitary algebras together with these morphisms forms a category which we note by $Alg^{c}_{\mathcal{C}}$.

\begin{definicion} \label{cocomcoalgebra}Given a symmetric monoidal category $\mathcal{C}$, a cocommutative counitary coalgebra in $\mathcal{C}$ is a coalgebra $(C,\Delta_{C},\varepsilon_{C})$ such that $\Delta_{C}=c_{C,C}\circ \Delta_{C}$.
\end{definicion}

\par If $(C,\Delta_{C},u_{C})$ and $(D,\Delta_{D},\Delta_{D})$ are two cocommutative counitary coalgebras, a morphism between them will be a map $f\in Hom_{Coalg_{\mathcal{C}}}(C,D)$. 
The class of cocommutative counitary coalgebras together with these morphisms forms a category which we note by $Coalg^{c}_{\mathcal{C}}$.

\par The basic example of a braided (non-strict) monoidal category is the category of vector spaces over a field. Note that the category of algebras in a braided monoidal category is also braided, as recalled in \cite{su}, Ch.1.

\begin{definicion}
Given a monoidal category $\mathcal{C}$ and a coalgebra $(C,\Delta,\varepsilon)$ in  $\mathcal{C}$ a \emph{right comodule over $C$} is pair $(M,\delta)$ where  $M\in Obj(\mathcal{C})$ and $\delta\in Hom(M,M\otimes C)$ satisfy
\begin{eqnarray}\label{224}
& (Id_{M}\otimes\Delta)\circ \delta=(\delta\otimes Id_{C})\circ \delta,
& (Id_{M}\otimes \varepsilon)\circ \delta=Id_{M}.
\end{eqnarray}
\end{definicion} 

\par The morphism $\delta$ is called the \emph{coaction}. We denote by $Com_{C}$ the category whose objects are right comodules over $C$ and whose morphisms are given as follows. If $(M,\delta_{M})$ and $(N,\delta_{N})$ are two right comodules over $C$, a morphism $f$ from $M$ to $N$ is a map  $f\in Hom_{\mathcal{C}}(M,N)$ such that
 
\begin{equation}\label{1111}
\xymatrix{
M\ar[rr]^{f} \ar[d]_{\delta_{M}} & & N \ar[d]^{\delta_{N}}\\
M\otimes C \ar[rr]^{f\otimes Id_{C}}& & N\otimes C
}
\end{equation}
is commutative.
\par Since we shall work only with right comodules, we shall refer to them just as comodules.  
\begin{prop}\label{prop 1}
Let $\mathcal{C}$ be a monoidal category, $W\in Obj(\mathcal{C})$ and $(C,\Delta_{C},\varepsilon_{C})$ a coalgebra in $\mathcal{C}$. Then $W\otimes C$ has structure of right comodule over $C$, with coaction $\delta:=Id_{W}\otimes \Delta_{C}:W\otimes C \rightarrow (W\otimes C)\otimes C $.
\end{prop}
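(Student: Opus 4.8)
The plan is to verify directly the two defining equations of a right comodule, stated in \eqref{224}, for the pair $(W\otimes C,\delta)$ with $M:=W\otimes C$ and $\delta:=Id_W\otimes\Delta_C$. Throughout I would rely on two things: the standing assumption that $\mathcal{C}$ is strict, so that $Id_{W\otimes C}=Id_W\otimes Id_C$ and all bracketings agree on the nose; and the bifunctoriality of $\otimes_{\mathcal{C}}$, i.e.\ the interchange law $(f\otimes g)\circ(f'\otimes g')=(f\circ f')\otimes(g\circ g')$, which is the engine that lets me pull the common factor $Id_W$ outside each composite and thereby reduce everything to the coalgebra axioms of $C$.

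First I would check the coassociativity equation. Its left-hand side unfolds as
\[
(Id_M\otimes\Delta_C)\circ\delta=(Id_W\otimes Id_C\otimes\Delta_C)\circ(Id_W\otimes\Delta_C)=Id_W\otimes\big((Id_C\otimes\Delta_C)\circ\Delta_C\big),
\]
the last equality being the interchange law. The right-hand side unfolds symmetrically as
\[
(\delta\otimes Id_C)\circ\delta=(Id_W\otimes\Delta_C\otimes Id_C)\circ(Id_W\otimes\Delta_C)=Id_W\otimes\big((\Delta_C\otimes Id_C)\circ\Delta_C\big).
\]
These two expressions coincide precisely by the coassociativity axiom of the coalgebra $(C,\Delta_C,\varepsilon_C)$ from Definition~\ref{coalgebra}.

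Next I would verify the counit equation by the same computation:
\[
(Id_M\otimes\varepsilon_C)\circ\delta=(Id_W\otimes Id_C\otimes\varepsilon_C)\circ(Id_W\otimes\Delta_C)=Id_W\otimes\big((Id_C\otimes\varepsilon_C)\circ\Delta_C\big),
\]
and the right counit axiom of $C$ gives $(Id_C\otimes\varepsilon_C)\circ\Delta_C=Id_C$, so the whole expression equals $Id_W\otimes Id_C=Id_M$, as required.

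I do not expect a genuine obstacle here: the only delicate point is the bookkeeping of tensor factors and the systematic use of the interchange law, which is exactly what transports the axioms of $C$ to $W\otimes C$. The one place where strictness really earns its keep is in the non-strict setting, where one would instead have to interleave the associativity constraints $a_{W,C,C}$ and their inverses and appeal to the coherence theorem to identify the two sides; under the strictness hypothesis adopted earlier in this chapter that step disappears entirely.
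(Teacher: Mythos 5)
Your proof is correct and follows essentially the same route as the paper's: both verify the two comodule equations of \eqref{224} directly, using strictness and the interchange law $(f\otimes g)\circ(f'\otimes g')=(f\circ f')\otimes(g\circ g')$ to factor out $Id_W$ and reduce each identity to the coassociativity and counit axioms of $C$. The only cosmetic difference is that you unfold both sides of the coassociativity equation to a common form before invoking the coalgebra axiom, whereas the paper rewrites one side into the other through a single chain of equalities.
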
 
\begin{proof}
To see that this is a right comodule over  $C$, one must check equation \eqref{224}.
 First we will show that the following diagram
\begin{equation*}
\xymatrix{
W\otimes C \ar[rr]^{\delta} \ar[d]_{\delta} &  & (W\otimes C)\otimes C \ar[d]^{Id_{W\otimes C}\otimes \Delta_{C}} \\ 
(W\otimes C)\otimes C \ar[rr]_{\delta\otimes Id_{C}} &  & (W\otimes C)\otimes C \otimes C  }
\end{equation*}
commutes.
\par Indeed, we have the following chain of identities
\begin{align*}
& (\delta\otimes Id_{C})\circ\delta=((Id_{W}\otimes \Delta_{C})\otimes Id_{C})\circ(Id_{W}\otimes \Delta_{C}) \hspace{2cm}\\
& =(Id_{W}\otimes (\Delta_{C}\otimes Id_{C}))\circ(Id_{W}\otimes \Delta_{C})=(Id_{W}\circ Id_{W})\otimes ((\Delta_{C}\otimes Id_{C})\circ \Delta_{C}),
\end{align*}
where we have used the strictness of the category in the second equality and the property  $(f\otimes g)\circ(f'\otimes g')=(f\circ f')\otimes (g\circ g')$. Moreover,
\begin{align*}
& (\delta\otimes Id_{C})\circ\delta=(Id_{W}\circ Id_{W})\otimes (( Id_{C}\otimes \Delta_{C})\circ \Delta_{C})\\
& =(Id_{W}\otimes(Id_{C}\otimes \Delta_{C}))\circ(Id_{W}\otimes \Delta_{C})
=((Id_{W}\otimes Id_{C})\otimes \Delta_{C})\circ (Id_{W}\otimes \Delta_{C})\\
& =(Id_{W\otimes C}\otimes \Delta_{C})\circ(Id_{W}\otimes \Delta_{C}) 
=(Id_{W\otimes C}\otimes \Delta_{C})\circ \delta,
\end{align*}
where in the last row we have used that $Id_{W\otimes C}=Id_{W}\otimes Id_{C}$.
    
\par Finally we must show the commutation of the following diagram
\begin{equation*}
\xymatrix{
W\otimes C \ar[rr]^{\delta} \ar[rrd]_{Id_{W\otimes C}} &  & (W\otimes C)\otimes C \ar[d]^{Id_{W\otimes C}\otimes \varepsilon_{C}} \\ 
 &  & (W\otimes C)\otimes \textbf{1}_{\mathcal{C}} }
\end{equation*}
Indeed, we have the following chain of identities
 \begin{align*}
 & (Id_{W\otimes C}\otimes \varepsilon_{C})\circ \delta=(Id_{W\otimes C}\otimes \varepsilon_{C})\circ (Id_{W}\otimes \Delta_{C})\\
 & =((Id_{W}\otimes Id_{C})\otimes \varepsilon_{C})\circ (Id_{W}\otimes \Delta_{C}) = (Id_{W}\otimes (Id_{C}\otimes \varepsilon_{C}))\circ (Id_{W}\otimes \Delta_{C})  \\
 & =(Id_{W}\circ Id_{W})\otimes ((Id_{C}\otimes \varepsilon_{C})\circ \Delta_{C})=Id_{W}\otimes Id_{C}=Id_{W\otimes C}, 
 \end{align*}
 where we have used the strictness of the category in the third equality.        
\end{proof}

  \begin{definicion}
  Given a (respectively symmetric) monoidal category $\mathcal{C}$, a \emph{ (respectively cocommutative) coaugmented coalgebra in $\mathcal{C}$} is a cocommutative coalgebra $(C,\Delta, \varepsilon)$ in $\mathcal{C}$ provided with a morphism of (respectively cocommutative) coalgebras  $\eta\in Hom(\textbf{1}_{\mathcal{C}},C)$ such that
  \begin{align}\label{conmutacion1}
    &    \xymatrix{
    \textbf{1}\ar[rd]_{Id_{\textbf{1}}}
     \ar[rr]^{\eta} & & C \ar[dl]^{\varepsilon_{C}} \\
      & \textbf{1} & \\
         }
  \end{align}
 \end{definicion}
   
   \par Given two cocommutative coaugmented coalgebras  $(C,\Delta_{C},\varepsilon_{C},\eta_{C})$ and \\ $(D,\Delta_{D},\varepsilon_{D},\eta_{D})$, a  morphism of  cocommutative coaugmented coalgebras from $C$ to $D$ is an element $f\in Hom_{Coalg_{\mathcal{C}}}(C,D)$ such that $f\circ \eta_{C}=\eta_{D}$.
   \par The class of all cocommutative coaugmented coalgebras in $\mathcal{C}$ with morphisms between them form a category for the usual composition and identity morphisms, that we denote by $_{c}Coalg_{\mathcal{C}}$. 
   
   
   \begin{ej}
  \par If $V$ is a $K$-vector space, let $TV=\oplus_{n \geqslant 0} V^{\otimes n}$ its tensor construction. We define a counit morphism as the map $\varepsilon:TV\to K$ given by the projection on the zero degree component and a coproduct as follows. If $v=v_{1}\otimes \cdots \otimes v_{n}$ is an element of $TV$ then 
  \begin{displaymath}
  \Delta(v)=\sum_{i=1}^{n-1}v_{1}\cdots v_{i}\otimes v_{i+1}\cdots v_{n} +1_{K}\otimes v +v \otimes 1_{K}.\end{displaymath} We also define $\Delta(1_{K})=1\otimes 1$. The coproduct is a linear extension of the previous definitions.
  The coaugmentation morphism is the map $\eta:K\to TV$ given by  $\eta(1_{K})=1\in T^{0}V$. It is easy to verify that $(TV,\Delta,\varepsilon,\eta)$ is a coaugmented coalgebra in the symmetric monoidal category of vector spaces over the field $K$. 
   \end{ej}

  \subsection{Relation between coaugmented and noncounitary coalgebras in an abelian category}\label{sec111}

   \par All along this section $\mathcal{C}$ will be an abelian symmetric monoidal category. We shall define two functors $F:{_{c}Coalg_{\mathcal{C}}}\to  {_{n}Coalg_{\mathcal{C}}}$ and $G:{_{n}Coalg_{\mathcal{C}}}\to  {_{c}Coalg_{\mathcal{C}}}$ which are quasi-inverses of each other.
   \par If $(C,\Delta_{C},\varepsilon_{C},\eta_{C})$ is a coaugmented coalgebra in $\mathcal{C}$, we will denote $(\overline{C},\pi_{C})=Coker(\eta_{C})$. The universal property of the cokernel gives us a map $\Delta_{\overline{C}}$ such that the following diagram 
   
   \begin{equation}\label{diag1}
      \xymatrix{
      & & & \overline{C}\otimes \overline{C} \\
      \textbf{1} \ar_{\eta_{C}}[r] &  C \ar_{\pi_{C}}[rr] \ar^{(\pi_{C}\otimes\pi_{C})\circ \Delta_{C}}[rru] & & \overline{C} \ar@{-->}_{\Delta_{\overline{C}}}[u] 
       }
     \end{equation} 
commutes. The morphism $\Delta_{\overline{C}}$ is the only one such that the diagram commutes, and it exists because $(\pi_{C} \otimes \pi_{C})\circ \Delta_{C}\circ \eta_{C}=0$. Indeed by the definition of cocommutative coaugmented coalgebra we have that $(\pi_{C} \otimes \pi_{C})\circ \Delta_{C}\circ \eta_{C}=(\pi_{C} \otimes \pi_{C})\circ (\eta_{C} \otimes \eta_{C}) \circ \iota=((\pi_{C}\circ \eta_{C})\otimes(\pi_{C}\circ \eta_{C}))\circ \iota $, which is zero by definition of the cokernel of  $\eta_{C}$. 
 \par If $(C,\Delta_{C},\varepsilon_{C},\eta_{C})\in obj({_{c}Coalg_{\mathcal{C}}})$ we define $F(C,\Delta_{C},\varepsilon_{C},\eta_{C})=(\overline{C},\Delta_{\overline{C}})$. If $f:(C,\Delta_{C},\varepsilon_{C},\eta_{C}) \rightarrow (D,\Delta_{D},\varepsilon_{D},\eta_{D})$ is a morphism in the category ${_{c}Coalg_{\mathcal{C}}}$, we see that there exists a unique map in the category $_{n}Coalg_{\mathcal{C}}$, which we call $\overline{f}$, satisfying that 
  
  
     \begin{equation}\label{diag1.5}
        \xymatrix{
        & & & \overline{D} \\
        \textbf{1} \ar_{\eta_{C}}[r] &  C \ar_{\pi_{C}}[rr] \ar^{\pi_{D}\circ f}[rru] & & \overline{C} \ar@{-->}_{\overline{f}}[u] 
         }
       \end{equation} 
  
  \par In order to guarantee  the existence of a mapping $\overline{f}$ in $\mathcal{C}$ making \eqref{diag1.5} commute, we must only see that $(\pi_{D}\circ f) \circ \eta_{C}=0$. By definition of $Hom_{{_{c}Coalg_{\mathcal{C}}}}(C,D)$ we have that $(\pi_{D}\circ f) \circ \eta_{C}=\pi_{D}\circ (f \circ \eta_{C})=\pi_{D}\circ \eta_{D}$ which is zero because $(\overline{D},\pi_{D})$ is the cokernel of $\eta_{D}$.  In order to see that $\overline{f}\in Hom_{_{n}Coalg_{\mathcal{C}}}(\overline{C},\overline{D})$, it is enough to prove the diagram 
  \begin{equation}
     \xymatrix{
      \overline{C} \ar[d]_{\Delta_{\overline{C}}}\ar[rr]^{\overline{f}}& &  \overline{D} \ar[d]^{\Delta_{\overline{D}}}  \\
       \overline{C}\otimes\overline{C} \ar[rr]_{\overline{f}\otimes \overline{f}}& & \overline{D}\otimes \overline{D} \\
       }
    \end{equation}
commutes.   
    
 \par  We will prove the equality  $\Delta_{\overline{D}}\circ \overline{f}=(\overline{f}\otimes \overline{f}) \circ \Delta_{\overline{C}}$ by using that $\pi_{C}:C\to \overline{C}$ is an epimorphism, because it is a cokernel. So it is enough proving $\pi_{\overline{D}}\circ \overline{f}\circ \pi_{C}=(\overline{f}\otimes \overline{f}) \circ \Delta_{\overline{C}}\circ \pi_{C}$. Indeed,
 \begin{eqnarray*}
 & \Delta_{\overline{D}}\circ \overline{f}\circ \pi_{C}=\Delta_{\overline{D}}\circ \pi_{D} \circ f=(\pi_{D}\otimes \pi_{D})\circ \Delta_{D}\circ f\\
 & =(\pi_{D}\otimes \pi_{D})\circ (f\otimes f)\circ \Delta_{C}=(\overline{f}\otimes \overline{f})\circ (\pi_{C}\otimes \pi_{C})\circ \Delta_{C}\\
 & =(\overline{f}\otimes \overline{f})\circ \Delta_{\overline{C}}\circ \pi_{C} 
  \end{eqnarray*}
 where we have used that $\overline{f}\circ \pi_{C}=\pi_{D} \circ f$ in the first equality and the fact that the morphisms of coalgebras satisfy $\Delta_{D}\circ f= (f\otimes f)\circ \Delta_{C}$ in the third one together with common properties of the morphisms of monoidal categories. 
  
  
  \par If $f \in Hom_{_{c}Coalg_{\mathcal{C}}}(C,D)$ we define $F(f)=\overline{f}\in Hom_{_{n}Coalg_{\mathcal{C}}}(\overline{C},\overline{D})$. To prove that this assignment is a functor we must see that
  \begin{itemize}
  \item $F(g\circ f)=F(g)\circ F(f)$,
  \item $F(Id_{C})=Id_{\overline{C}}$.
  \end{itemize} 
  \par In the following we suppose that $f\in  Hom_{_{c}Coalg_{\mathcal{C}}}(C,D)$ and $g\in Hom_{_{c}Coalg_{\mathcal{C}}}(D,E)$.
  \par We know that $F(g\circ f)$ is the only morphism such that $F(g\circ f)\circ \pi_{C}=\pi_{E}\circ(g\circ f)$. Since
  \begin{eqnarray*}
  F(g)\circ F(f)\circ\pi_{C}=F(g)\circ \overline{f}\circ \pi_{C} = \overline{g} \circ \pi_{D}\circ f=\pi_{E}\circ g\circ f,
  \end{eqnarray*} 
  then, by uniqueness we conclude that $F(g\circ f)=F(g)\circ F(f)$. Diagram \eqref{diag1.5} implies that $F(Id_{C})=Id_{\overline{C}}$. Then $F:{_{c}Coalg_{\mathcal{C}}}\to  {_{n}Coalg_{\mathcal{C}}}$ is a functor. Moreover; we will prove that it is a categorical equivalence, and in order to see that we will construct its quasi-inverse functor $G:{_{n}Coalg_{\mathcal{C}}}\to  {_{c}Coalg_{\mathcal{C}}}$. 
  

\par Given an object $(E,\Delta_{E})$ in the category ${_{n}Coalg_{\mathcal{C}}}$, we define its image under $G$ by $(E\oplus \textbf{1}_{\mathcal{C}},\Delta_{E\oplus \textbf{1}_{\mathcal{C}}},\pi_{2},i_{2})$, where $\pi_{2}:E\oplus \textbf{1}_{\mathcal{C}}\to \textbf{1}_{\mathcal{C}}$ is the canonical projection of the direct sum and $i_{2}:\textbf{1}_{\mathcal{C}}\to E\oplus \textbf{1}_{\mathcal{C}}$ is the canonical inclusion map of the direct sum. 
  The definition of $\Delta_{E\oplus \textbf{1}_{\mathcal{C}}}$ is a little more involved. We write $\textbf{1}$ instead of $\textbf{1}_{\mathcal{C}}$ to denote the unit object. We first define the morphisms
  \begin{align*}
 &  \tilde{\Delta}_{E}=(i_{1}\otimes i_{1})\circ \Delta_{E}+(i_{1}\otimes i_{2})\circ r_{E}^{-1}+(i_{2}\otimes i_{1})\circ l_{E}^{-1}, \\
 &  \Delta_{\textbf{1}}=(i_{2}\otimes i_{2})\circ \iota^{-1}
  \end{align*} 
   Hence, there is a unique morphism
   $$\Delta_{E\oplus \textbf{1}}\in  Hom_{\mathcal{C}}(E \oplus \textbf{1},(E\oplus \textbf{1})\otimes (E\oplus\textbf{1}))$$ 
   such that $\Delta_{E\oplus \textbf{1}}\otimes i_{1}=\tilde{\Delta}_{E}$ and $\Delta_{E\oplus \textbf{1}}\otimes i_{2}=\Delta_{\textbf{1}}$, i.e. making the following diagram
   
   \begin{equation}
   \xymatrix{
     & &  & \textbf{1} \ar^{i_{2}}[d] \ar_{\Delta_{\textbf{1}}}[llld] \\
     (E\oplus \textbf{1})\otimes (E\oplus \textbf{1})  & &  &  E\oplus 1 \ar@{-->}_{\hspace{1.2cm}\exists!\hspace{0.1cm} \Delta_{E\oplus \textbf{1}}}[lll] \\
    & & & E \ar^{\tilde{\Delta}_{E}}[lllu] \ar_{i_{1}}[u] \\
        }
  \end{equation}
commute.
  
  
  
  \par From now on, for simplicity we shall write $E_{+}=E\oplus \textbf{1}$. The next step is to define $G$ on morphisms. If $f$ is in  $Hom_{{_{n}Coalg_{\mathcal{C}}}}(C,D)$, then we define $G(f)=i_{1}^{D}\circ f\circ \pi_{1}^{C}+i_{2}^{D}\circ\pi_{2}^{C}$, where $i_{1}^{D}:D\to D_{+}$, and $\pi_{D_{+}}:D_{+}\to \overline{(D_{+})}$ is the cokernel of $i_{2}^{D}:\textbf{1}\to D_{+}$ (i.e. the coaugmentation of $D_{+}$). We will denote $G(f)$ by $f_{+}$.
  
  \begin{ej} If we are in the category of vector spaces over a field $K$, the previous constructions are of the form:
  \ 
     \begin{itemize}
  \item $G(E)=E\oplus \textbf{1}$ will actually be $E\oplus K\cdot 1_{E_{+}}$,
  \item The coaugmentation  $\eta(1_{K})=1_{E_{+}}$,
  \item $\Delta_{E\oplus K}(e)=\Delta_{E}(e)+1_{E_{+}}\otimes e+e\otimes 1_{E_{+}}$,
  \item If $f\in Hom_{_{n}Coalg_{\mathcal{C}}}(E,H)$, then $G(f)(e+\lambda \cdot 1_{E_{+}})=f(e)+\lambda \cdot 1_{H_{+}}$.
  \end{itemize}
  \end{ej}
  
\begin{thm}\label{teo1}Let $\mathcal{C}$ be a monoidal abelian category. Then, the functor $$F:{_{c}Coalg_{\mathcal{C}}}\to {_{n}Coalg_{\mathcal{C}}}$$ is a categorical equivalence whose quasi-inverse functor is $$G:{_{n}Coalg_{\mathcal{C}}}\to {_{c}Coalg_{\mathcal{C}}}.$$
\end{thm}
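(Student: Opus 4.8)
The plan is to exhibit two natural isomorphisms, $\alpha\colon \mathrm{Id}_{{_{c}Coalg_{\mathcal{C}}}}\Rightarrow G\circ F$ and $\beta\colon F\circ G\Rightarrow \mathrm{Id}_{{_{n}Coalg_{\mathcal{C}}}}$, which together witness that $F$ and $G$ are quasi-inverse. Since the functoriality of $F$ has already been established, I would first record that $G$ is likewise a functor (the same uniqueness-of-universal-property argument applies) and then concentrate on the two natural isomorphisms. The guiding idea is the vector-space picture made explicit after the construction of $G$: a coaugmented coalgebra splits as $C\cong \overline{C}\oplus \mathbf{1}$ with its coproduct decomposing into a ``reduced'' part on $\overline{C}\otimes\overline{C}$ plus primitive-like terms, and this is exactly the data that $G$ reconstructs from $\overline{C}=F(C)$.

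I would treat $F\circ G\cong \mathrm{Id}$ first, as it is the shorter computation. Given $(E,\Delta_E)\in {_{n}Coalg_{\mathcal{C}}}$, the object $FG(E)$ is $Coker(i_{2})$ for the canonical inclusion $i_{2}\colon \mathbf{1}\to E\oplus\mathbf{1}=E_{+}$. Because $i_{2}$ is a split monomorphism with retraction $\pi_{2}$, its cokernel map $\pi_{E_{+}}$ is (identified with) the canonical projection $\pi_{1}\colon E_{+}\to E$, and I take $\beta_{E}\colon \overline{E_{+}}\to E$ to be this canonical isomorphism. It then remains to identify the induced coproduct. Using the defining property $\Delta_{\overline{E_{+}}}\circ \pi_{E_{+}}=(\pi_{E_{+}}\otimes \pi_{E_{+}})\circ \Delta_{E_{+}}$ together with $\Delta_{E_{+}}\circ i_{1}=\tilde{\Delta}_{E}$, $\pi_{1}\circ i_{1}=Id_{E}$ and $\pi_{1}\circ i_{2}=0$, the two terms of $\tilde{\Delta}_{E}$ carrying an $i_{2}$ are annihilated and only $(i_{1}\otimes i_{1})\circ\Delta_{E}$ survives, so that $\beta_{E}$ transports $\Delta_{\overline{E_{+}}}$ to $\Delta_{E}$. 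Naturality of $\beta$ in $E$ follows from the explicit form $G(f)=f_{+}$ and the uniqueness clause in the universal property of the cokernel.

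The direction $\mathrm{Id}\cong G\circ F$ is the heart of the proof and where I expect the real work to lie. Given $(C,\Delta_{C},\varepsilon_{C},\eta_{C})$, diagram \eqref{conmutacion1} says $\varepsilon_{C}\circ\eta_{C}=Id_{\mathbf{1}}$, so $\eta_{C}$ is a split monomorphism and the splitting lemma in the abelian category $\mathcal{C}$ produces an isomorphism $\alpha_{C}=(i_{1}\circ\pi_{C})+(i_{2}\circ\varepsilon_{C})\colon C\to \overline{C}\oplus\mathbf{1}=(\overline{C})_{+}$, with inverse determined by the complementary idempotents $\eta_{C}\circ\varepsilon_{C}$ and $j\circ\pi_{C}$, where $j=\alpha_{C}^{-1}\circ i_{1}$ is the resulting section of $\pi_{C}$. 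That $\alpha_{C}$ respects the counit ($\pi_{2}\circ\alpha_{C}=\varepsilon_{C}$) and the coaugmentation ($\alpha_{C}\circ\eta_{C}=i_{2}$, using $\pi_{C}\circ\eta_{C}=0$ and $\varepsilon_{C}\circ\eta_{C}=Id_{\mathbf{1}}$) is immediate. The crucial point is compatibility with the coproduct, $(\alpha_{C}\otimes\alpha_{C})\circ\Delta_{C}=\Delta_{(\overline{C})_{+}}\circ\alpha_{C}$, which I would prove by decomposing $C\otimes C$ along the four summands $\overline{C}\otimes\overline{C}$, $\overline{C}\otimes\mathbf{1}$, $\mathbf{1}\otimes\overline{C}$, $\mathbf{1}\otimes\mathbf{1}$ induced by those idempotents, and computing the corresponding components of $\Delta_{C}$ precomposed with the inclusions $j$ and $\eta_{C}$.

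Finally I would assemble the components. The counit axioms $(\varepsilon_{C}\otimes Id_{C})\circ\Delta_{C}=Id_{C}=(Id_{C}\otimes\varepsilon_{C})\circ\Delta_{C}$ force the $\overline{C}\otimes\mathbf{1}$ and $\mathbf{1}\otimes\overline{C}$ components of $\Delta_{C}\circ j$ to be identities and, applying $\varepsilon_{C}$ in both slots, force the $\mathbf{1}\otimes\mathbf{1}$ component to vanish; the defining relation $\Delta_{\overline{C}}\circ\pi_{C}=(\pi_{C}\otimes\pi_{C})\circ\Delta_{C}$ from diagram \eqref{diag1} yields $\Delta_{\overline{C}}$ on the $\overline{C}\otimes\overline{C}$ component. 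This reproduces $\tilde{\Delta}_{\overline{C}}$ exactly, while the hypothesis that $\eta_{C}$ is a morphism of coalgebras gives $\Delta_{C}\circ\eta_{C}=(\eta_{C}\otimes\eta_{C})\circ\Delta_{\mathbf{1}}$, matching $\Delta_{\mathbf{1}}=(i_{2}\otimes i_{2})\circ \iota^{-1}$. Hence $\alpha_{C}$ is an isomorphism of coaugmented coalgebras, and naturality in $C$ follows from the uniqueness in the universal properties defining $F(f)=\overline{f}$ and $G(f)=f_{+}$. The main obstacle throughout is the careful bookkeeping of the non-strict unit morphisms $l$, $r$, $\iota$ appearing in $\tilde{\Delta}_{E}$ and $\Delta_{\mathbf{1}}$, and verifying that the four coproduct components cohere under $\alpha_{C}\otimes\alpha_{C}$.
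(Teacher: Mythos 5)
Your proposal is correct and takes essentially the same route as the paper: your $\alpha_{C}$ is exactly the paper's map $L_{X}=i_{1}^{\overline{X}}\circ \pi_{X}+i_{2}^{X}\circ \varepsilon_{X}$, and your $\beta_{E}$ is the inverse of the paper's $N_{X}=\pi_{X_{+}}\circ i_{1}^{X}$, both arising from the split exact sequence determined by $\varepsilon\circ\eta=Id_{\textbf{1}}$. If anything, you verify more than the paper's own proof, which only checks the two naturality squares and leaves the isomorphism and coalgebra-morphism properties of $N_{X}$ and $L_{X}$ implicit, whereas you spell out the splitting-lemma argument and the component-wise compatibility with the coproducts.
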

\begin{proof} If $X\in Obj(_{n}Coalg_{\mathcal{C}})$ we construct the natural map $N_{X}:X\to \overline{(X_{+})}$ given by the composition $N_{X}=\pi_{X_{+}}\circ i_{1}^{X}$. Given two objects $X, Y\in Obj(_{n}Coalg_{\mathcal{C}})$ and $f\in Hom_{_{n}Coalg_{\mathcal{C}}}(X,Y)$, we will show that
  

  \begin{equation}
   \xymatrix{
   X \ar[rr]_{f}   \ar[d]_{N_{X}} & & Y  \ar[d]^{N_{Y}}\\
    \overline{(X_{+})}\ar[rr]^{\overline{(f_{+})}}& &  \overline{(Y_{+})}}
  \end{equation}
is commutative. In order to do that, note that 
\begin{eqnarray*}
& f_{+}\circ i_{1}^{X}=(i_{1}^{Y}\circ f \circ \pi_{1}^{X} + i_{2}^{Y}\circ \pi_{2}^{X})\circ i_{1}^{X}\\
& =i_{1}^{Y}\circ f \circ \pi_{1}^{X}\circ i_{1}^{X} + i_{2}^{Y}\circ \pi_{2}^{X}\circ i_{1}^{X}\\
& = i_{1}^{Y}\circ f \circ Id_{X} + i_{2}^{Y}\circ 0 =i_{1}^{Y}\circ f,
\end{eqnarray*}    
  where we have used the bilinearity of the composition in an abelian category in the second equality and the facts that $\pi_{1}^{X}\circ i_{1}^{X}=Id_{X}$ and $\pi_{2}^{X}\circ i_{1}^{X}=0$ in the third one. Hence, 
    
  \begin{equation*}
   N_{Y}\circ f=\pi_{Y_{+}}\circ i_{1}^{Y}\circ f=\pi_{Y_{+}}\circ f_{+}\circ i_{1}^{X}=\overline{(f_{+})} \circ \pi_{X_{+}}\circ i_{1}^{X}= \overline{(f_{+})} \circ N_{X}
  \end{equation*}
 where in the third equality we use the commutation of the diagram \eqref{diag1.5} and in the last one the definition of the natural transformation. Then the functor $F\circ G$ is naturally isomorphic to $Id_{_{n}Coalg_{\mathcal{C}}}$. 
  \par Also it is possible to construct a natural map between $Id_{\mathcal{C}_{_{c}Coalg_{\mathcal{C}}}}$ and $G\circ F$. Suppose $(X,\Delta_{X},\varepsilon_{X},\eta_{X})\in Obj(_{c}Coalg_{\mathcal{C}})$, we define $L_{X}:X\to (\overline{X})_{+}$ by $L_{X}=i_{1}^{\overline{X}}\circ \pi_{X}+i_{2}^{X}\circ \varepsilon_{X}$, where $i_{1}^{\overline{X}}:\overline{X}\to (\overline{X})_{+}$ and $i_{2}^{X}:\textbf{1} \to (\overline{X})_{+}$ are the inclusion maps, $\pi_{X}:X\to \overline{X}$ is the cokernel of $\eta_{X}$ and $\pi_{1}^{\overline{X}}:(\overline{X})_{+}\to \overline{X}$ and $\pi_{1}^{\overline{X}}:(\overline{X})_{+}\to \textbf{1}$ are the projections.
  \par Note that
  \begin{eqnarray*}
  & (\overline{f})_{+}\circ L_{X}=(i_{1}^{\overline{Y}}\circ \overline{f}\circ \pi_{1}^{\overline{X}}+i_{2}^{Y}\circ \pi_{2}^{\overline{X}})\circ (i_{1}^{\overline{X}}\circ \pi_{X}+i_{2}^{X}\circ \varepsilon_{X})\\
  & =i_{1}^{\overline{Y}}\circ \overline{f}\circ \pi_{1}^{\overline{X}}\circ i_{1}^{\overline{X}}\circ \pi_{X}+
  i_{1}^{\overline{Y}}\circ \overline{f}\circ \pi_{1}^{\overline{X}}\circ i_{2}^{X}\circ \varepsilon_{X}\\
  & +i_{2}^{Y}\circ \pi_{2}^{\overline{X}}\circ i_{1}^{\overline{X}}\circ \pi_{X}  +i_{2}^{Y}\circ \pi_{2}^{\overline{X}}\circ i_{2}^{X}\circ \varepsilon_{X},
  \end{eqnarray*}
  where we have used the bilinearity of the composition, the definition of the functor $G$ and the natural map $N_{X}$ defined above. Using that $\pi_{1}^{\overline{X}}\circ i_{1}^{\overline{X}}=Id_{\overline{X}}$, $\pi_{1}^{\overline{X}}\circ i_{2}^{X}=0$, $\pi_{2}^{\overline{X}}\circ i_{1}^{\overline{X}}=0$ and $\pi_{2}^{\overline{X}}\circ i_{2}^{\overline{X}}=Id_{\textbf{1}}$, we get that
  
   \begin{equation*}
  (\overline{f})_{+}\circ L_{X}=i_{1}^{\overline{Y}}\circ \overline{f}\circ \pi_{X}+i_{2}^{Y}\circ \varepsilon_{X}. 
   \end{equation*}
     
 \par On the other hand, the computation of $N_{Y}\circ f$ gives us
 
 \begin{align*}
 & N_{Y}\circ f= (i_{1}^{\overline{Y}}\circ \pi_{Y}+ i_{2}^{Y}\circ \varepsilon_{Y})\circ f=i_{1}^{\overline{Y}}\circ \pi_{Y}\circ f+ i_{2}^{Y}\circ \varepsilon_{Y} \circ f\\
 & =i_{1}^{\overline{Y}}\circ \overline{f}\circ \pi_{X}  + i_{2}^{Y}\circ \varepsilon_{X},
 \end{align*}  
where we have used the 
definition of $\overline{f}$ and the fact that $f$ is a morphism of  (counitary) coalgebras in the last equality. Hence $N_{Y}\circ f=(\overline{f})_{+}\circ L_{X}$ which together with the previous result implies that $F$ and $G$ are quasi-inverse functors.   
  \end{proof}

  \ 
  
  
  

  
  
   \par Let us denote $\Delta^{(1)}=Id_{\overline{C}}$, $\Delta_{\overline{C}}^{(2)}=\Delta_{\overline{C}}$ and $\Delta_{\overline{C}}^{(n)}=(\Delta_{\overline{C}}\otimes Id^{\otimes (n-2)})\circ \Delta_{\overline{C}}^{(n-1)}$ if $n\geqslant 3$. 
    Notice that there is a canonical morphism $Ker\Delta_{\overline{C}}^{(n)} \hookrightarrow Ker\Delta_{\overline{C}}^{(n+1)}$ for all $n\geqslant 2$.
  
  \begin{definicion}\label{cocom}
  A (respectively cocommutative) coaugmented coalgebra $(C,\Delta,\varepsilon, \eta)$ is \emph{cocomplete (or conilpotent)} if $\displaystyle{\overline{C}=colim_{n\to \infty}Ker \Delta_{\overline{C}}^{(n)}}$.
  \end{definicion}

  \subsection{Symmetric coalgebras}
 
 \par Let us begin this section with an example.
  
  \begin{ej}\label{SV} Let $\mathcal{C}$ be a good symmetric monoidal category. Given $V\in Obj(\mathcal{C})$, let us consider the \emph{tensor algebra}
  \begin{equation}
  T(V)=\bigoplus_{n\in \mathbb{N}_{0}}V^{\otimes n},
  \end{equation}
  where $V^{\otimes 0}=A$ and the product is given by concatenation. The algebra $S(V)$ is defined by $S(V)=T(V)/I$, where $I$ is the ideal in $T(V)$ generated by the image of $Id_{V}^{\otimes 2}-\tau_{V,V}$ as a subobject of $V^{\otimes 2}$ (where $\tau$ is the braiding given by the usual twist). We write
  \begin{equation}
  S(V)=\bigoplus_{n\in \mathbb{N}_{0}}S^{n}V.
  \end{equation}
  We will define a structure of cocommutative coaugmented coalgebra on it, denoted by $S^{c}V$.
  The coproduct is defined as follows.  
  We set $\Delta(1_{S^{c}V})=1_{S^{c}V}\otimes 1_{S^{c}V}$, and for an element of the form $v_{1}\cdots v_{n}$ with $v_{i}\in V$ we define

  \begin{equation}\label{coproductoSV}
 \Delta(v_{1}\cdots v_{n})=\sum_{I,J/I\sqcup J=\{1\cdots n\}}v_{I}\otimes_{A} v_{J},
  \end{equation} 
  where $v_{I}=v_{i_{1}}\cdots v_{i_{k}}$  if $I=\{ i_{1}<\cdots <i_{k} \}\neq \emptyset$ and $v_{\emptyset}=1_{S^{c}V}$ 
  The counit $\varepsilon:S^{c}(V)\rightarrow A$ is given by the canonical projection onto the zeroeth component, and the coaugmentation $\eta:A\to S^{c}(V)$
is the canonical inclusion.
 \end{ej}

 \par From now on, we denote this coalgebra only by $SV$, unless we say the opposite. 
 Note that $\overline{SV}$ defined in 1.1.2. is just $\overline{SV}=\oplus_{n\in \mathbb{N}} S^{n}V$ and $\Delta_{\overline{SV}}$ is
  \begin{equation}\label{copSV}
   \Delta_{\overline{SV}}(v_{1}\cdots v_{n})=\sum_{I,J\neq \emptyset /I\sqcup J=\{1\cdots n\}}v_{I}\otimes v_{J}
  \end{equation} 
  for $n\geqslant 2$ and is zero over $V$.
 
  \begin{rem}
  Notice that $SV$ is cocomplete. To prove that, it is sufficient note that $\Delta_{\overline{SV}}^{(n)}$ vanishes on elements of $\oplus_{k=1}^{n}S^{k}V$.
    
    
  \end{rem}

   \begin{rem}\label{symme}
   Note that the construction of the coalgebra $SV$ can be done for any object $V$ in a symmetric monoidal category $\mathcal{C}$ with arbitrary direct sums such that $\otimes_{\mathcal{C}}$ commutes with $\oplus$. The Theorem 1 in Chapter XI.1 from \cite{mac} give us a canonical action from $\mathcal{S}_{n}$ to the tensor construction of degree $n$, $TV$.  This theorem allow us to construct the symmetric coalgebra as equivalence classes of this action.
   Let us denote by $\pi_{I}:S^{n}V\to S^{|I|}V$ the projection map in the symmetric product with component in $I\subset\{1,\cdots,n\}$. Then we can consider the maps $\Delta_{n}: S^{n}V\to SV \otimes SV$ given by 
   \begin{equation*}
   \Delta_{n}=\bigoplus_{I,J | I\sqcup J=\{ 1,\cdots,n\} } \pi_{I}\otimes \pi_{J},
   \end{equation*} 
   and define the coproduct  $\Delta:SV\to SV \otimes SV$  for $SV$ as $\Delta=\bigoplus_{n\in \mathbb{N}_{0}}\Delta_{n}$. 
   \par However in \cite{mac} there is another result in Chapter XI, establishing that there is a strong monoidal functor between the tensor algebra construction and a free algebra in the category algebras in a monoidal category. This result is equally valid for the case of coalgebras.
   \end{rem}
  
  \par If $C$ is a coaugmented coalgebra in $\mathcal{C}$, we shall denote by $1_{C}$ the image $\eta(1_{K})$. Note it is a distinguished \emph{group-like} element of $C$.
  
  \begin{prop}\label{prop3}Given a symmetric monoidal category $\mathcal{C}$ with arbitrary direct sums such that $\otimes_{\mathcal{C}}$ commutes with $\oplus$, for any  $(C,\Delta_{C},\varepsilon_{C},\eta_{C})\in {_{c}Coalg_{\mathcal{C}}^{c}}$, any $V\in Obj(\mathcal{C})$ and $p:C\rightarrow V$ morphism in $Hom_{\mathcal{C}}(C,V)$ such that $p\circ \eta_{C}=0$, there exists a unique morphism $P\in Hom_{_{c}Coalg_{\mathcal{C}}^{c}}(C,SV)$ such that the following diagram
  \begin{equation}
   \xymatrix{
    C \ar@{-->}[rd]_{P} \ar[rr]^{p}& &   V  \\
    & SV\ar[ru]_{\pi}  &  \\
    }
  \end{equation}
   commutes, where $\pi$ is the canonical projection to the elements of degree one. 
    \end{prop}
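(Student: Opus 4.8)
The plan is to pass to the non-counital picture through the equivalence of Theorem~\ref{teo1} and to build the required morphism one homogeneous component at a time. Since $p\circ\eta_{C}=0$, the map $p$ factors through the cokernel $\pi_{C}:C\to\overline{C}$, yielding $\bar p\in Hom_{\mathcal C}(\overline C,V)$ with $\bar p\circ\pi_{C}=p$. Both $C$ and $SV$ are cocommutative, coaugmented and cocomplete (the latter for $SV$ was noted after \eqref{copSV}); as cocompleteness of a coaugmented coalgebra is, by Definition~\ref{cocom}, a property of the object $F(C)=\overline{C}$, the functors $F$ and $G$ restrict to an equivalence between the cocomplete subcategories. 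Under it a morphism $P\in Hom_{{_{c}Coalg_{\mathcal{C}}^{c}}}(C,SV)$ corresponds to a morphism $\overline P\in Hom_{_{n}Coalg_{\mathcal C}}(\overline C,\overline{SV})$, and since the degree-one projection $\pi:SV\to V$ kills $1_{SV}$ and hence descends to $\pi_{1}:\overline{SV}\to S^{1}V=V$, the condition $\pi\circ P=p$ is equivalent (using that $\pi_{C}$ is epi) to $\pi_{1}\circ\overline P=\bar p$. Thus it suffices to produce a unique such $\overline P$.

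For the construction, write $q_{n}:V^{\otimes n}\to S^{n}V$ for the canonical projection and define homogeneous components
\begin{equation*}
\overline P_{n}=\frac{1}{n!}\,q_{n}\circ\bar p^{\otimes n}\circ\Delta_{\overline C}^{(n)}:\overline C\longrightarrow S^{n}V,\qquad n\geqslant 1,
\end{equation*}
so that $\overline P_{1}=\bar p$. The scalars $1/n!$ force the ambient category to be $\mathbb Q$-linear; this is harmless, as one may argue on elements through the faithful monoidal functor into ${_{A}Mod}$ supplied by the good-category hypothesis, with $A$ a $\mathbb Q$-algebra. To see that $\overline P=\sum_{n\geqslant 1}\overline P_{n}$ is a well-defined map into the direct sum $\overline{SV}=\oplus_{n\geqslant1}S^{n}V$, I would invoke cocompleteness: by Definition~\ref{cocom}, $\overline C=colim_{n}Ker\,\Delta^{(n)}_{\overline C}$, and on $Ker\,\Delta^{(n)}_{\overline C}$ every $\Delta^{(m)}_{\overline C}$ with $m\geqslant n$ vanishes, whence $\overline P_{m}$ vanishes there for $m\geqslant n$. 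So on each term of the colimit $\overline P$ is a finite sum, and these glue to a morphism on $\overline C$.

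It then remains to check that $\overline P$ is a morphism of non-counital coalgebras, i.e. $\Delta_{\overline{SV}}\circ\overline P=(\overline P\otimes\overline P)\circ\Delta_{\overline C}$; the identity $\pi_{1}\circ\overline P=\bar p$ is immediate from $\overline P_{1}=\bar p$. I would compare $(i,j)$-homogeneous components for $i,j\geqslant 1$. On the right, coassociativity of $\Delta_{\overline C}$ (which gives $(\Delta^{(i)}_{\overline C}\otimes\Delta^{(j)}_{\overline C})\circ\Delta_{\overline C}=\Delta^{(i+j)}_{\overline C}$) together with functoriality of $\otimes$ yields
\begin{equation*}
(\overline P_{i}\otimes\overline P_{j})\circ\Delta_{\overline C}=\frac{1}{i!\,j!}\,(q_{i}\otimes q_{j})\circ\bar p^{\otimes(i+j)}\circ\Delta_{\overline C}^{(i+j)},
\end{equation*}
whereas the $(i,j)$-component of $\Delta_{\overline{SV}}\circ\overline P_{i+j}$ is read off from the explicit coproduct \eqref{copSV}. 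The decisive point, where cocommutativity of $C$ ($\Delta_{\overline C}=c\circ\Delta_{\overline C}$) enters, is that $\bar p^{\otimes(i+j)}\circ\Delta_{\overline C}^{(i+j)}$ is a symmetric tensor; summing over the $\binom{i+j}{i}$ ordered splittings appearing in \eqref{copSV} then collapses into $\tfrac{(i+j)!}{i!\,j!}$ equal contributions, and the coefficient $1/(i+j)!$ carried by $\overline P_{i+j}$ produces exactly the $1/(i!\,j!)$ above. This matching of the two sides is the main obstacle: it interlocks the $\mathcal S_{n}$-symmetry with the combinatorics of \eqref{copSV}, and I would carry it out on elements via the good-category embedding.

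Finally, for uniqueness suppose $P'\in Hom_{{_{c}Coalg_{\mathcal{C}}^{c}}}(C,SV)$ also satisfies $\pi\circ P'=p$, and let $\overline{P'}=F(P')$. Iterating the coalgebra-morphism identity gives $\Delta^{(n)}_{\overline{SV}}\circ\overline{P'}=\overline{P'}^{\otimes n}\circ\Delta^{(n)}_{\overline C}$; postcomposing with $\pi_{1}^{\otimes n}$ and using $\pi_{1}\circ\overline{P'}=\bar p$ yields $s_{n}\circ\overline{P'}_{n}=\bar p^{\otimes n}\circ\Delta^{(n)}_{\overline C}$, where $s_{n}:=\pi_{1}^{\otimes n}\circ\Delta^{(n)}_{\overline{SV}}|_{S^{n}V}:S^{n}V\to V^{\otimes n}$ is the symmetrization map. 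Since $\tfrac{1}{n!}\,q_{n}\circ s_{n}=Id_{S^{n}V}$ in our $\mathbb Q$-linear setting, $s_{n}$ is a split monomorphism, so $\overline{P'}_{n}$ is uniquely determined and equals $\tfrac1{n!}q_{n}\circ\bar p^{\otimes n}\circ\Delta^{(n)}_{\overline C}=\overline P_{n}$. Hence $\overline{P'}=\overline P$, and transporting back through the equivalence gives $P'=P$.
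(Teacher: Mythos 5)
Your proof is correct, and its core coincides with the paper's: both factor $p$ through the cokernel of $\eta_{C}$ and define the candidate morphism by the formula $\overline{P}=\sum_{n\geqslant 1}\tfrac{1}{n!}\,p^{\odot n}\circ\Delta^{(n)}_{\overline{C}}$ (your $\overline{P}_{n}$ is exactly the paper's $n$-th summand, since $p^{\odot n}=q_{n}\circ p^{\otimes n}$), and both verify compatibility with the coproducts by the same counting of the $\binom{i+j}{i}$ ordered splittings, which collapses thanks to cocommutativity. Where you go beyond the paper is in two places, and both are genuine improvements rather than detours. First, the paper never explains why the infinite sum defines a morphism; your appeal to conilpotency ($\overline{C}=colim_{n}Ker\,\Delta^{(n)}_{\overline{C}}$, so that on each term of the colimit only finitely many summands survive) is exactly the missing justification -- note this forces reading the category ${}_{c}Coalg^{c}_{\mathcal{C}}$ as consisting of \emph{cocomplete} coaugmented coalgebras, since without that hypothesis the formula (and indeed the statement) breaks down, e.g. for coalgebras with nontrivial group-like elements. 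Second, the paper proves only existence and never addresses uniqueness, although it is asserted in the statement; your argument -- iterate the comorphism identity to get $\Delta^{(n)}_{\overline{SV}}\circ\overline{P'}=\overline{P'}^{\otimes n}\circ\Delta^{(n)}_{\overline{C}}$, postcompose with $\pi_{1}^{\otimes n}$ to obtain $s_{n}\circ\overline{P'}_{n}=\bar{p}^{\otimes n}\circ\Delta^{(n)}_{\overline{C}}$, and use that the symmetrization $s_{n}$ is a split monomorphism with retraction $\tfrac{1}{n!}q_{n}$ -- settles it cleanly (and makes explicit the $\mathbb{Q}$-linearity that the scalars $1/n!$ silently require in any case). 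In short: same construction as the paper, but your write-up closes two real gaps in its proof.
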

  \begin{proof}
  \par  The reader may suppose that $\mathcal{C}$ is a good category. If he or she is willing to, but the proof is exactly the same. 
  \par If $C$ is as in the statement, by Theorem \ref{teo1} we can write $C\simeq\overline{C} \oplus A$. So, it suffices to define $P$ by its restrictions to each direct summand. We first set $P$ such that  $P\circ \eta_{C}=\eta_{SV}$. In the sequel we shall denote  $p^{\odot \alpha}:\overline{C}^{\otimes \alpha}\to S^{\alpha}V$ to denote the symmetric product of the map $p:\overline{C}\to V$, i.e. $p^{\otimes l}$ composed with the quotient projection $T^{n}V\to S^{n}V$. 
   Moreover, we define $\overline{P}=P|_{\overline{C}}$ by
  \begin{equation}
  \overline{P}=\sum_{n\in \mathbb{N}}\frac{1}{n!}(p^{\odot n}\circ \Delta_{\overline{C}}^{(n)}),
  \end{equation}
   where we are committing the abuse of calling $p$ the morphism $\overline{C}\to V$ induced by $p:C\to V$. This induced map exists because of the universal property of the cokernel since $p\circ \eta_{C}=0$, as depicted below
   
   \begin{equation}
       \xymatrix{
          & & & & V  \\
          A \ar[rr]_{\eta_{C}} &  &  C \ar[rr]_{\pi_{C}} \ar[rru]^{p} &  & \overline{C} \ar@{-->}[u]_{\exists! p} }
   \end{equation}

  In order to see that $\overline{P}$ is a coalgebra morphism we must prove the commutation of
  
  
  
    \begin{equation}
    \xymatrix{
      \overline{C} \ar[rr]^{\overline{P}} \ar[d]_{\Delta_{\overline{C}}} & &   \overline{SV}\ar[d]^{\Delta_{ \overline{SV}}}  \\
      \overline{C}\otimes \overline{C}\ar[rr]_{\overline{P}\otimes\overline{P}}   &  & \overline{SV}\otimes \overline{SV} \\}
   \end{equation}  
   
   Changing some parentheses in the firsts equalities and using the definition of $\overline{P}$ and the equation  \eqref{copSV} in the fourth equality, we have
   
   \begin{align*}
   & \Delta_{\overline{SV}}\circ \overline{P}=\Delta_{\overline{SV}}\circ(\sum_{n\in \mathbb{N}}\frac{1}{n!}(p^{\odot n}\circ \Delta_{\overline{C}}^{(n)}))=\sum_{n\in \mathbb{N}}\frac{1}{n!}\Delta_{\overline{SV}}\circ(p^{\odot n}\circ \Delta_{\overline{C}}^{(n)})\\
   & =\sum_{n\in \mathbb{N}}\frac{1}{n!} (\Delta_{\overline{SV}}\circ p^{\odot n})\circ \Delta_{\overline{C}}^{(n)} = \sum_{n\in \mathbb{N}}\frac{1}{n!}\sum_{I,J\neq \emptyset  /I\sqcup J=\{1\cdots n\}}(p^{\odot |I|}\otimes p^{\odot |J|})\circ \Delta_{\overline{C}}^{(n)}.\\
   \end{align*}

   \par By rewriting $\Delta^{(n)}_{\overline{C}}$ 
    and using that in a monoidal category the morphisms satisfy $(f\otimes g)\circ (h\otimes r)=(f\circ h)\otimes (g\circ r)$, we see that the latter sum coincides with
   
   \begin{align*}
    & \sum_{n\in \mathbb{N}}\frac{1}{n!}\sum_{I,J\neq \emptyset  /I\sqcup J=\{1\cdots n\}}(p^{\odot |I|}\otimes p^{\odot |J|})\circ (\Delta_{\overline{C}}\otimes Id)^{|I|-1}\circ(Id\otimes \Delta_{\overline{C}})^{|J|-1}\circ \Delta_{\overline{C}}\\
    & =\sum_{n\in \mathbb{N}}\frac{1}{n!}\sum_{I,J\neq \emptyset  /I\sqcup J=\{1\cdots n\}}(p^{\odot |I|}\circ \Delta_{\overline{C}}^{(|I|)} \otimes p^{\odot |J|})\circ(Id\otimes \Delta_{\overline{C}})^{|J|-1}\circ \Delta_{\overline{C}}\\
    & =\sum_{n\in \mathbb{N}}\frac{1}{n!}\sum_{I,J\neq \emptyset  /I\sqcup J=\{1\cdots n\}}(p^{\odot |I|}\circ \Delta_{\overline{C}}^{(|I|)} \otimes p^{\odot |J|}\circ \Delta_{\overline{C}}^{(|J|)})\circ \Delta_{\overline{C}}\\
    & =\sum_{n\in \mathbb{N}}\frac{1}{n!}\sum_{I\subseteq \{1\cdots n\} / 1<|I|<n }(p^{\odot |I|}\circ \Delta_{\overline{C}}^{(|I|)} \otimes p^{\odot n-|I|}\circ \Delta_{\overline{C}}^{(n-|I|)})\circ \Delta_{\overline{C}}.
   \end{align*}
   
   Since $\odot$ is a symmetric product, any term who has $p^{\odot t}\circ \Delta_{\overline{C}}^{(t)}\otimes p^{\odot n-t}\circ \Delta_{\overline{C}}^{(n-t)}$ is equal to any other who has $p^{\odot r}\circ\Delta_{\overline{C}}^{(r)} \otimes p^{\odot n-r}\circ \Delta_{\overline{C}}^{(n-r)}$ if and only if $t=r$. Therefore for a fixed $r\in \{1,...,n-1\}$ there are $\binom{n}{r}$ terms equal to $p^{\odot r}\circ\Delta_{\overline{C}}^{(r)} \otimes p^{\odot n-r}\circ \Delta_{\overline{C}}^{(n-r)}$. Hence,
   
    \begin{align*}
    &  =\sum_{n\in \mathbb{N}}\frac{1}{n!}\sum_{1<|I|<n}\binom{n}{|I|}(p^{\odot |I|}\circ \Delta_{\overline{C}}^{(|I|)} \otimes p^{\odot n-|I|}\circ \Delta_{\overline{C}}^{(n-|I|)})\circ \Delta_{\overline{C}}\\
    &  =\sum_{n\in \mathbb{N}}\sum_{1<|I|<n}\frac{1}{|I|!}\frac{1}{(n-|I|)!} (p^{\odot |I|}\circ \Delta_{\overline{C}}^{(|I|)} \otimes p^{\odot n-|I|}\circ \Delta_{\overline{C}}^{(n-|I|)})\circ \Delta_{\overline{C}}\\
     & =[\sum_{n\in \mathbb{N}}\sum_{1<k<n}\frac{1}{k!}\frac{1}{(n-k)!} (p^{\odot k}\circ \Delta_{\overline{C}}^{(k)} \otimes p^{\odot n-k}\circ \Delta_{\overline{C}}^{(n-k)})]\circ \Delta_{\overline{C}}\\
     & =(\sum_{r\in \mathbb{N}}\frac{1}{r!}(p^{\odot r}\circ \Delta_{\overline{C}}^{(r)})\otimes \sum_{s\in \mathbb{N}}\frac{1}{s!}(p^{\odot s}\circ \Delta_{\overline{C}}^{(s)}))\circ \Delta_{\overline{C}}\\
      & =(\overline{P}\otimes \overline{P})\circ \Delta_{\overline{C}},
   \end{align*}
   
   which is exactly the commutation we want.  
   \end{proof}

    \begin{rem}\label{correspondencia1}The previous proposition implies that there is a bijection 
    \begin{equation}
    Hom_{_{_{c}Coalg_{\mathcal{C}}^{c}}}(C,SV) \leftrightsquigarrow \{f\in Hom_{_{n}Coalg_{\mathcal{C}}}(C,V) /f\circ \eta_{C}=0\} 
    \end{equation}
    for all cocommutative coaugmented coalgebra $(C,\Delta_{C},\varepsilon_{C},\eta_{C})\in {_{c}Coalg_{\mathcal{C}}^{c}}$  in any symmetric monoidal category $\mathcal{C}$.
    \end{rem}
    
    Since $\overline{P}$ is the restriction of $P$, we shall omit the bar in order to simplify the notation.
  
   \par  Given 
   $V\in Obj(\mathcal{C})$, we specialize Proposition \ref{prop3} in the case $C=SV$. 
   To every $P\in Hom_{_{_{c}Coalg_{\mathcal{C}}^{c}}}(SV,SV)$ it corresponds a unique $p=\pi\circ P|_{\overline{SV}}\in  Hom_{\mathcal{C}}(\overline{SV},V)$ that will be represented by a denumerable family of maps $\pi\circ P|_{S^{n}V}$, which we denote $\{p_{n}\}_{n\in \mathbb{N}}$ and call it the \emph{sequential representation} of $P$. If we apply the formula $P|_{\overline{SV}}=\sum_{n\in \mathbb{N}}\frac{1}{n!}(p^{\odot n}\circ \Delta_{SV}^{(n)})$ in this case, we get  
      \begin{equation} \label{3.33}
       P|_{S^{n}V}(v_{1}\cdots v_{n}) =\sum_{m=1}^{n}\sum_{I_{1}\cdots I_{m}\neq \emptyset; I_{1}\sqcup \cdots \sqcup I_{m}=\{1\cdots n\}}\frac{1}{m!}p_{|I_{1}|}(v_{1})\odot ... \odot p_{|I_{m}|}(v_{I_{m}})
      \end{equation}
    where $|X|$ is the cardinality of the set $X$. Note that the sequential representation of $\pi\circ Id_{SV}$ is $\{Id_{V},0,0,\cdots \}$.\\ 
   
     \begin{prop}\label{prop4} Let $\mathcal{C}$ be a symmetric monoidal category with arbitrary direct sums such that $\otimes_{\mathcal{C}}$ commutes with $\oplus$, the morphisms
      $P\in Hom_{_{c}Coalg_{\mathcal{C}}^{c}}(SV, SV)$ are represented by maps $ p\in Hom_{\mathcal{C}}(\overline{SV},V)$. Moreover $P$ is an isomorphism if and only if $p_{1}$ is an isomorphism, where we use the notation of sequential representation introduced previously.
      \end{prop}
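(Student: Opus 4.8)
The plan is to exploit the grading $SV=\bigoplus_{n\geq 0}S^{n}V$ and to read off from formula \eqref{3.33} that, with respect to this grading, $P$ is ``block lower triangular'' with prescribed diagonal blocks. Writing $P_{n}^{m}=\pi_{S^{m}V}\circ P|_{S^{n}V}\colon S^{n}V\to S^{m}V$ for the components, the right-hand side of \eqref{3.33} is a sum indexed by $m=1,\dots,n$ whose $m$-th summand is a symmetric product of $m$ elements of $V=S^{1}V$, hence lands in $S^{m}V$; therefore $P_{n}^{m}=0$ whenever $m>n$. For the diagonal block $m=n$ every part $I_{j}$ must be a singleton, there are $n!$ such ordered partitions, each contributing $p_{1}(v_{1})\odot\cdots\odot p_{1}(v_{n})$ by symmetry of $\odot$, and the prefactor $1/n!$ cancels, so that $P_{n}^{n}=p_{1}^{\odot n}=S^{n}(p_{1})$. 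I would record these two facts first, since everything else follows from them.

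For the direction ``$P$ isomorphism $\Rightarrow p_{1}$ isomorphism'' I would observe that the assignment $P\mapsto p_{1}$ is multiplicative. Indeed, for a composite the degree-one block satisfies $(P\circ Q)_{1}^{1}=P_{1}^{1}\circ Q_{1}^{1}$ because $Q_{1}^{k}=0$ for $k>1$, and since the sequential representation of $Id_{SV}$ is $\{Id_{V},0,0,\dots\}$ this exhibits $\Phi\colon Hom_{_{c}Coalg_{\mathcal{C}}^{c}}(SV,SV)\to Hom_{\mathcal{C}}(V,V)$, $\Phi(P)=p_{1}$, as a homomorphism of monoids sending the identity to the identity. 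A monoid homomorphism carries invertible elements to invertible elements, so if $P$ is invertible then $p_{1}$ is an isomorphism, with inverse the degree-one block of $P^{-1}$.

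For the converse I would first reduce to the unipotent case. If $p_{1}$ is an isomorphism, then $S(p_{1})$ is a coalgebra automorphism of $SV$ with inverse $S(p_{1}^{-1})$ (functoriality of $S$, Remark \ref{symme}), whose sequential representation is $\{p_{1},0,0,\dots\}$; replacing $P$ by $P'=S(p_{1}^{-1})\circ P$ I may assume $p_{1}=Id_{V}$, for $P=S(p_{1})\circ P'$ and a product of isomorphisms is an isomorphism. Now all diagonal blocks are $P_{n}^{n}=Id_{S^{n}V}$, so $N:=P-Id_{SV}$ strictly lowers degree: $N(S^{n}V)\subseteq\bigoplus_{m=1}^{n-1}S^{m}V$, while $N$ vanishes in degrees $0$ and $1$. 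Consequently $N^{k}$ vanishes on $S^{n}V$ as soon as $k\geq n$, so the Neumann series $Q:=\sum_{k\geq 0}(-1)^{k}N^{k}$ reduces on each $S^{n}V$ to a finite sum and hence defines, through the universal property of the direct sum, a genuine morphism $Q\colon SV\to SV$ in $\mathcal{C}$; the telescoping identities $P\circ Q=Q\circ P=Id_{SV}$ are checked one graded piece at a time. Thus $P$ is an isomorphism in $\mathcal{C}$, and since the inverse of a coalgebra isomorphism is automatically a morphism of coaugmented coalgebras (it intertwines $\Delta$, $\varepsilon$ and $\eta$ by composing the defining relations with $P^{-1}$), $P$ is an isomorphism in ${_{c}Coalg_{\mathcal{C}}^{c}}$.

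The routine parts are the bookkeeping with the partitions in \eqref{3.33} and the telescoping. The one genuinely delicate point, and the place where the hypotheses ``arbitrary direct sums with $\otimes_{\mathcal{C}}$ commuting with $\oplus$'' are used, is making rigorous the claim that the locally finite family $\{N^{k}\}$ assembles into a well-defined endomorphism $Q$ of the infinite direct sum $SV$ and that $P\circ Q=Q\circ P=Id_{SV}$; this is precisely what replaces the naive geometric-series argument valid over a field, and I expect it to be the main obstacle.
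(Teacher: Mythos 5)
Your proof is correct, and the forward implication coincides with the paper's: both read it off from the composition formula in degree one, you phrasing it as multiplicativity of the assignment $P\mapsto p_{1}$, the paper taking $Q=P^{-1}$ in \eqref{composicion} with $n=1$. For the converse, however, you take a genuinely different route. The paper stays entirely inside the sequential-representation calculus: assuming $p_{1}$ invertible, it recursively defines two families, $\{q_{n}\}$ by equation \eqref{3.35} and $\{\tilde{q}_{n}\}$ by equation \eqref{inv}, and checks via \eqref{composicion} and the uniqueness in Proposition \ref{prop3} that the corresponding coalgebra morphisms are respectively a left and a right inverse of $P$. You instead factor $P=S(p_{1})\circ P'$ with $P'$ unipotent ($p'_{1}=Id_{V}$), observe that $N=P'-Id_{SV}$ strictly lowers the degree filtration, and invert $P'$ by the locally finite Neumann series $\sum_{k}(-1)^{k}N^{k}$, finishing with the (correct) remark that the $\mathcal{C}$-inverse of a coalgebra morphism is automatically a morphism of coaugmented coalgebras. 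Your route is more structural and avoids the partition bookkeeping, at the cost of invoking the additive, indeed $\mathbb{Q}$-linear, structure on Hom-sets explicitly; this is no real extra hypothesis, since formula \eqref{3.33} and the paper's own argument already presuppose sums of morphisms and division by factorials, and your assembly of $Q$ from its restrictions to the $S^{n}V$ is exactly what the universal property of the direct sum provides. What the paper's computational route buys, and yours does not, is the explicit recursion for the sequential representation of the inverse: those formulas are re-used later in the paper (for instance in Lemma \ref{lema1} and in the proof that $G'_{>a}$ is normal, where $\tilde{p}^{-1}$ is computed term by term from \eqref{inv}), whereas your argument certifies invertibility without producing them.
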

     \begin{proof} As before, the reader may assume that  $\mathcal{C}$ is a good category if he or she is willing to, but the proof applies analogously to the general case.
    It is sufficient to prove the second statement, for the first is a consequence of Remark \ref{correspondencia1}. We will prove that $P:SV\rightarrow SV$ is an isomorphism if and only if  $\pi \circ P|_{S^{1}V}=p_{1}$ is an isomorphism.
    Consider $P, Q\in Hom_{_{c}Coalg_{\mathcal{C}}^{c}}(SV,SV)$. By formula \eqref{3.33}, we obtain that $\pi \circ P\circ Q|_{\overline{SV}} = \sum_{n\in \mathbb{N}}h_{n}$ where 
      \begin{equation}\label{composicion}
       h_{n}(v_{1}\cdot v_{2}\cdots v_{n})=\sum_{m=1}^{n}\frac{1}{m!}\sum_{I_{1}\cdots I_{m}\neq \emptyset; I_{1}\sqcup \cdots \sqcup I_{m}=\{1\cdots n\} }p_{m}(q_{|I_{1}|}(v_{I_{1}})\cdots q_{|I_{m}|}(v_{I_{m}})). 
      \end{equation} 
      
   Suppose that $P$ is bijective and take  $Q=P^{-1}$. Then \eqref{composicion} for $n=1$ implies that $q_{1}$ is the inverse of $p_{1}$. 
 \par On the other hand assume that $p_{1}$ is bijective. We will prove that  $P$ is bijective. In order to do so, we will exhibit a left and right inverses of $P$. 
 \par  Define recursively the family $\{q_{n}\}_{n\in \mathbb{N}}$ where $q_{n}\in Hom_{\mathcal{C}}(S^{n}V,V)$ is given as follows. Set $q_{1}=p_{1}^{-1}$ and if we have defined all the  $q_{m}$'s with $m < n$ then
      
    \begin{equation}\label{3.35}
     q_{n}(w_{1}\cdot w_{2}\cdots w_{n})=-\sum_{m=1}^{n-1}\sum_{I_{1}\cdots I_{m}\neq \emptyset ; I_{1}\sqcup \cdots \sqcup I_{m}=\{1\cdots n\} }q_{m}(p_{|I_{1}|}(\tilde{w}_{I_{1}})\cdots p_{|I_{m}|}(\tilde{w}_{I_{m}})) 
    \end{equation}
 for $n\geqslant 2$,where $\tilde{w}_{I_{k}}=q_{1}^{\otimes |I_{k}|}(w_{I_{k}})$. Let $Q:SV\to SV$ be the morphism of coalgebras whose sequential representation is  $\{q_{n}\}_{n\in \mathbb{N}}$. Then by means of \eqref{composicion} one can see that the sequential representation of $\pi\circ (Q\circ P)$ is $\{ Id_{V},0, \cdots  \} $ and thus by uniqueness of the correspondence in Proposition \ref{prop3} we get that $Q\circ P=Id_{\overline{S^{c}V}}$ and consequently $P$ is injective. 
        
        
        
        
        
    \par Let us now prove the surjectivity of $P$. Define a family of morphisms  $\{\tilde{q}_{n}\}_{n\in \mathbb{N}}$, where $\tilde{q}_{n}\in Hom_{\mathcal{C}}(S^{n}V,V) $ is defined as follows. Set $\tilde{q}_{1}=p_{1}^{-1}$ and if $\tilde{q}_{1},\cdots \tilde{q}_{n-1}$ for $n\in \mathbb{N}_{\geqslant 2}$ are defined, we fix
   
     \begin{equation}\label{inv}
      \tilde{q}_{n}(w_{1}\cdot w_{2}\cdots w_{n})=-\sum_{m=2}^{n}\frac{1}{m!}\sum_{I_{1}\cdots I_{m}\neq \emptyset ; I_{1}\sqcup \cdots \sqcup I_{m}=\{1\cdots n\} }\tilde{q}_{1}(p_{m}(\tilde{q}_{|I_{1}|}(w_{I_{1}})\cdots \tilde{q}_{|I_{m}|}(w_{I_{m}})) )
     \end{equation}
     for $n\geqslant 2$.     
        
        
        
        
  \par  This sequential representation defines a unique $\tilde{Q}\in Hom_{_{c}Coalg_{\mathcal{C}}^{c}}(SV,SV)$ for which $P\circ \tilde{Q}=Id_{SV}$. Then, $P$ is surjective. As a consequence, $P$ is bijective if and only if  $p_{1}$ is bijective. 
   \end{proof}
   
  \begin{prop}\label{prop5}Let $\mathcal{C}$ be a monoidal abelian category, $(D,\Delta_{D},\varepsilon_{D})$ a coalgebra in $\mathcal{C}$ and  $M$ a $D$-comodule in $\mathcal{C}$. If $W\in Obj(\mathcal{C})$, then there is a one to one correspondence 
   \begin{equation}\label{ddd}
   \fbox{
   \xymatrix{ Hom_{Com_{D}}(M,W\otimes D) \ar[r] &  Hom_{\mathcal{C}}(M,W) \ar[l] }}
   \end{equation}
   given by $f\mapsto (Id_{W}\otimes \varepsilon_{D})\circ f$, where $W\otimes D$ is the right $D$ comodule described in Proposition \ref{prop 1}. The inverse is given by sending $g$ to $\tilde{g}=(g\otimes Id_{D})\circ \rho_{M}$, where $\rho_{M}$ is the coaction of $M$.
   \end{prop}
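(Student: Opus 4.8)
The plan is to check that the two assignments in the statement are well defined and mutually inverse. Write $\Phi: f\mapsto (Id_{W}\otimes \varepsilon_{D})\circ f$ and $\Psi: g\mapsto \tilde{g}=(g\otimes Id_{D})\circ \rho_{M}$, where $\rho_{M}$ is the coaction of $M$ and the comodule structure on $W\otimes D$ is the one of Proposition \ref{prop 1}, with coaction $\delta=Id_{W}\otimes \Delta_{D}$. The map $\Phi$ is automatically well defined, since its image is merely a morphism in $\mathcal{C}$ with no further condition to satisfy. For $\Psi$ I must verify that $\tilde{g}$ is a morphism of comodules, i.e.\ that diagram \eqref{1111} commutes for $\tilde{g}$, namely $(Id_{W}\otimes\Delta_{D})\circ\tilde{g}=(\tilde{g}\otimes Id_{D})\circ \rho_{M}$. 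Expanding both sides with the interchange law $(a\otimes b)\circ(a'\otimes b')=(a\circ a')\otimes(b\circ b')$, the left-hand side becomes $(g\otimes \Delta_{D})\circ \rho_{M}$, while the right-hand side becomes $(g\otimes Id_{D}\otimes Id_{D})\circ(\rho_{M}\otimes Id_{D})\circ \rho_{M}$; applying the coassociativity axiom in \eqref{224} of the comodule, $(\rho_{M}\otimes Id_{D})\circ \rho_{M}=(Id_{M}\otimes \Delta_{D})\circ \rho_{M}$, the right-hand side also collapses to $(g\otimes \Delta_{D})\circ \rho_{M}$, so the two agree.

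Next I would check $\Phi\circ\Psi=Id$. For $g\in Hom_{\mathcal{C}}(M,W)$ one computes $\Phi(\Psi(g))=(Id_{W}\otimes\varepsilon_{D})\circ(g\otimes Id_{D})\circ \rho_{M}=(g\otimes\varepsilon_{D})\circ \rho_{M}$, which by the interchange law equals $(g\otimes Id_{\textbf{1}})\circ(Id_{M}\otimes\varepsilon_{D})\circ \rho_{M}$. The counit axiom for the comodule, the second identity in \eqref{224}, gives $(Id_{M}\otimes\varepsilon_{D})\circ \rho_{M}=Id_{M}$, and using strictness to identify $M\otimes\textbf{1}$ with $M$ and $W\otimes\textbf{1}$ with $W$, this reduces to $g$.

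The main step, and the place where the hypothesis that $f$ is a comodule morphism really enters, is $\Psi\circ\Phi=Id$. For $f\in Hom_{Com_{D}}(M,W\otimes D)$, write $g=\Phi(f)=(Id_{W}\otimes\varepsilon_{D})\circ f$; then $\Psi(\Phi(f))=(g\otimes Id_{D})\circ \rho_{M}=((Id_{W}\otimes\varepsilon_{D})\otimes Id_{D})\circ(f\otimes Id_{D})\circ \rho_{M}$. Here I would invoke the commutativity of \eqref{1111} for $f$, namely $(f\otimes Id_{D})\circ \rho_{M}=(Id_{W}\otimes\Delta_{D})\circ f$, to rewrite this as $((Id_{W}\otimes\varepsilon_{D})\otimes Id_{D})\circ(Id_{W}\otimes\Delta_{D})\circ f$. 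By the interchange law the composite of the first two arrows equals $Id_{W}\otimes\big((\varepsilon_{D}\otimes Id_{D})\circ \Delta_{D}\big)$, and the counit axiom of the coalgebra $D$ (Definition \ref{coalgebra}) gives $(\varepsilon_{D}\otimes Id_{D})\circ \Delta_{D}=Id_{D}$, so the whole expression is $Id_{W\otimes D}\circ f=f$. Since $\Phi$ and $\Psi$ are thus mutually inverse, the correspondence \eqref{ddd} is a bijection. I expect the only delicate point to be the bookkeeping of the strictness identifications (so that the $\varepsilon_{D}$-insertions land in $\textbf{1}$ and can be absorbed) and the application of the counit and coassociativity axioms to the correct tensor factor; there is no genuine conceptual obstacle once the comodule-morphism condition on $f$ is used.
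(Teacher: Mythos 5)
Your proof is correct and follows essentially the same route as the paper: well-definedness of $g\mapsto\tilde{g}$ via the coassociativity axiom of the comodule, and the identity $\Psi\circ\Phi=Id$ via the comodule-morphism property of $f$ together with the counit axiom $(\varepsilon_{D}\otimes Id_{D})\circ\Delta_{D}=Id_{D}$ of the coalgebra. You are in fact slightly more thorough than the paper, which only verifies the composite $\Psi\circ\Phi$ and leaves the other direction $\Phi\circ\Psi=Id$ (which follows from the counit axiom $(Id_{M}\otimes\varepsilon_{D})\circ\rho_{M}=Id_{M}$ of the comodule, exactly as you argue) unstated.
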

   
   \begin{proof} 
   
\par Let us show that if $h\in Hom_{\mathcal{C}}(M,W)$ then the morphism $\tilde{h}:M\to W\otimes D$ is an isomorphism of $D$-comodules, i.e. satisfies that $(\tilde{h}\otimes Id_{D})\circ \rho_{M}=\rho_{W\otimes D}\circ \tilde{h}$, where $\rho_{W\otimes D}$ is the coaction of $W\otimes D$. Indeed, we have the equations

   $$ (\tilde{h}\otimes Id_{D})\circ \rho_{M}=(((h\otimes Id_{D})\circ \rho_{M})\otimes Id_{D})\circ \rho_{M}$$
   \begin{equation}\label{ygyg} =((h\otimes Id_{D})\otimes Id_{D})\circ ( \rho_{M}\otimes Id_{D} ) \circ \rho_{M}\end{equation}
   $$ =(h\otimes (Id_{D}\otimes Id_{D}))\circ ( \rho_{M}\otimes Id_{D} ) \circ \rho_{M},$$
where we have used the definition of $\tilde{h}$ in this first equality. Moreover, since $Id_{D}\otimes Id_{D}=Id_{D\otimes D}$, the latter member of \eqref{ygyg} coincides with 
   \begin{align*}
   &(h\otimes Id_{D\otimes D})\circ ( Id_{M}\otimes \Delta_{D} ) \circ \rho_{M}=(h\circ Id_{M}\otimes Id_{D\otimes D}\circ \Delta_{D}) \circ \rho_{M}\\
   & =(h\otimes \Delta_{D})\circ \rho_{M}=(Id_{W}\otimes \Delta_{D})\circ (h\otimes Id_{D}) \circ \rho_{M}\\
   & =\rho_{W\otimes D}\circ(h \otimes Id_{D})\circ \rho_{M}=\rho_{W\otimes D}\circ \tilde{h},
   \end{align*} where we have used equation \eqref{224} in the first equality. This proves that the map $g\mapsto \tilde{g}$ is well defined.
 
 \par To see that the correspondence is biunivocal take $f\in Hom_{Com_{D}}(M,W\otimes D)$. If we apply \eqref{ddd} to $f$ we get $(Id_{W}\otimes \varepsilon_{D})\circ f$, and the allowed inverse gives us
 \begin{align*}
 & (((Id_{W}\otimes \varepsilon_{D})\circ f)\otimes Id_{D})\circ \rho_{M}=((Id_{W}\otimes \varepsilon_{D})\otimes Id_{D})\circ (f\otimes Id_{D})\circ \rho_{M}\\
 & =((Id_{W}\otimes \varepsilon_{D})\otimes Id_{D})\circ \rho_{W\otimes D}\circ f=(Id_{W}\otimes( \varepsilon_{D}\otimes Id_{D}))\circ (Id_{W}\otimes \Delta_{D}) \circ f\\
 & =(Id_{W}\circ Id_{W}\otimes (\varepsilon_{D}\otimes Id_{D})\circ \Delta_{D})\circ f= (Id_{W}\otimes Id_{D})\circ f = Id_{W\otimes D}\circ f=f,
 \end{align*}
 where we have used the property  $(f\otimes g)\circ(f'\otimes g')=(f\circ f')\otimes (g\circ g')$ in the first step, the fact that $f$ commutes with the coactions in the second equality, the definition of the coaction of $W\otimes D$ and the strictness of the category in the third equality and lastly the definition of coalgebra by mean of equation $(\varepsilon_{D}\otimes Id_{D})\circ \Delta_{D}=Id_{D}$ and the property  $(f\otimes g)\circ(f'\otimes g')=(f\circ f')\otimes (g\circ g')$ again. 
\end{proof}
   
\section{The category of sheaves and the topology of its sections}

\par For the basic definitions and  general references about sheaf theory we refer the reader to \cite{te} chapters 1 to 4.
\par Given a topological space $\mathcal{M}$ and a ringed space $(\mathcal{M},\mathcal{O})$, we shall denote by $_{\mathcal{O}}Mod$ the category of sheaves of $\mathcal{O}$-modules. If $F,G\in {_{\mathcal{O}}Mod}$, $F\otimes_{\mathcal{O}} G $ stands for the tensor product in the category $_{\mathcal{O}}Mod$. If $F\in {_{\mathcal{O}}Mod}$, we denote by $\tau_{\mathcal{M}}$ the topology of $\mathcal{M}$ and  by $r_{V\subseteq U}:F(U)\to F(V)$ the corresponding restriction map. $\Gamma(F)$ will denote the sections of the sheaf and, $\Gamma_{c}(F)$ will be sections of compact support. 

\par Given two sheaves of $\mathcal{O}$-modules $F$ and $G$ and a morphism $f$ from $F$ to $G$, we shall denote by  
 $f_{U}:F(U)\to G(U)$ the family of maps indexed by the open sets $U$ in $\mathcal{M}$ that form the morphism $f$.

\par If $\mathcal{M}$ is a smooth manifold, and $\tau_{\mathcal{M}}$ its topology, then the assignment $\mathcal{C}^{\mathbb{R}}:\tau_{\mathcal{M}}\rightarrow {_{\mathbb{R}}Alg} $ such that $\mathcal{C}^{\mathbb{R}}(U)=\mathcal{C}^{\infty}(U)=\mathcal{C}^{\infty}(U,\mathbb{R})$ is an example of sheaf of $\mathbb{R}$-algebras, where the restrictions maps are the usual restrictions of functions. Moreover the pair $(\mathcal{M}, \mathcal{C}^{\mathbb{R}})$ satisfy the definition of ringed space of $\mathbb{R}$-algebras. 
\par We refer the reader to \cite{hir} to general references about vector bundles, adapted coordinated system and related concepts. 
\par  Let $p:E\to \mathcal{M}$ be a complex finite dimensional vector bundle over a smooth manifold $\mathcal{M}$. Let us denote by $\Phi$ the assignment sending each $U\subseteq \mathcal{M}$ open to $\Phi(U)=\Gamma(U,E)=\{ \sigma\in \mathcal{C}^{\infty}(U,E) / p\circ \sigma=Id_{U} \}$. It is clear that $\Phi(U)$ is a module over $\mathcal{C}^{\infty}(U)$. Let be  $U,V\in \tau_{\mathcal{M}}$ such that $V\subseteq U$. We define restriction maps $\Phi(U)\to \Phi(V)$ by $r_{V\subseteq U}(\sigma)=\sigma|_{V}$ where $\sigma\in \Phi(U)$. It is clear that they satisfies the axiom of presheaf.

Moreover the restriction map $r_{V\subseteq 
U}:\Phi(U)\rightarrow \Phi(V)$ is a $\mathcal{C}^{\infty}(U)$-linear morphism, because it is linear and if $f\in \mathcal{C}^{\infty}(U)$ and $\sigma \in \Phi(U)$, then
$r_{V\subseteq U}(f\sigma)=(f\sigma)|_{V}=f|_{V}\sigma|_{V}=r_{V\subseteq U}(f)r_{V\subseteq U}(\sigma)$, where $r_{V\subseteq U}(f)$ are the restriction maps of the aforementioned sheaf $\mathcal{C}^{\mathbb{R}}$. This proves that $\Phi$ is a presheaf of modules over the ringed space $(\mathcal{M},\mathcal{C}^{\mathbb{R}})$. 

 \par Furthermore, the following is also true, whose proof is immediate.
 
\begin{lemma}\label{lemshe} Let $\mathcal{M}$ be a smooth manifold and $p:E\to \mathcal{M}$ a finite dimensional complex vector bundle over $\mathcal{M}$. Then the above mentioned assignment\\ $\Phi:\tau_{\mathcal{M}}\to {_{\mathbb{R}}Vect}$ is a sheaf of $\mathcal{C}^{\mathbb{R}}$-modules.
\end{lemma}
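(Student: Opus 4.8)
The plan is to verify the two sheaf axioms, since the text has already established that $\Phi$ is a presheaf of $\mathcal{C}^{\mathbb{R}}$-modules over the ringed space $(\mathcal{M},\mathcal{C}^{\mathbb{R}})$. Thus, given an open set $U\in\tau_{\mathcal{M}}$ together with an open cover $\{U_{i}\}_{i\in\Lambda}$ of $U$, I must check the \emph{separation} (locality) condition and the \emph{gluing} condition for the sections $\Phi(U)=\Gamma(U,E)$.

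For separation, suppose $\sigma,\sigma'\in\Phi(U)$ satisfy $r_{U_{i}\subseteq U}(\sigma)=r_{U_{i}\subseteq U}(\sigma')$ for every $i\in\Lambda$. Since each element of $\Phi(U)$ is in particular a map $U\to E$ and the $U_{i}$ cover $U$, for every point $x\in U$ I may pick $i$ with $x\in U_{i}$ and conclude $\sigma(x)=\sigma'(x)$; hence $\sigma=\sigma'$. This step is immediate because equality of sections is a pointwise condition.

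For the gluing axiom, I assume I am given sections $\sigma_{i}\in\Phi(U_{i})$ that are compatible on overlaps, i.e. $r_{U_{i}\cap U_{j}\subseteq U_{i}}(\sigma_{i})=r_{U_{i}\cap U_{j}\subseteq U_{j}}(\sigma_{j})$ for all $i,j\in\Lambda$. I would define $\sigma:U\to E$ pointwise by $\sigma(x)=\sigma_{i}(x)$ whenever $x\in U_{i}$. The compatibility hypothesis guarantees this is well defined, since if $x\in U_{i}\cap U_{j}$ then $\sigma_{i}(x)=\sigma_{j}(x)$. It then remains to check that $\sigma\in\Phi(U)$, namely that $\sigma$ is smooth and that $p\circ\sigma=Id_{U}$. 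The section identity holds because on each $U_{i}$ we have $p\circ\sigma=p\circ\sigma_{i}=Id_{U_{i}}$ and the $U_{i}$ cover $U$. For smoothness, I would invoke that smoothness is a local property of maps into the total space $E$: the restriction of $\sigma$ to each $U_{i}$ equals the smooth section $\sigma_{i}$, so $\sigma$ is smooth on a neighbourhood of every point of $U$, and therefore smooth on all of $U$. Finally $r_{U_{i}\subseteq U}(\sigma)=\sigma_{i}$ holds by construction.

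The only conceptual point — and the reason the author calls the proof immediate — is that both axioms reduce to the locality of the three conditions defining $\Phi(U)$: equality of sections, the constraint $p\circ\sigma=Id$, and smoothness are all verifiable on the members of any open cover. No global argument, partition of unity, or local triviality of the bundle is required, so I expect no genuine obstacle beyond assembling these pointwise and local verifications.
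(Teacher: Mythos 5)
Your proof is correct and fills in precisely the standard verification that the paper itself omits, declaring the result ``immediate'': the presheaf structure is already established in the text, and both sheaf axioms reduce to the pointwise/local nature of equality, smoothness, and the condition $p\circ\sigma=Id$. Your observation that no partition of unity or local triviality is needed is exactly why the paper treats the proof as immediate.
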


 \subsection{Sheaf of jets and sheaf of densities}\label{jets}
 
 \begin{definicion}
 Consider a finite dimensional complex vector bundle $(E,p,\mathcal{M})$ and let $q\in \mathcal{M}$. Two local sections $\phi, \psi\in \Gamma_{q}(p)$ around $q$ are said to be \emph{k-equivalent at $q$} if  $\phi(q)=\psi(q)$ and if in some adapted  coordinate system $(x^{i},u^{\alpha})$ around $\phi(p)$, we have that
 \begin{equation}
 \frac{\partial^{|I|} \phi^{\alpha}}{\partial x^{I}}|_{q}=\frac{\partial^{|I|} \psi^{\alpha}}{\partial x^{I}}|_{q} 
 \end{equation}
 for each $1\leqslant |I|\leqslant k$ and $1\leqslant \alpha \leqslant Rank(E)$. This is an equivalence relation on the local sections around $q$, and the equivalent class containing $\phi$ is called the \emph{k-jet of $\phi$ at $q$} and is denoted by $j^{k}_{q}\phi$
 \end{definicion} 
 
  Given $q\in \mathcal{M}$, define the set of \emph{k-th jets of $p$} at $q$ $$J^{k}p=\{j^{k}_{q}\phi \hspace{.2cm}  | \hspace{.2cm}  q\in \mathcal{M}\text{ and } \phi \text{ is a local section of }p\text{ arround } q \}$$
The \emph{k-th jet manifold of $p$} is the disjoint union $\sqcup_{q\in \mathcal{M}}J_{q}^{k}p$. Lots of properties of this set can be found in \cite{sa}. The most important to us are the following two.

\begin{lemma}
[See \cite{sa}, Proposition 4.1.7] $J^{k}p$ is a smooth finite dimensional manifold for any $k\in \mathbb{N}_{0}$. 
\end{lemma}

\par For the manifold $J^{k}p$ we define the projection $\pi:J^{k}p\to \mathcal{M}$ such that $\pi(j^{k}_{q}\phi)=q$.

\begin{lemma}
[See \cite{sa}, Proposition 6.2.13] \label{lema5}  Let $(E,p,\mathcal{M})$ be a finite dimensional vector bundle. Then $(J^{k}p,\pi_{k},\mathcal{M})$ is a finite dimensional vector bundle for $k\in \mathbb{N}_{0}$. 
\end{lemma}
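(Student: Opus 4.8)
The final statement to prove is Lemma~\ref{lema5}, which asserts that $(J^{k}p,\pi_{k},\mathcal{M})$ is a finite dimensional vector bundle. Since this is quoted directly from Saunders \cite{sa}, Proposition 6.2.13, my plan is to reconstruct the standard argument rather than to appeal to the citation. The essential idea is that the vector space structure on each fiber $J^{k}_{q}p$ comes from the linear structure of $E$ together with the linearity of differentiation, and that the local triviality follows from the adapted coordinate charts already introduced in the definition of $k$-equivalence.

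First I would fix the fiberwise linear structure. Given $q\in\mathcal{M}$, the fiber $\pi_{k}^{-1}(q)=J^{k}_{q}p$ consists of $k$-jets $j^{k}_{q}\phi$ of local sections $\phi$ around $q$. I would define addition and scalar multiplication by $j^{k}_{q}\phi + j^{k}_{q}\psi := j^{k}_{q}(\phi+\psi)$ and $\lambda\cdot j^{k}_{q}\phi := j^{k}_{q}(\lambda\phi)$, where $\phi+\psi$ and $\lambda\phi$ make sense because $E$ is a vector bundle (so sections can be added pointwise in each fiber $E_{q}$). The key point to check is that these operations are well defined, i.e.\ independent of the representatives $\phi,\psi$: this is immediate from the definition of $k$-equivalence, since equality of all partial derivatives $\partial^{|I|}\phi^{\alpha}/\partial x^{I}|_{q}$ up to order $k$ is preserved under taking linear combinations, as differentiation is linear. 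One verifies that with these operations $J^{k}_{q}p$ is a $\mathbb{C}$-vector space whose zero is the jet of the zero section.

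Next I would exhibit the local triviality together with the smooth structure, which is already available since the preceding lemma (\cite{sa}, Proposition 4.1.7) guarantees $J^{k}p$ is a smooth finite dimensional manifold. Over a trivializing open set $U\subseteq\mathcal{M}$ with an adapted coordinate system $(x^{i},u^{\alpha})$, a $k$-jet $j^{k}_{q}\phi$ is completely determined by the base point $q$ together with the collection of partial derivative data $\bigl(\partial^{|I|}\phi^{\alpha}/\partial x^{I}|_{q}\bigr)$ for $0\leqslant |I|\leqslant k$ and $1\leqslant\alpha\leqslant\operatorname{Rank}(E)$. This gives a bijection
\begin{equation*}
\pi_{k}^{-1}(U)\;\longrightarrow\; U\times\mathbb{C}^{N},
\qquad
N=\operatorname{Rank}(E)\cdot\binom{n+k}{k},
\end{equation*}
where $n=\dim\mathcal{M}$ and $\binom{n+k}{k}$ counts multi-indices $I$ with $|I|\leqslant k$. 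I would check that this map is a diffeomorphism (it is exactly the chart used to build the manifold structure in the previous lemma) and that it is fiberwise linear, which follows directly from the definition of the vector space operations above, since those operations are read off coordinatewise as addition and scaling of the derivative arrays.

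The main obstacle is the compatibility of these local trivializations on overlaps, i.e.\ showing that the transition functions are smooth and fiberwise $\mathbb{C}$-linear, so that the linear structures defined chartwise agree and assemble into a genuine vector bundle. Under a change of adapted coordinates, the derivative data transform by the chain rule (Fa\`a di Bruno's formula for higher orders); the point is that although the resulting formulas mix derivatives of different orders and involve the coordinate transition functions, they are \emph{linear} in the fiber variables $\partial^{|I|}\phi^{\alpha}/\partial x^{I}$ for each fixed base point, with coefficients depending smoothly on $q$. Verifying this linearity and smoothness of the transition maps is the technical heart of the proof; once it is in hand, the vector space structures on the fibers are independent of the chosen chart, and the local trivializations satisfy the vector bundle axioms, completing the argument.
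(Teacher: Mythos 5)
The paper offers no proof of this lemma at all: it is imported verbatim from Saunders (\cite{sa}, Proposition 6.2.13), so there is no internal argument to compare yours against. Your reconstruction is the standard one, and essentially the one Saunders gives: define the fiber operations through representatives, $j^{k}_{q}\phi+j^{k}_{q}\psi:=j^{k}_{q}(\phi+\psi)$, $\lambda\, j^{k}_{q}\phi:=j^{k}_{q}(\lambda\phi)$, observe well-definedness from linearity of partial differentiation, trivialize over adapted charts by the derivative arrays (your fiber dimension $\operatorname{Rank}(E)\binom{n+k}{k}$ is correct), and check that transitions are smooth and fiberwise linear. What the citation buys the paper is brevity; what your version buys is self-containedness, which is arguably more in the spirit of this thesis, whose Chapter 1 exists precisely to spell out such background. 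The one point where your write-up remains a sketch is the step you yourself flag as the technical heart: you assert, but do not verify, that the transition formulas are linear in the fiber variables. It is worth recording why this is not delicate: since $E$ is a \emph{vector} bundle, a change of adapted coordinates acts on the fiber coordinates by $v^{\beta}=A^{\beta}_{\alpha}(x)u^{\alpha}$ with $A$ smooth in the base point, so writing $\psi^{\beta}=A^{\beta}_{\alpha}\,\phi^{\alpha}$ and differentiating up to order $k$ by the product and chain rules, every resulting term contains exactly one factor of the form $\partial^{|I|}\phi^{\alpha}/\partial x^{I}$ with $|I|\leqslant k$, multiplied by derivatives of $A^{\beta}_{\alpha}$ and of the base coordinate change, which depend only on $q$. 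Hence each transition map is a linear map of the derivative array with smoothly varying coefficients, which is precisely the vector bundle cocycle condition; also note that this same computation is what shows the $k$-equivalence relation (defined in the paper via \emph{some} adapted chart) is independent of the chart, a fact your fiberwise definitions quietly rely on.
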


\par For a fixed $k\in \mathbb{N}$, the last lemma enables us to consider sections of the finite dimensional vector bundle $(J^{k}p,\pi_{k},\mathcal{M})$ of jets of order $k$ which by Lemma \ref{lemshe} is a sheaf of modules over the ringed space $(\mathcal{M}, \mathcal{C}^{\mathbb{R}})$. We shall denote this sheaf by $J^{k}\Phi$. 
Analogously one define the jets of infinite order over a fibre bundle $p:E\to \mathcal{M}$ (see \cite{sa}, Chapter 7) which we denote by $J^{\infty}\Phi$ and also conform a vector bundle over $\mathcal{M}$, whose projection is called $\pi$, the only difference is that the fibres are infinite dimensional. 
\par All along this thesis when we write $J\Phi$ we will refer to $J^{k}$ for $k\in \mathbb{N}\cup \{ \infty \}$. 

\par The category of sheaves over a sheaf of commutative algebras is a symmetric monoidal category which satisfies  $Ab3^{\ast}$ (even $Ab4^{\ast}$) axioms and $\otimes_{\mathcal{O}}$ commutes with $\oplus$ (see \cite{te}, Chapter 4). Therefore, given any sheaf $F\in {_{\mathcal{O}}Mod}$, we define $SF$ the symmetric sheaf of $F$ as 
in Remark \ref{symme}.


\par We shall also use the notion of \emph{sheaf of densities} of a smooth manifold.   
The reader is refereed to \cite{bo}, Chapter 7, or \cite{ma}, Chapter 1, for the definition and basic properties. Given a smooth manifold $\mathcal{M}$, we shall denote by $\omega$ the sheaf of $1$-densities over $\mathcal{M}$. We recall it is the sheaf of sections of a complex vector bundle over $\mathcal{M}$, so it has structure of $\mathcal{C}^{\mathbb{R}}$-module in the natural way. 
\ 

\par Given a vector bundle $p:E\to \mathcal{M}$ and we have just defined the sheaf $J\Phi$ of jets and its symmetric sheaf $SJ\Phi$, also we have the sheaf of densities $\omega$ and both have   
structure of $\mathcal{C}^{\mathbb{R}}$-modules. Then we consider its tensor product $\omega\otimes_{\mathcal{C}^{\mathbb{R}}}SJ\Phi$ in the category of  $\mathcal{C}^{\mathbb{R}}$-modules and denote it only by $\omega SJ\Phi$. 

\subsection{Topology on the spaces of sections}

\par We refer the reader to \cite{ru} or \cite{he} for general references about locally convex spaces (LCS).
\par  We will now describe the topology of several spaces of sections of sheaves. We begin with  $\Gamma_{c} \omega SJ\Phi$, following the steps of \cite{di} Chapter  17, Section 2.
\par 
Let $\Omega$ be an open set of $\mathbb{R}^{n}$ and consider a fundamental sequence of compact sets  $\{K_{m}\}_{m\in \mathbb{N}}$, i.e. $K_{m}\subseteq K_{m+1}^{\circ}$ for all $m\in \mathbb{N}$ and $\bigcup_{m\in\mathbb{N}}K_{m}= \Omega$. Denote by $\mathcal{C}_{m}^{\infty}(\Omega)$ the subspace of $\mathcal{C}^{\infty}(\Omega)$ formed by functions whose support is contained in $K_{m}$. One can define a family of \emph{seminorms} on $\mathcal{C}^{\infty}(\Omega)$ by 
\begin{equation*}
p_{m}(f)=\max \{ |D^{\alpha}f(x)|/x\in K_{m}, |\alpha|\leqslant m \}
\end{equation*}
 that turns $\mathcal{C}^{\infty}(\Omega)$ into a \emph{Fréchet} space (see \cite{ru}, 1.46). It is easy to prove that $\mathcal{C}_{c}^{\infty}(\Omega)$ is a closed subspace of $\mathcal{C}^{\infty}(\Omega)$. 

\par We regard $\mathcal{C}_{m}^{\infty}(\Omega)$ as a topological subspace of $\mathcal{C}^{\infty}(\Omega)$ with its Fréchet topology. Finally we endow $\mathcal{C}^{\infty}_{c}(\Omega)$ with the final LCS topology from the family of inclusions  $\{i_{m}:\mathcal{C}_{m}^{\infty}(\Omega) \rightarrow \mathcal{C}^{\infty}_{c}(\Omega)\}$. One can see that with this topology on $\mathcal{C}^{\infty}_{c}(\Omega)$ is a complete LCS.
\par The space $\mathcal{C}_{c}^{\infty}(\Omega)$ is an example of \emph{LF-space}, and we shall topologize the space of compact supported sections in a very similar way as a LF-space.

 
 \par We will now recall the topology on $\Gamma_{c} \omega SJ\Phi$. We will use the following diagram
 
\begin{equation}\label{diag}
\xymatrix{
  \omega S^{l}J\Phi \ar_{\tau}@/_4mm/[d] & \tau^{-1}(U_{\alpha}) \ar_{inc}[l] \ar^{\omega_{\alpha}^{l}}[r] \ar_{\tau |}@/_4mm/[d]  & x_{\alpha}(U_{\alpha})\times S^{l}E \ar^{\pi_{1}}[ddl] \ar^{\pi_{2}}[d] \\
 \mathcal{M} \ar_{u}@/_4mm/[u] & U_{\alpha} \ar^{x_{\alpha}}[d] \ar_{u|}@/_4mm/[u] \ar^{inc}[l] & S^{l}E  \ar_{i_{l}}[r] & SE \\
  \mathbb{R}^{n} & x_\alpha(U_{\alpha}) \ar^{inc}[l]
 }
\end{equation} 
where $\tau$ is the projection of the vector bundle $\omega S^{l}J\Phi$  and $\omega_{\alpha}^{l}$ is a local trivialization of such bundle over the domain $U_{\alpha}$  for $l\in \mathbb{N}$. By reducing the domains if necessary one can always think that the domains of the charts are trivializants for the vector bundle.


 
 \subsubsection{Topology for the fibres of the vector bundle $J\Phi$}
 
 \par We began for define a topology on the fibres of the vector bundle $J\Phi$. This fibres are typically the space $\prod_{k\in \mathbb{N}_{0}}L^{k}_{sym}(\mathbb{C}^{n},\mathbb{C}^{m})$ where $L_{sym}^{k}$ denotes the space of symmetric $k$-multilinear maps from $\mathbb{C}^{n}$ to $\mathbb{C}^{m}$. Let us denote this fibre by $E$, as we did in the diagram  (\ref{diag}). In \cite{le}, Lewis introduced a family of seminorms $\{\lambda_{r}\}_{r\in \mathbb{N}}$ on the space $\prod_{k\in   \mathbb{N}_{0}}L^{k}_{sym}(\mathbb{C}^{n},\mathbb{C}^{m})$ which turns it into a Frechét space. More precisely if $A\in \prod_{k\in   \mathbb{N}_{0}}L^{k}_{sym}(\mathbb{C}^{n},\mathbb{C}^{m})$ and we write $A=\prod_{k\in \mathbb{N}_{0}}A_{k}$ then $\lambda_{r}$ is given by $$\lambda_{r}(A)=\max \{ ||A_{0}||_{0}',||A_{1}||_{1}',\cdots, ||A_{r}||_{r}' \},$$ where $||-||_{j}'$ are norms on $L^{j}_{sym}(\mathbb{C}^{n},\mathbb{C}^{m})$ and we recall that, if $M\in L^{j}_{sym}(\mathbb{C}^{n},\mathbb{C}^{m})$, then $$||M||_{j}'=\sup\{ ||M(v,\cdots,v)|| / v\in \mathbb{C}^{n}, ||v||=1 \}.$$

 
 
 \subsubsection{Possibles topologies for the tensor product of LCS}\label{1.2.2}
 
 \par  We shall use the seminorms $\lambda_{r}$ to define  the \emph{projective topology} over  $T^{2}E$ following which was done at \cite{he} Chapter 3, Section 6. Given two LCS $F$ and $E$, the projective topology for $F\otimes E$ is the finest LCS topology such that the universal mapping $\chi: F\times E \to F \otimes E$ is continuous. 
 We recall that in a LCS there is a correspondence between the Minkowski functionals and the zero open neighbourhoods. Indeed, given $U$ an open neighbourhood, set $p_{U}(x)=\inf \{t\in \mathbb{R}_{>0}/ x\in tU \}$, called the \emph{associated Minkowski functional}. Conversely, given a Minkowski functional $p$, then  $p^{-1}([0,1))$ is an open neighbourhood of zero. Given two LCS $E$ and $F$, the projective topology of $E\otimes F$ can be explicitly defined by its Minkowski functionals. If $\chi:E\times F\rightarrow E\otimes F$ is the universal mapping, then a base of zero neighbourhoods for  $E\otimes F$  is the balanced  convex hull of   
 \begin{equation*} 
  \{ \chi(U\times V) / U \text{ is a open neighbourhood of zero in } E \text{ and V in }F \}
\end{equation*}
which is denoted by  $CH(U\otimes V)$. Moreover if $p_{U}$ is the Minkowski functional associated to $U\subseteq E$ and  $p_{V}$ of $V\subseteq F$, then 
 \begin{equation}\label{semi}
 p_{CH(U\otimes V)}(u)=\inf \{ \sum_{i}p_{U}(x_{i})p_{V}(y_{i}) / u=\sum_{i} x_{i}\otimes y_{i}\}
 \end{equation} 
 is a seminorms on $E\otimes F$ which coincide with the  Minkowski functional associated to $CH(U\otimes V)$  (see result 6.3. from \cite{he}, chapter 3, section 6.). We denote it by $p_{U}\otimes p_{V}$.
\par We denote by $E \otimes_{\pi} F$ the algebraic tensor product with the projective topology just described and call it the projective tensor product between $E$ and $F$. If $E$ and $F$ are LCS then also is  $E \otimes_{\pi} F$. But with this topology if $E$ and $F$ are Fréchet spaces then  $E \otimes_{\pi} F$ is not necessarily a Fréchet space and the same holds for $LF$ (inductive strict limit of Fréchet, whose principal example are the compactly supported sections over a manifold).   
\par Denoting by $E \hat{\otimes}_{\pi} F$ the completion of  $E \otimes_{\pi} F$, we obtain a categorical definition of tensor product in the category of Fréchet spaces or LF.
 
 
 \par In both cases, if $E$ is a LCS  by recursive applications of the LCS structure on the tensor products give us a family of countable seminorms on $T^{l}E$ for $l\in \mathbb{N}$ (that may be is a Fréchet topology, depending if $E$ is Fréchet and if we take $\otimes_{\pi}$ or $\hat{\otimes}_{\pi}$).  We note them by $||-||_{T^{l}E,\beta}$ with $\beta \in \mathbb{N}_{0}$.
 

 
  \subsubsection{Topologies for the symmetric product $S^{l}E$} \label{jhjhjh}
  
 \par Let us consider the linear inclusion of $S^{l}E$ inside $T^{l}E$ given by the symmetrization map  $x_{1}\cdots x_{l}\mapsto \sum_{\sigma\in \mathbb{S}_{l}} x_{\sigma(1)}\otimes\cdots\otimes x_{\sigma(l)}$. It is easy to show that it is a closed subspace of $T^{l}E$. Then, we may consider the induced LCS topology on $S^{l}E$ by means of this inclusion. We shall denote it by $(T^{l}E)^{\mathbb{S}_{l}}$. If $E$ is a Fréchet space, note that $T^{l}E$ is also a Fréchet space and the same holds for the closed subspace $(T^{l}E)^{\mathbb{S}_{l}}$, provided we are using the completion of the projective topology for the algebraic tensor product.
 
 
 \par On the other hand, since $S^{l}E$ is a quotient of $T^{l}E$, the LCS topology on the latter induces a LCS topology on the former (see \cite{top}, section 3). If $E$ is a Fréchet space, then $S^{l}E$ is also.
 
 \par Note that the composition  $(T^{l}E)^{\mathbb{S}_{l}}\hookrightarrow T^{l}E \twoheadrightarrow S^{l}E$ is clearly linear and continuous, because the inclusion is linear and continuous by definition of subspace topology at $(T^{l}E)^{\mathbb{S}_{l}}$ and the quotient projection is clearly linear and continuous. 
  Also, one can see that the composition is bijective. If $E$ is Fréchet the open map theorem for Fréchet spaces (see \cite{ru} chapter 1), implies that the inverse map of this composition is also continuous and as a conclusion $S^{l}E$ and $(T^{l}E)^{\mathbb{S}}$ are homeomorphic. Note that the open map theorem can be used only if the LCS involved are Fréchet spaces and it is true only if we use the tensor product $\hat{\otimes}_{\pi}$.
  
  \subsubsection{Adding the fibre of the density bundle}
  
 \par Continuing the construction of the topology for the section space given the vector bundle  $\tau: \omega S^{l}J\Phi \rightarrow \mathcal{M} $ we denote by  $\omega_{\alpha}^{l}:\tau^{-1}(U_{\alpha})\rightarrow x_{\alpha}(U_{\alpha}) \times S^{l}E$ as its local trivialization. Notice that this vector bundle has the same local fibre as $S^{l}J\Phi$, because the tensor product on the fibres are over $\mathbb{R}$ and $\omega$ is a line bundle, the typical fibre of $S^{l}J\Phi$ is $S^{l}E$ and for  $\omega S^{l}J\Phi$ is $\mathbb{R} \otimes_{\mathbb{R}}S^{l}E$. 
 
 \subsubsection{Topology on $\Gamma \omega SJ\Phi$ and $\Gamma_{c} \omega SJ\Phi$}
 \par For each chart $(U_{\alpha},x_{\alpha})$ of the aforementioned atlas of $\mathcal{M}$ we consider a fundamental sequence of compacts  $ \{K_{m}^{\alpha}\}_{m\in \mathbb{N}}$ 
 and denote by $\Gamma_{m} \omega S^{l}J\Phi$ the sections with compact support contained in $K_{m}$.



\par Given $n\in \mathbb{N}$, define the following family of seminorms on $\Gamma_{m}\omega S^{l}J\Phi$ (see \cite{di}, chapter XV, section 2) defined as

 \begin{align*}
  & p_{s,\beta,\alpha,m}(u)=\sup _{x\in K_{m}\subseteq U_{\alpha}}\sum_{|j|\leqslant s,j\in \mathbb{N}_{o}^{n}}||\partial^{j}(\pi_{2}\circ \omega_{\alpha}^{l}\circ u \circ x_{\alpha}^{-1})(x)||_{S^{l}E,\beta},
 \end{align*}
where  $s$ indicates the maximum order of derivation, $\alpha$ is the index of the chart and $\beta$ the index of the seminorm of $S^{l}E$. 
  

Note that the (partial) derivatives appearing in the previous definition are the classical ones (of any function of the form $f:W\subseteq \mathbb{R}^{n}\to L$, where $L$ is a Fréchet space). 
  
\par We finally consider the final LCS topology on $\Gamma_{c} \omega S^{l}J\Phi$  given by the family of inclusions  $i_{m,l}:\Gamma_{m} \omega S^{l}J\Phi \hookrightarrow \Gamma_{c} \omega S^{l}J\Phi $, and on $\Gamma_{c} \omega SJ\Phi$ with the LCS topology of the direct sum, i.e. the final LCS topology given by the inclusions  $i_{l}:\Gamma_{c} \omega S^{l}J\Phi\rightarrow \Gamma_{c} \omega SJ\Phi$.  With this final topology  $\Gamma_{c} \omega S^{l}J\Phi$ is a LF-space.
  


\section{Distributions on manifolds}

\subsection{H-distributions \& D-distributions }

\par We refer the reader to \cite{hor}, \cite{di}, \cite{ma} and \cite{mich} for general references about the theory of distributions on manifolds. 
During this section $X$ and $Y$ will be open sets of $\mathbb{R}^{n}$, and  $\mathcal{M}$ will be a manifold of dimension $n$. We denote by $\mathcal{D}'(X)$ the space of distributions on $X\subseteq \mathbb{R}^{n}$ (see \cite{hor} for a definition). Finally we denote by $H_{d}(\mathcal{M})$ to the distributions with  density character $0$ on $\mathcal{M}$ and by $D_{d}(\mathcal{M})$ to the distributions with \emph{density character} $1$ over $\mathcal{M}$ (see \cite{mich} for a reference). We will call these two types of distributions H-distributions and D-distributions respectively and shall present a formal definition later in this subsection. 

\par We recall that a subset $V$ of $\mathbb{R}^{n}$ is said  to be \emph{conic} if for each $\xi\in V$ and $t>0$, $t\xi$ belongs to $V$. If $\mathcal{E}'(\mathbb{R}^{n})$ are the distributions of  compact support then for any $u\in\mathcal{E}'(\mathbb{R}^{n})$ we define the set $\Sigma(u)\subseteq \mathbb{R}^{n}\setminus \{ \vec{0} \}$ as the points having no conic neighbourhood $V\subseteq \mathbb{R}^{n}\setminus \{ \vec{0} \}$ and positive constants $C_{N}$ such that \begin{align*}
 & |\hat{u}(\xi)|\leqslant C_{N}(1+|\xi|)^{-N} 
 \end{align*} holds for all $\xi \in V$ and all  $N\in \mathbb{N}$, where $\hat{u}$ is the Fourier transform of $u$. With this definition $\Sigma(u)$ is clearly a closed cone in $\mathbb{R}^{n}\setminus \{ \vec{0} \}$. 
 \par For any  $u\in \mathcal{D}'(X)$ and $\psi \in \mathcal{C}_{c}(X)$, the distribution $\psi u$ belongs to $\mathcal{E}'(X)$. Given $u\in \mathcal{D}'(X)$, for each $x\in X$ we set \begin{align*}
 \Sigma_{x}(u)=\bigcap \{ \Sigma(\phi u) / \phi\in \mathcal{C}_{0}^{\infty}(X), \phi(x)\neq 0 \}
 \end{align*} and call it the \emph{cone of $u$ at $x$}. For a nice exposition of these concepts, and together with a definition of the wavefront  set of a distribution on $\mathbb{R}^{n}$ and also the definition of wavefront set for an H-distribution on a manifold, see \cite{ma}, Chapter 2.
 
\par The objective of this section is to generalise the product of  H-distributions defined by L.H\"{o}rmander in \cite{hor} (see \cite{ma}, Chapter 2.3) to an action of the space of   H-distributions over the space of D-distributions. 
It will be very useful to describe $H_{d}(\mathcal{M})$ and $D_{d}(\mathcal{M})$ as families of distributions in open sets of $\mathbb{R}^{n}$ satisfying some sort of covariance, which we briefly recall.  
\par If  $f:X\rightarrow Y$ is a diffeomorphism, the pullback of  $f$,  noted by   $f^{\ast}:\mathcal{D}'(Y)\rightarrow \mathcal{D}'(X)$, is defined as
\begin{equation}\label{227}
f^{\ast}(w)(\phi)=w(|\det(f^{-1})'|\phi\circ f^{-1}), 
\end{equation}
where $w\in \mathcal{D}'(Y)$ and $\phi$ is a test function in $X$, i.e. $\phi\in \mathcal{C}^{\infty}_{c}(X)$.
\par Take a differentiable structure on $\mathcal{M}$ given by charts $(U_{\alpha}, \varphi_{\alpha})_{\alpha\in \mathbb{N}}$. Then $\{\varphi_{\alpha}(U_{\alpha})\}_{\alpha\in \mathbb{N}}$ is a family of open sets  of $\mathbb{R}^{n}$, so it makes sense to consider $\mathcal{D}'(\varphi_{\alpha}(U_{\alpha}))$.
\begin{definicion}\label{def12}
The space $H_{d}(\mathcal{M})$ is the set whose elements are families of the form $u=\{u_{\alpha}\}_{\alpha\in \mathbb{N}}$, where $u_{\alpha}\in \mathcal{D}'(\varphi_{\alpha}(U_{\alpha}))$, obeying the equality
\begin{equation}\label{228}
u_{\beta}=(\varphi_{\alpha} \circ \varphi_{\beta}^{-1})^{\ast}u_{\alpha}
\end{equation} 
of distributions on $\varphi_{\beta}(U_{\alpha}\cap U_{\beta})$.

\end{definicion}

\par Analogously, if $f$ is a diffeomorphism, the \emph{semi-pullback} is given by  
\begin{equation}\label{229}
f^{\bullet}(w)(\phi)=w(\phi\circ f^{-1}),
\end{equation}
where  $w\in \mathcal{D}'(Y)$ and $\phi$ is a test function in $X$, i.e. $\phi\in \mathcal{C}^{\infty}_{c}(X)$.
Notice that
\begin{equation}\label{2210}
f^{\bullet}(w)=|\det(f')|f^{\ast}(w),
\end{equation}
for all  $w\in \mathcal{D}'(Y)$

\begin{definicion}[see \cite{di}]\label{def13}
The space $D_{d}(\mathcal{M})$ is the set whose elements are  the families $u=\{u_{\alpha}\}_{\alpha\in \mathbb{N}}$, where $u_{\alpha}\in \mathcal{D}'(\varphi(U_{\alpha}))$, satisfying that
\begin{equation}\label{2211}
u_{\beta}=(\varphi_{\alpha} \circ \varphi_{\beta}^{-1})^{\bullet}u_{\alpha}
\end{equation} 
as distributions on  $\varphi_{\beta}(U_{\alpha}\cap U_{\beta})$.
\end{definicion}

\par Note that \eqref{2210} tells us that equation  \eqref{2211} can be rewritten as 
\begin{equation}\label{212}
u_{\beta}=|\det(\varphi_{\alpha} \circ \varphi_{\beta}^{-1})'|(\varphi_{\alpha} \circ \varphi_{\beta}^{-1})^{\ast}u_{\alpha}.
\end{equation}

Definitions \ref{def12} and \ref{def13} are particular examples of what is called \emph{distribution with density character  $q$} (see \cite{mich} Chapter 3, Section 1, where they are also described as continuous dual spaces of sections of certain density bundles).\\


\par We will extend the definition of wavefront set of H-distributions to D-distributions. We remark that the definition of wavefront set of H-distribution was introduced by L. H\"{o}rmander in \cite{hor} (see \cite{ma} for a short exposition).

\begin{lemma}\label{wed}
Let $f\in \mathcal{C}^{\infty}(W)$ positive, $W\subseteq R^{n}$ open and $\mu\in \mathcal{D}'(W)$. Hence
$$ \Sigma_{x}(\mu)=\Sigma_{x}(f\mu) \hspace{2cm}$$ 
for all $x\in W$.
\end{lemma}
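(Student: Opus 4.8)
The plan is to reduce the claimed equality to the single fact that multiplying a distribution by a smooth function never enlarges the cones $\Sigma_x$, and then to exploit the hypothesis that $f$ is \emph{positive} (hence invertible in $\mathcal{C}^{\infty}(W)$) to obtain the reverse inclusion for free. Concretely, I would first prove
\[
\Sigma_{x}(f\mu)\subseteq \Sigma_{x}(\mu)\qquad\text{for every }f\in\mathcal{C}^{\infty}(W),
\]
a statement in which positivity plays no role. Once this is available, positivity of $f$ guarantees that $g:=1/f\in\mathcal{C}^{\infty}(W)$, and since $g\cdot(f\mu)=\mu$ as distributions, applying the inclusion to $g$ and to the distribution $f\mu$ yields $\Sigma_{x}(\mu)=\Sigma_{x}(g\,(f\mu))\subseteq\Sigma_{x}(f\mu)$. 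Combining the two inclusions gives the equality for all $x\in W$.

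To establish the inclusion, fix $x\in W$ and $\xi_{0}\notin\Sigma_{x}(\mu)$. By the definition of $\Sigma_{x}$ as the intersection of the sets $\Sigma(\phi\mu)$ over test functions not vanishing at $x$, there is $\phi\in\mathcal{C}_{0}^{\infty}(W)$ with $\phi(x)\neq 0$ and $\xi_{0}\notin\Sigma(\phi\mu)$; that is, $\widehat{\phi\mu}$ decays faster than any power of $(1+|\xi|)^{-1}$ in some conic neighbourhood of $\xi_{0}$. Since $\phi(x)\neq 0$, there is an open neighbourhood $U\ni x$ on which $\phi$ does not vanish; choose $\psi\in\mathcal{C}_{0}^{\infty}(U)$ with $\psi(x)\neq 0$. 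Then $h:=\psi f/\phi$ is well defined and smooth on $U$ and, being supported in $\operatorname{supp}\psi\subseteq U$, extends to an element of $\mathcal{C}_{0}^{\infty}(W)$. With this choice one has the identity of distributions $\psi f\mu = h\,(\phi\mu)$. Thus it suffices to prove the auxiliary statement: for $v\in\mathcal{E}'(\mathbb{R}^{n})$ and $h\in\mathcal{C}_{0}^{\infty}(\mathbb{R}^{n})$ one has $\Sigma(hv)\subseteq\Sigma(v)$. Applying it with $v=\phi\mu$ gives $\xi_{0}\notin\Sigma(\phi\mu)\supseteq\Sigma(h\phi\mu)=\Sigma(\psi f\mu)$, and since $\psi(x)\neq 0$ this shows $\xi_{0}\notin\Sigma_{x}(f\mu)$.

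The technical heart is the auxiliary statement, which I would prove on the Fourier side. Since $hv$ is compactly supported, $\widehat{hv}=(2\pi)^{-n}\,\hat{h}\ast \hat{v}$, where $\hat{h}$ is a Schwartz function and $\hat{v}$ is of polynomial growth by the Paley--Wiener estimate for compactly supported distributions. Fixing $\xi_{0}\notin\Sigma(v)$ together with a closed conic neighbourhood $V$ of $\xi_{0}$ on which $\hat{v}$ decays rapidly, I would split the convolution integral $\int \hat{h}(\xi-\eta)\hat{v}(\eta)\,d\eta$ into the region where $\eta$ lies in a slightly smaller cone $V'$ (there rapid decay of $\hat{v}$ controls the integrand) and its complement (there, for $\xi$ restricted to a still smaller cone around $\xi_{0}$, one has $|\xi-\eta|\geq c(|\xi|+|\eta|)$, so the Schwartz decay of $\hat{h}$ beats the polynomial growth of $\hat{v}$). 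This is precisely the standard estimate showing that rapid decay in a cone is stable under convolution with a Schwartz function; carrying out the two-region bound is the only real obstacle, the remainder being bookkeeping with cones and the decay constants $C_{N}$. It yields $\xi_{0}\notin\Sigma(hv)$, which completes the inclusion and hence the lemma.
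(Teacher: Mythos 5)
Your proof is correct and follows essentially the same route as the paper: one inclusion from the fact that multiplying a compactly supported distribution by a smooth compactly supported function does not enlarge $\Sigma$, and the reverse inclusion obtained by applying that same fact to $1/f$, which is smooth precisely because $f$ is positive. The only differences are cosmetic: the paper cites this key fact (Lemma 2.5 of its wavefront-set reference) rather than proving it via the Fourier convolution estimate as you sketch, and it localizes with a cutoff $\psi\equiv 1$ near $x$ together with locality of $\Sigma_{x}$, whereas you use the quotient $h=\psi f/\phi$ to get the identity $h\,(\phi\mu)=\psi f\mu$ directly.
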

\begin{proof} We take $\psi \in \mathcal{C}^{\infty}_{c}(W)$ such that  $\psi$ takes the value $1$ in a closed neighbourhood  of $x$ and  $0$ in the complement of an open set containing that closed set.  We also consider $\varphi\in \mathcal{C}^{\infty}_{c}(W)$ such that  $\varphi(x)\neq 0$. Then, $\varphi \mu \in \mathcal{D}'(W)$ has compact support. If we apply \cite{ma}, Lemma 2.5, to the compact support distribution  $\varphi \mu$ and the smooth compact support function $f\psi$, we see that  $\Sigma((f\psi) (\varphi \mu))\subseteq \Sigma(\varphi \mu)$. Since  $\Sigma((f\psi)(\varphi \mu))= \Sigma((\varphi\psi) (f\mu))$, intersecting the above contention for all $\varphi\in \mathcal{C}^{\infty}_{c}(W)$ such that  $\varphi(x)\neq 0$, we get that   $\bigcap_{\varphi}\Sigma((\varphi\psi) (f \mu))\subseteq \Sigma_{x}(\mu)$. On the other hand, as $\varphi=\varphi\psi$ on a neighbourhood of $x$ and $\Sigma_{x}$ is a local property, we also conclude that $\Sigma_{x}(f\mu)\subseteq \Sigma_{x}(\mu)$.
 \par We apply the argument of the previous paragraph to the positive function  $\phi=\frac{1}{f}$ and the distribution $u=f\mu$, to deduce that  
 \begin{align*}
& \Sigma_{x}(\phi u)\subseteq \Sigma_{x}(u),\\
& \Sigma_{x}(\frac{1}{f} f\mu )\subseteq \Sigma_{x}(f\mu),\\
& \Sigma_{x}(\mu)\subseteq \Sigma_{x}(f\mu).
 \end{align*}
 \par As a consequence $\Sigma_{x}(\mu)=\Sigma_{x}(f\mu)$, which is what we want. 
 \end{proof}

\par Equation (2.133) of \cite{ma}, tells us that 
\begin{equation}\label{qqq}
\Sigma_{x(p)}(u_{x})= \Sigma_{x(p)}(|\det(y\circ x^{-1})'|(y\circ x^{-1})^{\ast}(u_{y}))= \Sigma_{x(p)}((y\circ x^{-1})^{\ast}(u_{y})),
\end{equation} where $(x,U_{x})$ is an atlas for $\mathcal{M}$, $p\in \mathcal{M}$ and  we have used equation \eqref{212} and Lemma \ref{wed}.

\par As we can see in \cite{ma}, equation (2.111), each diffeomorphism $f:X\to Y$ satisfies \begin{equation}\label{kkk}
\Sigma_{x}(f^{\ast}u)=f'(x)^{T}\Sigma_{f(x)}(u)
\end{equation} for any $x\in X$, where $f'(x)$ is the differential of $f$. Now if $u=\{u_{z}\}_{(z,U_{z})}$ is a D-distribution on $\mathcal{M}$ and $p\in \mathcal{M}$ is such that the charts $(x,U_{x})$ and $(y,U_{y})$ satisfy that $p\in U_{x}$  and $p\in U_{y}$, then using the Einstein's summation convention, we get that 
\begin{align*}
& \{ \xi_{k}dx^{k}_{p}/ \xi\in \Sigma_{x(p)}(u_{x}) \}= \{ \xi_{k}dx^{k}_{p}/ \xi\in \Sigma_{x(p)}((y\circ x^{-1})^{\ast}u_{y}) \}\\
& = \{ ((y\circ x^{-1})'(x(p))^{T}\eta)_{k}dx^{k}_{p} / \eta\in \Sigma_{y(p)}(u_{y}) \}= \{ ((\frac{\partial y^{j}}{\partial x^{k}})_{p}\eta_{j}dx^{k}_{p} / \eta\in \Sigma_{y(p)}(u_{y})\} \\
& =\{ \eta_{j}dy^{j}_{p}/ \eta\in \Sigma_{y(p)}(u_{y}) \},
\end{align*}
where we have used equations \eqref{qqq} and \eqref{kkk} in the first equality and in the second one respectively. This allow us to introduce the following notion.
  

\begin{definicion}\label{wfd}
Let $u\in D_{d}(\mathcal{M})$. The wave front set of $u$ is the set $$ WF(u)= \{ \{p\}\times \Sigma_{p}(u) / p\in \mathcal{M} \}\subseteq T^{\ast }\mathcal{M},$$
where $ \Sigma_{p}(u)=\{ \xi_{k}dx^{k}_{p}/ \xi\in \Sigma_{x(p)}(u_{x}) \}$ for any chart $(x,U_{x})$ such that $p\in U_{x}$, and $T^{\ast}\mathcal{M}$ is the cotangent bundle of the manifold.
\end{definicion}



\par The pullback of distributions defined on open sets of $\mathbb{R}^{n}$ is also defined for maps which are not diffeomorphisms.
\begin{thm}[\cite{ma}, Theorem 2.61]\label{hhhh}
Let $X$ and $Y$ be open subsets of $\mathbb{R}^{m}$ and $\mathbb{R}^{n}$ respectively and let $f:X\to Y$ be an smooth map. Define
$$ N_{f}=\{ (f(x),\nu)\in Y\times \mathbb{R}^{n} / x\in X, f'(x)^{T}\nu =\vec{0} \}. $$
Then there is a unique way of defining the pullback
$$ f^{\ast}:\{ u\in \mathcal{D}'(Y) / N_{f}\cap WF(u)=\emptyset \} \to \mathcal{D}'(X),$$
such that $f^{\ast}u=u\circ f$ for all $u\in \mathcal{C}^{\infty}(Y)$.
\end{thm}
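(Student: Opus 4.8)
The plan is to follow Hörmander's classical argument, reducing everything to a convergent oscillatory integral and then isolating uniqueness through a density/continuity statement. Since both the wave front condition and the candidate pullback are local in $X$, I would first use a partition of unity subordinate to $\operatorname{supp}\phi$ to reduce to the case in which $u\in\mathcal{E}'(Y)$ has compact support; the general case is then recovered by patching, because the construction will be manifestly compatible with restriction. For $u\in\mathcal{E}'(Y)$ the Fourier transform $\hat u$ is a smooth function of at most polynomial growth, and Fourier inversion suggests defining, for $\phi\in\mathcal{C}^{\infty}_{c}(X)$,
\begin{equation*}
\langle f^{*}u,\phi\rangle=(2\pi)^{-n}\int_{\mathbb{R}^{n}}\hat u(\xi)\,I_{\phi}(\xi)\,d\xi,\qquad I_{\phi}(\xi)=\int_{X}e^{\,i f(x)\cdot\xi}\,\phi(x)\,dx.
\end{equation*}
When $u\in\mathcal{C}^{\infty}(Y)$ the integral converges absolutely because $\hat u$ is Schwartz, and an elementary computation shows it equals $\int_{X}u(f(x))\phi(x)\,dx$; so any admissible definition is forced to coincide with this formula in the smooth case, which fixes the candidate.

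The heart of the matter is the absolute convergence of the integral for a general compactly supported $u$, and this is the step I expect to be the main obstacle. I would prove it by a conic decomposition of frequency space into two overlapping regions. In directions $\xi$ lying outside a small closed conic neighbourhood of $\Sigma(u)$ the very definition of $\Sigma(u)$ supplies rapid decay $|\hat u(\xi)|\leqslant C_{N}(1+|\xi|)^{-N}$, while $I_{\phi}(\xi)$ grows at most polynomially, so the product is integrable there. In the complementary region, a conic neighbourhood of $\Sigma(u)$, the hypothesis $N_{f}\cap WF(u)=\emptyset$ is exactly what is needed: after sufficient localization it guarantees that for every $x\in\operatorname{supp}\phi$ and every $\xi$ in that cone one has $f'(x)^{T}\xi\neq\vec 0$, so the phase $x\mapsto f(x)\cdot\xi$ is non-stationary on $\operatorname{supp}\phi$. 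Repeated integration by parts with the first-order operator $L=\tfrac{1}{i|f'(x)^{T}\xi|^{2}}\,(f'(x)^{T}\xi)\cdot\nabla_{x}$, which fixes $e^{\,i f(x)\cdot\xi}$, then yields $|I_{\phi}(\xi)|\leqslant C_{N}(1+|\xi|)^{-N}$ uniformly, against which the polynomial growth of $\hat u$ is harmless. The delicate points are arranging that the two cones genuinely cover $\mathbb{R}^{n}\setminus\{\vec 0\}$ with a positive angular separation (which uses the compactness of $\operatorname{supp}\phi$ together with the closedness of $\Sigma(u)$ as a cone) and keeping all constants uniform in $\phi$, so that $f^{*}u$ is a continuous functional, hence a bona fide distribution.

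For uniqueness I would invoke continuity in Hörmander's normal topology. Fix a closed cone $\Gamma\subseteq Y\times(\mathbb{R}^{n}\setminus\{\vec 0\})$ disjoint from $N_{f}$ and let $\mathcal{D}'_{\Gamma}(Y)$ denote the distributions $u$ with $WF(u)\subseteq\Gamma$, equipped with weak convergence supplemented by uniform Fourier-decay estimates off $\Gamma$ on each localization. One checks that $\mathcal{C}^{\infty}(Y)$ is sequentially dense in $\mathcal{D}'_{\Gamma}(Y)$ and that the formula above is continuous for this notion of convergence, the estimates of the previous paragraph being uniform over $\Gamma$. Since the smooth case forces $f^{*}u=u\circ f$, continuity propagates this prescription uniquely to all of $\mathcal{D}'_{\Gamma}(Y)$, and letting $\Gamma$ exhaust the closed cones disjoint from $N_{f}$ gives existence and uniqueness on $\{u:N_{f}\cap WF(u)=\emptyset\}$. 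As a by-product the same conic estimates localize the frequencies of $f^{*}u$ and deliver the expected refinement $WF(f^{*}u)\subseteq\{(x,f'(x)^{T}\eta):(f(x),\eta)\in WF(u)\}$.
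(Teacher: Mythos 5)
The paper does not actually prove this statement---it is quoted directly from the cited reference (\cite{ma}, Theorem 2.61, which is Hörmander's classical pullback theorem)---and your proposal reproduces essentially the argument given there: localization to $u\in\mathcal{E}'(Y)$, the oscillatory-integral formula forced by Fourier inversion, the conic split of frequency space between rapid decay of $\hat u$ and non-stationary phase handled by repeated integration by parts with the operator $L$, and uniqueness via sequential density of smooth functions in $\mathcal{D}'_{\Gamma}(Y)$ together with continuity of the formula. Your sketch is correct, and it rightly makes explicit the one subtlety the paper's bare statement glosses over: uniqueness only makes sense relative to the continuity requirement you impose, since without it any extension of $u\mapsto u\circ f$ beyond $\mathcal{C}^{\infty}(Y)$ would satisfy the stated condition.
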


\par We recall that given $u\in \mathcal{D}'(X)$ and $v\in \mathcal{D}'(Y)$, the exterior tensor product $u\otimes v$ is a distribution over $X\times Y$ (see for instance \cite{ma}, Definition 1.48).  
The hypotheses of Theorem \ref{hhhh} are satisfied for $\Delta:\mathbb{R}^{n}\rightarrow \mathbb{R}^{n} \times \mathbb{R}^{n}$  and a distribution of the form $\mu\otimes \nu$ if there is no $(x,\xi)\in WF(\mu)$ such that  $(x,-\xi)\in WF(\nu)$ (as explained in \cite{ma}, Theorem 2.167) where $\mu $ and $\nu$ are distributions over $\mathbb{R}^{n}$.

\par The family  $\{\Delta_{\alpha}^{\ast}(u_{\alpha}\otimes v_{\alpha})\}_{\alpha\in \mathbb{N}}$ will be of interest to us: these are distributions $ \Delta_{\alpha}^{\ast}(u_{\alpha}\otimes v_{\alpha})\subseteq \mathcal{D}'(\varphi_{\alpha}(U_{\alpha}))$ where $\Delta_{\alpha}:\varphi_{\alpha}(U_{\alpha})\rightarrow \varphi_{\alpha}(U_{\alpha})\times \varphi_{\alpha}(U_{\alpha})\subseteq \mathbb{R}^{2n}$ are the diagonal maps. 
\par The following will be useful.
\begin{lemma}\label{lema6}
 Given $f:Z\rightarrow X$ and $g:W\rightarrow Y$ diffeomorphisms between open sets of  $\mathbb{R}^{n}$, let $f\times g: Z\times W\rightarrow X\times Y$ be defined by $(f\times g)(z,w)=(f(z),g(w))$. Then 
\begin{equation}
(f\times g)^{\ast}(u\otimes v)(\varphi \otimes \psi )= (f^{\ast}(u)\otimes g^{\ast}(v))(\varphi \otimes \psi),
\end{equation}
 for all $u\in \mathcal{D}'(X)$, $v\in \mathcal{D}'(Y)$, $\varphi\in \mathcal{C}^{\infty}_{c}(Z)$ and $\psi\in \mathcal{C}^{\infty}_{c}(W)$. Since  $\mathcal{C}^{\infty}_{c}(Z) \otimes  \mathcal{C}^{\infty}_{c}(W)\subseteq  \mathcal{C}^{\infty}_{c}(Z\times W)$ is dense and the pullback is sequentially continuous (see \cite{ma}), we can conclude that
 \begin{equation}
 (f\times g)^{\ast}(u\otimes v)= f^{\ast}(u)\otimes g^{\ast}(v)
 \end{equation}
 as elements of $\mathcal{D}'(Z\times W)$.
 \end{lemma}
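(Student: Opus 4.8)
The plan is to verify the first displayed identity on elementary tensors $\varphi\otimes\psi$ of test functions, where everything reduces to the defining formula \eqref{227} for the pullback by a diffeomorphism together with the factorization property of the exterior tensor product; the second displayed equality then follows from the density and sequential-continuity remarks already indicated in the statement.

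First I would observe that $f\times g$ is itself a diffeomorphism between open sets of $\mathbb{R}^{2n}$, so $(f\times g)^{\ast}$ is unconditionally defined by \eqref{227} (no wavefront-set hypothesis is needed). Its inverse is $(f\times g)^{-1}=f^{-1}\times g^{-1}$, whose differential at a point $(x,y)$ is the block-diagonal matrix with diagonal blocks $(f^{-1})'(x)$ and $(g^{-1})'(y)$. Consequently the Jacobian determinant factors, giving $|\det((f\times g)^{-1})'(x,y)|=|\det(f^{-1})'(x)|\,|\det(g^{-1})'(y)|$.

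Next I would unwind the left-hand side using \eqref{227}: the value $(f\times g)^{\ast}(u\otimes v)(\varphi\otimes\psi)$ equals $(u\otimes v)$ evaluated at the function $(x,y)\mapsto |\det((f\times g)^{-1})'(x,y)|\,(\varphi\otimes\psi)\bigl((f\times g)^{-1}(x,y)\bigr)$. By the previous step and the identity $(\varphi\otimes\psi)(f^{-1}(x),g^{-1}(y))=\varphi(f^{-1}(x))\,\psi(g^{-1}(y))$, this argument is itself an elementary tensor, namely $\bigl(|\det(f^{-1})'|\,(\varphi\circ f^{-1})\bigr)\otimes\bigl(|\det(g^{-1})'|\,(\psi\circ g^{-1})\bigr)$, a product of test functions on $X$ and $Y$. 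Applying the defining property of the exterior tensor product, $(u\otimes v)(a\otimes b)=u(a)\,v(b)$ (see \cite{ma}), collapses this into $u\bigl(|\det(f^{-1})'|\,(\varphi\circ f^{-1})\bigr)\cdot v\bigl(|\det(g^{-1})'|\,(\psi\circ g^{-1})\bigr)$, which is precisely $f^{\ast}(u)(\varphi)\cdot g^{\ast}(v)(\psi)=(f^{\ast}(u)\otimes g^{\ast}(v))(\varphi\otimes\psi)$ by \eqref{227} once more. This establishes the first displayed equality.

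Finally, since the elementary tensors span a dense subspace of $\mathcal{C}^{\infty}_{c}(Z\times W)$ and both distributions are sequentially continuous, the agreement on elementary tensors propagates to all of $\mathcal{C}^{\infty}_{c}(Z\times W)$, which yields the second displayed equality. The only point that requires any real care is the factorization of the Jacobian determinant coming from the block-diagonal structure of the differential, together with the observation that the pushed-forward test function remains an elementary tensor so that the product structure of $u\otimes v$ can be invoked; the rest of the argument is routine bookkeeping.
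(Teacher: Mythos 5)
Your proof is correct, and it takes a genuinely different route from the one in the paper. You argue directly from the definition \eqref{227} of the pullback along a diffeomorphism: $f\times g$ is itself a diffeomorphism with inverse $f^{-1}\times g^{-1}$, the block-diagonal structure of its differential makes the Jacobian determinant factor, and the resulting test function $\bigl(|\det(f^{-1})'|\,(\varphi\circ f^{-1})\bigr)\otimes\bigl(|\det(g^{-1})'|\,(\psi\circ g^{-1})\bigr)$ is again an elementary tensor, so the characterizing property $(u\otimes v)(a\otimes b)=u(a)\,v(b)$ of the exterior tensor product collapses the whole computation. The paper instead approximates $u$ and $v$ by sequences of continuous functions converging in the weak$^{\star}$ topology, invokes sequential continuity of the pullback to reduce to the case of continuous functions, and there splits the integral by Fubini's theorem before passing to the limit. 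Your argument is the more economical one: it is purely algebraic once the Jacobian factorization is observed, and it sidesteps both the approximation step (which tacitly uses that continuous functions are weak$^{\star}$ sequentially dense in $\mathcal{D}'$ and that $u_{l}\otimes v_{l}\to u\otimes v$) and Fubini. What the paper's approach buys is a template that does not depend on having a closed formula for the pullback: approximation plus sequential continuity is the mechanism by which pullbacks are treated in \cite{ma} even for maps that are not diffeomorphisms, where nothing like \eqref{227} is available and only the wavefront-set condition of Theorem \ref{hhhh} gives meaning to $f^{\ast}$. Both proofs conclude with the same density and sequential-continuity argument for the second displayed identity, which is already sketched in the statement itself.
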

 \begin{proof}
 First of all we take sequences  $\{u_{j}\}_{j\in \mathbb{N}}$ and $\{v_{m}\}_{m\in \mathbb{N}}$ of continuous functions over $X$ and $Y$, respectively, such that $u_{n}\to u$ and  $v_{n}\to v$. We recall that $\mathcal{D}'(X)$ has the weak$^{\star}$-topology induced by $\mathcal{C}^{\infty}_{c}(X)$, and the same holds for $Y$. Then,
 $$ ((f\times g)^{\ast}(u\otimes v))(\varphi\otimes\psi)=\lim_{l\to \infty}((f\times g)^{\ast}(u_{l}\otimes v_{l}))(\varphi\otimes\psi), $$
 because the pullback is sequentially continuous. Hence,
 $$ \lim_{l\to \infty}((f\times g)^{\ast}(u_{l}\otimes v_{l}))(\varphi\otimes\psi)=\lim_{l\to \infty}\int_{Z\times W}((u_{l}\otimes v_{l})\circ (f\times g))(z,w)\cdot (\varphi\otimes\psi)(z,w) dz dw, $$
 for the pullback map is an extension of the precompositions from functions to distributions. Moreover, by definition we see that
 $$\lim_{l\to \infty}\int_{Z\times W}(u_{l}\circ f(z)\cdot \varphi(z))\cdot((v_{l}\circ g)(w)\cdot\psi(w)) dz dw, $$
 which by Fubinni's Theorem gives us
 $$\lim_{l\to \infty}[\int_{Z}(u_{l}\circ f(z)\cdot \varphi(z))dz\cdot\int_{W}((v_{l}\circ g)(w)\cdot\psi(w))dw]$$
$$ = \lim_{l\to \infty}\int_{Z}(u_{l}\circ f(z)\cdot \varphi(z))dz\cdot \lim_{l\to \infty}\int_{W}((v_{l}\circ g)(w)\cdot\psi(w))dw$$
 $$= \lim_{l\to \infty} f^{\ast}u_{l}(\varphi) \cdot \lim_{l\to \infty} g^{\ast}v_{l}(\psi)=f^{\ast}u(\varphi)\cdot g^{\ast}v(\psi)=(f^{\ast}u \otimes g^{\ast}v)(\varphi\otimes \psi).$$
 The lemma is thus proved.
 \end{proof}

\par Consider now the pullback of $\Delta_{\alpha}^{\ast}(u_{\alpha}\otimes v_{\alpha})\in \mathcal{D}'(\varphi_{\alpha}(U_{\alpha}\cap U_{\beta}))$ by  $\varphi_{\alpha} \circ \varphi_{\beta}^{-1}$. Note that the following diagram 
$$
\xymatrix{
\varphi_{\beta}(U_{\alpha}\cap U_{\beta}) \ar[rrr]^{\Delta_{\beta}} \ar[dd]_{\varphi_{\alpha} \circ \varphi_{\beta}^{-1}} & & & \varphi_{\beta}(U_{\alpha}\cap U_{\beta})\times \varphi_{\beta}(U_{\alpha}\cap U_{\beta}) \ar[dd]^{(\varphi_{\alpha} \circ \varphi_{\beta}^{-1})\times (\varphi_{\alpha} \circ \varphi_{\beta}^{-1})} \\
& & & \\
\varphi_{\alpha}(U_{\alpha}\cap U_{\beta}) \ar[rrr]^{\Delta_{\alpha}} & & & \varphi_{\alpha}(U_{\alpha}\cap U_{\beta})\times \varphi_{\alpha}(U_{\alpha}\cap U_{\beta}) 
}
$$
is commutative. As a consequence,
$$(\varphi_{\alpha} \circ \varphi_{\beta}^{-1})^{\ast}(\Delta^{\ast}_{\alpha}(u_{\alpha} \otimes v_{\alpha}))=
(\varphi_{\alpha} \circ \varphi_{\beta}^{-1})^{\ast}\circ\Delta^{\ast}_{\alpha}(u_{\alpha} \otimes v_{\alpha})=
(\Delta_{\alpha}\circ\varphi_{\alpha} \circ \varphi_{\beta}^{-1})^{\ast}(u_{\alpha} \otimes v_{\alpha})$$
\begin{equation}\label{lll}
=[((\varphi_{\alpha} \circ \varphi_{\beta}^{-1})\times (\varphi_{\alpha} \circ \varphi_{\beta}^{-1}))\circ \Delta_{\beta}]^{\ast}(u_{\alpha} \otimes v_{\alpha})
= \Delta_{\beta}^{\ast}\circ ((\varphi_{\alpha} \circ \varphi_{\beta}^{-1})\times (\varphi_{\alpha} \circ \varphi_{\beta}^{-1}))^{\ast}(u_{\alpha} \otimes v_{\alpha}),
\end{equation}
where we have used Theorem  2.126 of \cite{ma} in the second equality, the commutation of the previous diagram and Lemma 2.130 of \cite{ma}. 

\par By Lemma \ref{lema6}, Definitions 14 and 15, and equation  \eqref{212}, the last member of \eqref{lll} coincides with
$$\Delta_{\beta}^{\ast}\circ ((\varphi_{\alpha} \circ \varphi_{\beta}^{-1})^{\ast}(u_{\alpha})\otimes (\varphi_{\alpha} \circ \varphi_{\beta}^{-1})^{\ast}( v_{\alpha}))=\Delta_{\beta}^{\ast}(u_{\beta}\otimes \frac{1}{|\det(\varphi_{\alpha} \circ \varphi_{\beta}^{-1})'|} v_{\beta})$$
$$
= \frac{1}{|\det(\varphi_{\alpha} \circ \varphi_{\beta}^{-1})'|} \Delta_{\beta}^{\ast}(u_{\beta}\otimes v_{\beta}).
$$
Hence , by multiplying this equality by  $|\det(\varphi_{\alpha} \circ \varphi_{\beta}^{-1})'|$ we obtain
 $$\Delta_{\beta}^{\ast}(u_{\beta}\otimes v_{\beta})=|\det(\varphi_{\alpha} \circ \varphi_{\beta}^{-1})'|(\varphi_{\alpha} \circ \varphi_{\beta}^{-1})^{\ast}(\Delta^{\ast}_{\alpha}(u_{\alpha} \otimes v_{\alpha})).$$
 \par By equation \eqref{212} we get that

\begin{equation}
\Delta_{\beta}^{\ast}(u_{\beta}\otimes v_{\beta})=(\varphi_{\alpha} \circ \varphi_{\beta}^{-1})^{\bullet}(\Delta^{\ast}_{\alpha}(u_{\alpha} \otimes v_{\alpha})).
\end{equation}

\par Hence, the family $\{\Delta_{\alpha}^{\ast}(u_{\alpha}\otimes v_{\alpha})\}_{\alpha\in \mathbb{N}}$ defines an element of $D_{d}(\mathcal{M})$ which we will denote  $u\cdot v$. Indeed we have just proved the following proposition.

\begin{prop} There exist an action of $H_{d}(\mathcal{M})$ over $D_{d}(\mathcal{M})$ which we note
\begin{equation}\label{hd}
\xymatrix{H_{d}(\mathcal{M})\times D_{d}(\mathcal{M})\ar[r] & D_{d}(\mathcal{M}) \\
(u,v) \ar[r] & u\cdot v }
\end{equation}
provided that for each $\alpha \in \mathbb{N}$, there are not points $(x,\xi)$ in $WF(u_{\alpha})$ such that  $(x,-\xi)$ is in $WF(v_{\alpha})$. 
\end{prop}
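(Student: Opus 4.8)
The plan is to assemble the pieces developed above into the two things that a well-defined action requires: that $u\cdot v$ is an honest element of $D_d(\mathcal{M})$ for each admissible pair $(u,v)$, and that the assignment is compatible with the algebraic structures on $H_d(\mathcal{M})$ and $D_d(\mathcal{M})$. Since almost all of the analytic content has already been produced in the paragraphs preceding the statement, the proof is largely a matter of recording why those computations establish exactly the two assertions.

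First I would fix $(u,v)=(\{u_\alpha\},\{v_\alpha\})$ satisfying the stated wavefront hypothesis and check that each local expression $\Delta_\alpha^\ast(u_\alpha\otimes v_\alpha)$ is a genuine distribution. By Theorem \ref{hhhh} this reduces to verifying $N_{\Delta_\alpha}\cap WF(u_\alpha\otimes v_\alpha)=\emptyset$. The differential of the diagonal $\Delta_\alpha$ sends a cotangent vector $(\xi,\eta)$ to $\xi+\eta$, so $N_{\Delta_\alpha}$ consists exactly of the points $(x,x;\xi,-\xi)$; combined with the standard description of $WF(u_\alpha\otimes v_\alpha)$, emptiness of this intersection is precisely the condition that there be no $(x,\xi)\in WF(u_\alpha)$ with $(x,-\xi)\in WF(v_\alpha)$, which is the hypothesis. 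Hence every $\Delta_\alpha^\ast(u_\alpha\otimes v_\alpha)$ is defined. I would then invoke the covariance computation carried out immediately before the statement: using the commuting diagram for the diagonal maps, the naturality of the pullback under composition, and Lemma \ref{lema6} together with the transformation rule \eqref{212}, the family $\{\Delta_\alpha^\ast(u_\alpha\otimes v_\alpha)\}_{\alpha}$ was shown to satisfy \eqref{2211}. This is exactly the defining cocycle condition for $D_d(\mathcal{M})$, so $u\cdot v:=\{\Delta_\alpha^\ast(u_\alpha\otimes v_\alpha)\}_{\alpha}$ is a well-defined D-distribution, and bilinearity of $(u,v)\mapsto u\cdot v$ is inherited from the bilinearity of the exterior tensor product and the linearity of the pullback.

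Finally, to justify the word \emph{action}, I would check compatibility with Hörmander's product on $H_d(\mathcal{M})$: that the constant distribution $1$ acts as the identity and that $(u_1\cdot u_2)\cdot v=u_1\cdot(u_2\cdot v)$ whenever all the wavefront conditions are met. Both statements are local, so they reduce to identities in $\mathcal{D}'(\varphi_\alpha(U_\alpha))$ that follow from the associativity of the underlying pointwise product of distributions and the naturality of the diagonal pullback. I expect this associativity step to be the main obstacle: one must first argue that the relevant wavefront sets stay disjoint so that every product in sight is legitimately defined, and only then reduce to local associativity, tracking the iterated diagonal pullbacks carefully through the triple product. If, on the other hand, the intended reading of \emph{action} is simply the well-definedness of the bilinear map into $D_d(\mathcal{M})$, then the first two paragraphs already constitute the complete argument, and the proof amounts to citing the preceding covariance computation.
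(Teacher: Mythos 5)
Your proposal follows the paper's own route: the paper's proof of this proposition is exactly the computation preceding the statement, namely local existence of $\Delta_{\alpha}^{\ast}(u_{\alpha}\otimes v_{\alpha})$ via Theorem \ref{hhhh} under the stated wavefront hypothesis, followed by the covariance argument (the commuting diagram for the diagonal maps, Lemma \ref{lema6}, and equation \eqref{212}) establishing the transformation rule \eqref{2211}, which is what your first two paragraphs record. Your closing hedge is also the correct reading: the paper uses the word \emph{action} only in the sense of a well-defined map $H_{d}(\mathcal{M})\times D_{d}(\mathcal{M})\to D_{d}(\mathcal{M})$ and never verifies an identity element or associativity, so your first two paragraphs already constitute the complete argument as the paper intends it.
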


\begin{rem}\label{rem}
\par Lastly we want mention that there is an alternative definition of H-distributions on a manifold which was introduced  by \cite{hor} or \cite{ma},  as elements of the continuous dual of $\Gamma_{c}(\omega)$ (the compact support sections of the density bundle of $\mathcal{M}$ (the manifold)) respect to a topology given by seminorms (this topology is explicitly defined in both text and was described here in the previous section) which makes of $\Gamma_{c}(\omega)$  a Fréchet space. 
\par On the other hand if we follow the text \cite{di} the distributions were introduced as elements of the continuous dual of $\mathcal{C}^{\infty}_{c}(\mathcal{M})$ (in \cite{hor} these were called distribution densities), we will call these D-distributions.
\par The relation between this and the above definitions is clearly established in both texts, but we say that for any family $\{u_{\alpha}\}_{\alpha\in \mathbb{N}}$ which satisfies the Definition \ref{def12} there exists only one element $\tilde{u}$ in the continuous dual of $\Gamma_{c}(\omega)$ such that locally works in the following way,
 \begin{equation*}
 \tilde{u}(\delta)=\int_{\varphi_{\alpha}(U_{\alpha})}u_{\alpha}(\varphi_{\alpha}^{-1}) ^{\ast}\delta
 \end{equation*}
where $\delta$ is an element of  $\Gamma_{c}(\omega)$ whose support is in $U_{\alpha}$. Using the Definition \ref{def12} one sees that $\tilde{u}$ is well defined independently of $U_{\alpha}$. Similar considerations are valid for the D-distributions. We invite the reader to deepen on these topics in the aforementioned books. 
\end{rem}



\chapter{Basic facts on Feynman measures} 

\section{Lagrangian formulation}

\par We begin by expressing in mathematical terminology certain ideas that are common in physics. The manifolds will allways be $T_{2}$ and satisfy the second axiom of countability in this thesis and consequently they are paracompact and a have locally finite partition of the unity.
\par A relation $\mathcal{R}$ in a topological space $X$ is said to be \emph{closed} if the set  $\{ (x,y)\in X^{2} / x\mathcal{R}y \} \subseteq X^{2}$ is closed for the product topology of $X^{2}$. 

\begin{definicion}\label{space}
A \emph{spacetime} will be a smooth finite-dimensional manifold $\mathcal{M}$, together with a  closed, reflexive and transitive relation  $\preceq$. Two points in the spacetime will be called \emph{spacelike separated} if $x\npreceq y$ and $y\npreceq x$.
\end{definicion}

\begin{ej}\label{ej1}
A first example of spacetime is $\mathbb{R}$ with the usual manifold structure and the classical order $\leqslant$ relation, which is obviously reflexive and transitive. The relation $\leqslant$ is also closed because the set $\{ (x,y) \in \mathbb{R}^{2} / x\leqslant y \}$ has as complement the positivity set of the continuous function $f(x,y)=x-y$. In this example there are no spacelike separated points, for $\leqslant$ is a total order. $\Box$
\end{ej}

\begin{ej}\label{ej2}
 We shall now consider the space $\mathbb{R}^{1,3}$, also denoted by $\mathbb{M}^{4}$, and called the \emph{Minkowski spacetime}. Its underlying manifold is $\mathbb{R}^{4}$ and the relation is given as follows. Define the bilinear form $\langle - ; - \rangle:\mathbb{R}^{4}\times\mathbb{R}^{4}\to \mathbb{R}$  
 by $\langle z^{1},z^{2}\rangle=z^{1}_{0}z^{2}_{0}-z^{1}_{1}z^{2}_{1}-z^{1}_{2}z^{2}_{2}-z^{1}_{3}z^{2}_{3}$ where $z^{i}=(z^{i}_{0},z^{i}_{1},z^{i}_{2},z^{i}_{3})$ for $i=1,2$.
 Then the relation is the following, $z_{1}\preceq z_{2}$ iff $\langle z_{2}-z_{1};z_{2}-z_{1}\rangle \geqslant 0$ and $z^{2}_{0}\geqslant z^{1}_{0}$. 
\par The relation $\preceq$ is clearly reflexive. It is transitive, because if $z^{3}=(z^{3}_{0},z^{3}_{1},z^{3}_{2},z^{3}_{3})$ and $z_{1}\preceq z_{2}$ and $z_{2}\preceq z_{3}$, then: 
\begin{align*}
&  (z^{3}-z^{1};z^{3}-z^{1})=(z^{3}_{0}-z^{1}_{0})^{2}-(z^{3}_{1}-z^{1}_{1})^{2}-(z^{3}_{2}-z^{1}_{2})^{2}-(z^{3}_{3}-z^{1}_{3})^{2} \\
&  =  (z^{3}_{0}-z^{2}_{0}+z^{2}_{0}-z^{1}_{0})^{2}-(z^{3}_{1}-z^{2}_{1}+z^{2}_{1}-z^{1}_{1})^{2}\\
& -(z^{3}_{2}-z^{2}_{2}+z^{2}_{2}-z^{1}_{2})^{2}-(z^{3}_{3}-z^{2}_{3}+z^{2}_{3}-z^{1}_{3})^{2} \\
&  =(z^{3}_{0}-z^{2}_{0})^{2}+(z^{2}_{0}-z^{1}_{0})^{2}+2(z^{3}_{0}-z^{2}_{0})(z^{2}_{0}-z^{1}_{0})\\
& -||((z^{3}_{1},z^{3}_{2},z^{3}_{3})-(z^{2}_{1},z^{2}_{2},z^{2}_{3}))+((z^{2}_{1},z^{2}_{2},z^{2}_{3})-(z^{1}_{1},z^{1}_{2},z^{1}_{3}))||^{2}.
\end{align*}

Applying the triangle inequality we have that
\begin{eqnarray*}
&  (z^{3}-z^{1};z^{3}-z^{1})\geqslant (z^{3}_{0}-z^{2}_{0})^{2}+(z^{2}_{0}-z^{1}_{0})^{2}+2(z^{3}_{0}-z^{2}_{0})(z^{2}_{0}-z^{1}_{0})\\
& -||(z^{3}_{1},z^{3}_{2},z^{3}_{3})-(z^{2}_{1},z^{2}_{2},z^{2}_{3})||^{2}-||(z^{2}_{1},z^{2}_{2},z^{2}_{3})-(z^{1}_{1},z^{1}_{2},z^{1}_{3})||^{2} \\
& = [(z^{3}_{0}-z^{2}_{0})^{2}-||(z^{3}_{1},z^{3}_{2},z^{3}_{3})-(z^{2}_{1},z^{2}_{2},z^{2}_{3})||^{2}]\\
& +[(z^{2}_{0}-z^{1}_{0})^{2}-||(z^{2}_{1},z^{2}_{2},z^{2}_{3})-(z^{1}_{1},z^{1}_{2},z^{1}_{3})||^{2}]+2(z^{3}_{0}-z^{2}_{0})(z^{2}_{0}-z^{1}_{0})\\
& =(z^{3}-z^{2};z^{3}-z^{2})+(z^{2}-z^{1};z^{2}-z^{1})+2(z^{3}_{0}-z^{2}_{0})(z^{2}_{0}-z^{1}_{0}).
\end{eqnarray*}
 The three last summands are $\geqslant 0$ because of the hypotheses $z_{1}\preceq z_{2}$ and $z_{2}\preceq z_{3}$. The transitivity of $\preceq$, follows from the fact that $z^{3}_{0}\geqslant z^{1}_{0}$.
 
 \par This relation is also closed, because the complement of
  $\{ (z^{1},z^{2})\in \mathbb{R}^{8}/ z^{1}\preceq z^{2} \}$ is the preimage of the open set $\{(x,y)\in \mathbb{R}^{2}/x>0 \text{ or }y>0 \}$ by the continuous function $f:\mathbb{R}^{8}\to \mathbb{R}^{2}$ defined by $f (z^{1},z^{2})=(||(z^{1}_{1},z^{1}_{2},z^{1}_{3})-(z^{2}_{1},z^{2}_{2},z^{2}_{3})||^{2}-(z^{1}_{0}-z^{2}_{0})^{2},z^{1}_{0}-z^{2}_{0})$.\hspace{10.8cm}$\Box$ 
 \end{ej}


\par As we mentioned in Section 1.2, if $\mathcal{M}$ is a spacetime then it has an structure $(\mathcal{M}, \mathcal{C}^{\mathbb{R}})$ of ringed space over $\mathbb{R}$. Let  $(E,p,\mathcal{M})$ be a  finite dimensional complex vector bundle over $\mathcal{M}$ and $\Phi$ the sheaf of modules over the ringed space $(\mathcal{M}, \mathcal{C}^{\mathbb{R}})$ associated to it (see Lemma \ref{lemshe}). This vector bundle and the corresponding sheaf are fixed from now on. It will be called the \emph{sheaf of classical fields} over  $\mathcal{M}$ and its local sections, i.e. elements of $\Phi(U)=\Gamma(U,E)$ where $U$ is open in $\mathcal{M}$, are called \emph{classical fields}.

\begin{ej}(Classical mechanics)
Consider the spacetime with the underlying manifold $\mathbb{R}^{3}$ and the identity relation, and the vector bundle $\pi:T\mathbb{R}^{3}\to \mathbb{R}^{3}$ given by the tangent bundle. A classical field in classical mechanics is a section of the previous vector bundle. \hspace{7.5cm} $\Box$
\end{ej}

\begin{ej}\label{ejcamppoescalar} (Classical field theory) In a classical field theory, one usually considers the trivial line vector bundle over the Minkowski spacetime $\pi:\mathbb{M}^{4}\times \mathbb{R}\to\mathbb{M}^{4}$. A classical field in this situation is just a section of $\pi$. 
So a classical field $\Phi$ is given by  a function  $\phi:\mathbb{M}^{4}\to \mathbb{R}$.
\par We recall that the d'Alembertian operator $\Box$ on a function $g:\mathbb{M}^{4}\to \mathbb{R}$ is given by 
$\Box g=\frac{\partial^{2}g}{\partial t^{2}}-\frac{\partial^{2}g}{\partial x_{1}^{2}}-\frac{\partial^{2}g}{\partial x_{2}^{2}}-\frac{\partial^{2}g}{\partial x_{3}^{2}}$. A \emph{massive scalar field of mass $m$} is a classical field $\varPhi(z)=(z,\phi(z))$ such that $\phi$ satisfies the \emph{Klein-Gordon equation}  $(-\Box +m^{2})\phi=0$.\hspace{12.1cm} $\Box$ 
\end{ej}

\begin{ej}\label{electro}(Electromagnetism)
 Let $\mathbb{M}^{4}$ be the Minkowski spacetime and $\Omega^{2}\mathbb{M}^{4}$ be the vector bundle of antisymmetric two forms over $\mathbb{M}^{4}$. If $U\subseteq \mathbb{M}^{4}$ is an open set, a local section of $\Omega^{2}\mathbb{M}^{4}$ over $U$ is of the form $$F(t,\vec{x})=\frac{1}{2}F_{\mu \nu}(t,\vec{x})dx^{\mu}\wedge dx^{\nu}$$ and is a classical field (where the Einstein's summation convention and the notation $x^{0}=t$, $x^{1}=x_{1}$,$x^{2}=x_{2}$ and $x^{3}=x_{3}$ were used). 

  
\par We define the electric $E:U\to \mathbb{R}^{3}$ and the magnetic $B:U\to \mathbb{R}^{3}$ fields by  
  
 \begin{eqnarray*}
   E_{x}(t,\vec{x})=F_{0 1}(t,\vec{x}), &  E_{y}(t,\vec{x})=F_{0 2}(t,\vec{x}), &  E_{z}(t,\vec{x})=F_{0 2}(t,\vec{x}),\\
    B_{z}(t,\vec{x})=-F_{1 2}(t,\vec{x}), &  B_{y}(t,\vec{x})=F_{1 3}(t,\vec{x}), & B_{x}(t,\vec{x})=-F_{2 3}(t,\vec{x}).
 \end{eqnarray*} 
 
 
 \par The classical Maxwell's equations in electromagnetism are written in the form
 
 \begin{align*}
 & \partial_{\eta} F_{\mu \nu} +\partial_{\mu} F_{\nu \eta}+\partial_{\nu} F_{\eta \mu}=0 & & \text{ and } & & \partial_{\nu}F^{\mu \nu}=j^{\mu},  
 \end{align*}
where $j^{o}(t,\vec{x})=\rho(t,\vec{x})$ is the charge density, $j^{i}$ the density current in the direction of $x_{i}$ for $1\leqslant i \leqslant 3$, and $F^{\mu \nu}= g^{\mu \alpha}g^{\nu \beta}F_{\alpha \beta}$ with $g$ the metric tensor $g=diag(-1,1,1,1)$. \hspace{11cm}$\Box$
 \end{ej}
 
\begin{definicion}
 Given the vector bundle $(E,p,\mathcal{M})$ over the spacetime $\mathcal{M}$ and $\Phi$ the associated sheaf, the sheaf of jets $J\Phi$ constructed after Lemma \ref{lema5} is called the sheaf of \emph{derivatives of classical fields}.
\end{definicion}
 

\begin{definicion}
The sheaf $S_{\mathcal{C}^{\mathbb{R}}}J\Phi$ given by the symmetric construction of the sheaf of modules $J\Phi$ over the ringed space $(\mathcal{M},\mathcal{C}^{\mathbb{R}})$ is called the \emph{sheaf of Lagrangians} (or \emph{composite fields}).
\end{definicion}	

\par If $\phi$ is a massive  scalar field of mass $m$ as in Example \ref{ejcamppoescalar}, then

\begin{equation*}
l_{1}=\frac{1}{2}\partial_{\mu}\phi\odot\partial^{\mu}\phi-\frac{1}{2}m^{2}\phi^{\odot 2}
\end{equation*}
 is an example of Lagrangian, where we have used Einstein's convention for the sum. 
We will usually omit the symmetric product to lighten the notation and just write $l_{1}=\frac{1}{2}\partial_{\mu}\phi\partial^{\mu}\phi-\frac{1}{2}m^{2}\phi^{2}$.

\par An example of Lagrangian for Example \ref{electro} is 
\begin{equation*}
l_{2}= F_{\mu,\nu} F^{\mu,\nu}.
\end{equation*} 

\par We will give a few more examples. If $\varphi$ and $\psi$ are classical fields over an spacetime $\mathcal{M}$ (see Example \ref{ejcamppoescalar}) and $f,g$ are smooth functions over $\mathcal{M}$; then
\begin{eqnarray*}
& l_{3}=\overline{\psi} \psi, \\
& l_{4}=f \varphi \partial_{1}\psi+g\varphi^{2} \partial_{1}\partial_{3}\psi,\\
& l_{5}=(f\varphi)\psi=\varphi(f\psi)=f(\varphi\psi),
\end{eqnarray*}
 are examples of Lagrangians.
\par Any combination of this Lagrangians with coefficients in the ring of smooth functions over the spacetime is also a Lagrangian. 

\par We recall that we denote by $\omega$ the sheaf of densities of the  spacetime.

\begin{definicion}\label{def5} 
The \emph{sheaf of Lagrangian densities} is $\omega SJ\Phi$. A Lagrangian density is any of its sections.
\end{definicion}

\par For example, if $\mathcal{M}$ is orientable of dimension $n$ and $\varphi$ is a classical field with support contained in an open set $U$ of $\mathcal{M}$, we can regard $d^{n}x$ (where $(U,x)$ is a chart) as a local section of the density bundle $\omega$ which together with any Lagrangian over $U$ can produce a Lagrangian density $\mathcal{L}$ as for example
\begin{equation*}
 \mathcal{L}=(\varphi+m\varphi^{2}\partial\varphi)d^{n}x,
\end{equation*}
where $m$ is a constant. 

\begin{definicion}\label{noloc}
A \emph{non-local action} is an element of the symmetric algebra $S\Gamma \omega SJ\Phi$ of the $\mathbb{R}$-vector space of global sections of the sheaf $\omega SJ\Phi$.
\end{definicion}


 
\begin{rem}
In order to topologize  the symmetric algebra $S\Gamma \omega SJ\Phi$ of the $\mathbb{R}$-vector space of global sections of the sheaf $\omega SJ\Phi$, we must first consider the algebra $T\Gamma \omega SJ\Phi$. And as we saw in Section \ref{1.2.2} there are two ways of topologize $T^{k}\Gamma \omega SJ\Phi$ ( for $k\in \mathbb{N}$) one considering the projective $\otimes_{\pi}$ topology over the tensor product and the second one considering its completation $\hat{\otimes}_{\pi}$.
\par We can choose the topology we want for $T^{k}\Gamma \omega SJ\Phi$, but if we take the topology given by $\hat{\otimes}_{\pi}$ then the two ways of topologize  $S^{k}\Gamma \omega SJ\Phi$ will be homeomorphic as we saw in Section \ref{jhjhjh}. 
\end{rem} 
  
\par The proofs for the next facts can be found in \cite{con}, Theorem 7.5.5 and Corollary 7.5.6.

\begin{thm}\label{df}
Let $E$ and $F$ be two finite dimensional complex vector bundles $\mathcal{M}$. The $\mathcal{C}^{\infty}(\mathcal{M})$-linear map $\alpha:\Gamma(E)\otimes_{\mathcal{C}^{\infty}(\mathcal{M})}\Gamma(F) \to \Gamma(E\otimes F) $, defined by $\alpha(s\otimes t)(x)=s(x)\otimes t(x)\in E_{x}\otimes F_{x}=(E\otimes F)_{x}$, where $s\in \Gamma(E)$, $t\in \Gamma(F)$ and $x\in \mathcal{M}$, is a canonical isomorphism of $\mathcal{C}^{\infty}(\mathcal{M})$-modules.
\end{thm}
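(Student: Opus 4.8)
The plan is to reduce the statement to the case of trivial bundles, exploiting that both sides are additive functors in $E$ (and in $F$) and that $\alpha$ is a natural transformation between them. First I would verify that $\alpha$ is well defined and $\mathcal{C}^{\infty}(\mathcal{M})$-linear. The assignment $(s,t)\mapsto (x\mapsto s(x)\otimes t(x))$ from $\Gamma(E)\times\Gamma(F)$ to $\Gamma(E\otimes F)$ is $\mathcal{C}^{\infty}(\mathcal{M})$-balanced, since for $f\in\mathcal{C}^{\infty}(\mathcal{M})$ both $(fs,t)$ and $(s,ft)$ produce the same section $f\cdot(s\otimes t)$; hence it factors through $\Gamma(E)\otimes_{\mathcal{C}^{\infty}(\mathcal{M})}\Gamma(F)$, and the induced map $\alpha$ is $\mathcal{C}^{\infty}(\mathcal{M})$-linear. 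Naturality of $\alpha$ in both arguments is immediate from the pointwise definition.

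Next I would record the bookkeeping facts that make the reduction work: $\Gamma$ carries finite direct sums of bundles to direct sums of modules, $\Gamma(E_{1}\oplus E_{2})\cong\Gamma(E_{1})\oplus\Gamma(E_{2})$; and both the bundle tensor product and the module tensor product over $\mathcal{C}^{\infty}(\mathcal{M})$ distribute over finite direct sums, so that $(E_{1}\oplus E_{2})\otimes F\cong(E_{1}\otimes F)\oplus(E_{2}\otimes F)$ and correspondingly on sections. Under these identifications, and by naturality, $\alpha_{E_{1}\oplus E_{2},F}$ is carried to $\alpha_{E_{1},F}\oplus\alpha_{E_{2},F}$, and symmetrically in the second variable. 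Consequently, if $(E,F)$ is a direct summand of a bundle pair on which $\alpha$ is an isomorphism, then $\alpha_{E,F}$ is itself an isomorphism, because a block-diagonal map is bijective if and only if each diagonal block is.

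The trivial case is then elementary. If $E=\mathcal{M}\times\mathbb{C}^{N}$ and $F=\mathcal{M}\times\mathbb{C}^{M}$, then $\Gamma(E)\cong\mathcal{C}^{\infty}(\mathcal{M})^{N}$ and $\Gamma(F)\cong\mathcal{C}^{\infty}(\mathcal{M})^{M}$ are free, $E\otimes F\cong\mathcal{M}\times\mathbb{C}^{NM}$, and under these identifications $\alpha$ becomes the canonical isomorphism $\mathcal{C}^{\infty}(\mathcal{M})^{N}\otimes_{\mathcal{C}^{\infty}(\mathcal{M})}\mathcal{C}^{\infty}(\mathcal{M})^{M}\cong\mathcal{C}^{\infty}(\mathcal{M})^{NM}$ sending $e_{i}\otimes e_{j}$ to $e_{ij}$.

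The key input, and the main obstacle, is the classical Serre--Swan embedding for smooth manifolds: over a paracompact manifold every finite dimensional vector bundle is a direct summand of a trivial one, so there exist $E'$, $F'$ and integers $N$, $M$ with $E\oplus E'\cong\mathcal{M}\times\mathbb{C}^{N}$ and $F\oplus F'\cong\mathcal{M}\times\mathbb{C}^{M}$. Our manifolds are $T_{2}$ and second countable, hence paracompact, so this applies. Granting it, $\alpha_{E\oplus E',F\oplus F'}$ is an isomorphism by the trivial case, it decomposes as the direct sum of the four maps $\alpha_{E,F}$, $\alpha_{E,F'}$, $\alpha_{E',F}$, $\alpha_{E',F'}$ by the additivity step, and therefore each summand, in particular $\alpha_{E,F}$, is an isomorphism. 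The only genuine work is producing (or citing, as the text does via \cite{con}) the Serre--Swan embedding, which itself rests on a partition-of-unity argument yielding a fibrewise injective bundle map $E\to\mathcal{M}\times\mathbb{C}^{N}$ and a complementary bundle $E'$; one could alternatively prove surjectivity and injectivity of $\alpha$ directly by patching local trivializations with a subordinate partition of unity, but the summand reduction is cleaner once the embedding is available.
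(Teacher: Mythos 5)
Your proof is correct, and it fills a gap rather than duplicating anything: the paper gives no proof of this theorem at all, deferring to \cite{con}, Theorem 7.5.5 and Corollary 7.5.6, where the result is established by essentially the reduction you describe — additivity of both sides in each variable, the trivial-bundle case, and the realization of any bundle as a direct summand of a trivial one. Two fine points deserve attention before your argument is airtight. First, in the trivial case you must be consistent about the coefficient ring: the paper's $\mathcal{C}^{\infty}(\mathcal{M})$ denotes \emph{real}-valued functions, so for complex bundles the fibrewise tensor product has to be taken over $\mathbb{R}$ (as the paper indicates elsewhere, e.g.\ when discussing $\omega S^{l}J\Phi$), or else one must tensor the modules over $\mathcal{C}^{\infty}(\mathcal{M},\mathbb{C})$; mixing conventions (fibrewise $\otimes_{\mathbb{C}}$ against $\otimes_{\mathcal{C}^{\infty}(\mathcal{M},\mathbb{R})}$) makes the statement false already for $\mathcal{M}$ a point and $E=F=\mathbb{C}$, since $\mathbb{C}\otimes_{\mathbb{R}}\mathbb{C}$ has real dimension $4$ while $\mathbb{C}\otimes_{\mathbb{C}}\mathbb{C}$ has real dimension $2$. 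Your identification $\Gamma(\mathcal{M}\times\mathbb{C}^{N})\cong\mathcal{C}^{\infty}(\mathcal{M})^{N}$ silently uses complex coefficients; the block structure of the argument is unaffected either way, only the ranks change. Second, paracompactness alone is not enough to embed $E$ as a summand of a \emph{finite-rank} trivial bundle (a partition of unity with infinitely many members only gives an embedding into a trivial bundle of infinite rank); what one actually needs is a finite trivializing cover, which holds here because $\mathcal{M}$ is second countable and finite dimensional, hence of finite covering dimension. With these two adjustments your argument is a complete and correct proof, and it is the same argument the paper's cited source uses.
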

As a corollary of this theorem we have
\begin{coro}\label{dm}
Given a finite dimensional complex vector bundle $E$ over $\mathcal{M}$, there exists a canonical isomorphism of $\mathcal{C}^{\infty}(\mathcal{M})$-modules between $\Gamma(S^{k}E)$ and $S^{k}_{\mathcal{C}^{\infty}(\mathcal{M})}\Gamma (E)$
for all $k\in \mathbb{N}_{0}$.
\end{coro}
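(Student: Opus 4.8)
The plan is to bootstrap from Theorem~\ref{df} by iterating it and then extracting the symmetric part, using that we work over a field of characteristic zero. The cases $k=0$ and $k=1$ are immediate, since $\Gamma(S^{0}E)=\mathcal{C}^{\infty}(\mathcal{M})=S^{0}_{\mathcal{C}^{\infty}(\mathcal{M})}\Gamma(E)$ and $S^{1}=\mathrm{Id}$, so I would assume $k\geqslant 2$.

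First I would apply Theorem~\ref{df} repeatedly, together with the associativity of $\otimes_{\mathcal{C}^{\infty}(\mathcal{M})}$, to produce a canonical isomorphism of $\mathcal{C}^{\infty}(\mathcal{M})$-modules
\[
\alpha^{(k)}\colon \Gamma(E)^{\otimes_{\mathcal{C}^{\infty}(\mathcal{M})}k}\xrightarrow{\ \sim\ }\Gamma(E^{\otimes k}),\qquad s_{1}\otimes\cdots\otimes s_{k}\mapsto\bigl(x\mapsto s_{1}(x)\otimes\cdots\otimes s_{k}(x)\bigr).
\]
The crucial observation is that $\alpha^{(k)}$ is $\mathbb{S}_{k}$-equivariant: on the left the group permutes the tensor factors, while on the right it acts by the bundle automorphism of $E^{\otimes k}$ permuting the fibrewise factors, and the explicit formula above shows at once that these two actions correspond under $\alpha^{(k)}$.

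Next I would use that $\mathcal{C}^{\infty}(\mathcal{M})$ is an $\mathbb{R}$-algebra, so $k!$ is invertible and the symmetrizer $\mathrm{sym}=\frac{1}{k!}\sum_{\sigma\in\mathbb{S}_{k}}\sigma$ is a well-defined idempotent, both as an endomorphism of the bundle $E^{\otimes k}$ and as a $\mathcal{C}^{\infty}(\mathcal{M})$-linear endomorphism of $\Gamma(E)^{\otimes k}$. In characteristic zero the composite $\mathrm{im}(\mathrm{sym})\hookrightarrow E^{\otimes k}\twoheadrightarrow S^{k}E$ is an isomorphism, so $S^{k}E$ is realized as the direct summand $\mathrm{im}(\mathrm{sym})$ of $E^{\otimes k}$; the identical argument over $\mathcal{C}^{\infty}(\mathcal{M})$ identifies $S^{k}_{\mathcal{C}^{\infty}(\mathcal{M})}\Gamma(E)$ with $\mathrm{im}(\mathrm{sym})\subseteq\Gamma(E)^{\otimes k}$.

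Finally, since $\Gamma(-)$ is an additive functor it respects the idempotent splitting $E^{\otimes k}=\mathrm{im}(\mathrm{sym})\oplus\ker(\mathrm{sym})$, whence $\Gamma(S^{k}E)=\mathrm{im}\bigl(\Gamma(\mathrm{sym})\bigr)$ inside $\Gamma(E^{\otimes k})$. Because $\alpha^{(k)}$ intertwines the two copies of $\mathrm{sym}$, it restricts to an isomorphism between their images, giving the desired canonical isomorphism $S^{k}_{\mathcal{C}^{\infty}(\mathcal{M})}\Gamma(E)\xrightarrow{\ \sim\ }\Gamma(S^{k}E)$. The main obstacle I anticipate is the module-side bookkeeping: checking that over $\mathcal{C}^{\infty}(\mathcal{M})$, rather than over a field, the symmetrizer image genuinely coincides with the symmetric quotient (this is exactly where invertibility of $k!$ enters), and verifying the equivariance of $\alpha^{(k)}$ precisely enough to license restricting it to the summands.
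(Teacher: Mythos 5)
Your proof is correct and is essentially the intended derivation: the paper supplies no argument of its own here (it defers to \cite{con}, Theorem 7.5.5 and Corollary 7.5.6), and the standard route is exactly yours --- iterate Theorem \ref{df} to obtain an $\mathbb{S}_{k}$-equivariant isomorphism $\Gamma(E)^{\otimes_{\mathcal{C}^{\infty}(\mathcal{M})}k}\cong\Gamma(E^{\otimes k})$, then use invertibility of $k!$ to split off the symmetric part as the image of the symmetrizer idempotent on both the bundle side and the module side. Your closing caveat is also the right one to discharge: over any ring in which $k!$ is invertible (such as $\mathcal{C}^{\infty}(\mathcal{M})\supseteq\mathbb{R}$) the ideal generated by $v-\sigma v$ equals $\ker(\mathrm{sym})=\mathrm{im}(\mathrm{Id}-\mathrm{sym})$, so the symmetrizer image coincides with the symmetric quotient, and the restriction of your $\alpha^{(k)}$ to these summands gives the claimed canonical isomorphism.
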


\par In Section \ref{jets} we told that $J\Phi$ denote the vector bundle $J^{k}\Phi$ for some $k\in \mathbb{N}\cup \{ \infty \}$ and unless $k=\infty$,  $J^{k}\Phi$ is a finite dimensional vector bundle over $\mathcal{M}$ and we can apply the Theorem \ref{df} and Corollary \ref{dm} to it.
 
\par We want to mention a result similar to Theorem \ref{df}, whose proof involves the Serre-Swan Theorem for non compact manifolds, which establishes that the $\mathcal{C}^{\infty}(\mathcal{M})$-module of global sections of a smooth vector bundle over a manifold $\mathcal{M}$ is projective over $\mathcal{C}^{\infty}(\mathcal{M})$ (see \cite{serre}).  

\begin{prop}\label{uhuhuh}
 Given $E$ and $F$ two finite dimensional complex vector bundles over a manifold $\mathcal{M}$, the restriction of the morphism $\alpha$ of Theorem \ref{df} establishes an isomorphism $\alpha| :\Gamma_{c}(E)\otimes_{\mathcal{C}^{\infty}(\mathcal{M})} \Gamma(F)\to \Gamma_{c}(E\otimes F)$.
\end{prop}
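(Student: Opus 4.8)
The plan is to deduce the statement from Theorem \ref{df} by combining a simple support argument, the flatness of $\Gamma(F)$, and a cut-off trick, rather than re-proving anything about sections from scratch.

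First I would verify that $\alpha$ genuinely restricts to a map with the claimed codomain. Since $\alpha(s\otimes t)(x)=s(x)\otimes t(x)$, the section $\alpha(s\otimes t)$ vanishes wherever $s$ does, so $\mathrm{supp}\,\alpha(s\otimes t)\subseteq\mathrm{supp}\,s$. Hence if $s\in\Gamma_{c}(E)$ then $\alpha(s\otimes t)\in\Gamma_{c}(E\otimes F)$, and by $\mathbb{R}$-linearity $\alpha$ sends $\Gamma_{c}(E)\otimes_{\mathcal{C}^{\infty}(\mathcal{M})}\Gamma(F)$ into $\Gamma_{c}(E\otimes F)$; I denote this restriction by $\alpha|$.

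For injectivity I would use that $\Gamma(F)$ is a projective, hence flat, $\mathcal{C}^{\infty}(\mathcal{M})$-module by the Serre--Swan theorem recalled just before the statement. The inclusion $\Gamma_{c}(E)\hookrightarrow\Gamma(E)$ is injective, so tensoring over $\mathcal{C}^{\infty}(\mathcal{M})$ with the flat module $\Gamma(F)$ preserves this injection and gives $\Gamma_{c}(E)\otimes_{\mathcal{C}^{\infty}(\mathcal{M})}\Gamma(F)\hookrightarrow\Gamma(E)\otimes_{\mathcal{C}^{\infty}(\mathcal{M})}\Gamma(F)$. Composing with the isomorphism $\alpha$ of Theorem \ref{df} shows that $\alpha|$ is injective. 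For surjectivity, I would take $w\in\Gamma_{c}(E\otimes F)$ with compact support $K$ and, using that $\alpha$ is onto, write $w=\alpha\big(\sum_{i}s_{i}\otimes t_{i}\big)$ with $s_{i}\in\Gamma(E)$ and $t_{i}\in\Gamma(F)$. Choosing a cut-off $\chi\in\mathcal{C}^{\infty}_{c}(\mathcal{M})$ equal to $1$ on a neighbourhood of $K$ (such a function exists since $\mathcal{M}$ is a smooth paracompact manifold), one has $\chi w=w$, and the $\mathcal{C}^{\infty}(\mathcal{M})$-linearity of $\alpha$ yields $w=\chi\,\alpha\big(\sum_{i}s_{i}\otimes t_{i}\big)=\alpha\big(\sum_{i}(\chi s_{i})\otimes t_{i}\big)$, where each $\chi s_{i}\in\Gamma_{c}(E)$. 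Thus $w$ lies in the image of $\alpha|$.

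The only non-formal point is the injectivity step, whose crux is the flatness of $\Gamma(F)$; the remaining arguments are elementary support and cut-off bookkeeping. I expect the main thing to be careful about is to invoke Serre--Swan (projective $\Rightarrow$ flat) cleanly, rather than attempting to prove injectivity of the tensored inclusion by hand, since tensor product need not preserve injections for a non-flat second factor.
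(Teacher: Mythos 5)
Your proof is correct, but it takes a genuinely different route from the paper's. The paper first proves an auxiliary isomorphism $\overline{\chi}\colon \mathcal{C}_{c}^{\infty}(\mathcal{M})\otimes_{\mathcal{C}^{\infty}(\mathcal{M})}\Gamma(H)\to\Gamma_{c}(H)$ for an arbitrary finite-dimensional bundle $H$ (surjectivity via a bump function, injectivity via an exhaustion by compacts and Urysohn functions), and then deduces the proposition as a chain of isomorphisms: it tensors this lemma for $H=E$ with $\Gamma(F)$ (invoking projectivity of $\Gamma(F)$, though functoriality of $(-)\otimes_{\mathcal{C}^{\infty}(\mathcal{M})}\Gamma(F)$ alone already preserves isomorphisms), applies Theorem \ref{df} in the middle, and applies the lemma once more with $H=E\otimes F$. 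You instead analyse the restricted map $\alpha|$ directly: injectivity follows from flatness of $\Gamma(F)$ (Serre--Swan gives projectivity, hence flatness), which keeps the inclusion $\Gamma_{c}(E)\hookrightarrow\Gamma(E)$ injective after tensoring, composed with the isomorphism $\alpha$ of Theorem \ref{df}; surjectivity follows from a cut-off function. Your route buys two things: Serre--Swan is used exactly where it is indispensable (tensoring an injection that is \emph{not} an isomorphism), and the conclusion is obtained for the specific map $\alpha|$ named in the statement, whereas the paper's chain of isomorphisms produces an isomorphism $\Gamma_{c}(E)\otimes_{\mathcal{C}^{\infty}(\mathcal{M})}\Gamma(F)\simeq\Gamma_{c}(E\otimes F)$ and leaves implicit the (easy) check that this composite agrees with the restriction of $\alpha$. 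What the paper's approach buys in exchange is a reusable identification of $\Gamma_{c}(H)$ as an extension of scalars to $\mathcal{C}_{c}^{\infty}(\mathcal{M})$, which it exploits twice; note also that your cut-off argument for surjectivity is essentially the same bump-function device the paper uses inside its lemma.
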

\begin{proof}
Given a finite dimensional vector bundle $H$ over a manifold $\mathcal{M}$, we consider the map $\chi:\mathcal{C}_{c}^{\infty}(\mathcal{M})\times \Gamma(H)\to \Gamma_{c}(H)$ defined by $\chi(f,\varphi)=f\varphi$ where $f\in \mathcal{C}_{c}^{\infty}(\mathcal{M})$, $\varphi\in \Gamma(H)$ and if $p\in \mathcal{M}$, $f\varphi(p)=f(p)\varphi(p)$, then $f\varphi$ is a compact support section of $H$. The map $\chi$ is clearly $\mathcal{C}^{\infty}(\mathcal{M})$-balanced, hence it factors throughout the tensor product as $\overline{\chi}:\mathcal{C}_{c}^{\infty}(\mathcal{M})\otimes_{\mathcal{C}^{\infty}(\mathcal{M})} \Gamma(H)\to \Gamma_{c}(H)$. 
\par The map $\overline{\chi}$ is an isomorphism. It is surjective because if $\psi\in \Gamma_{c}(H)$ then suppose $U$ is an open set containing the support of $\psi$ and $V$ is a compact set such that $supp(\psi)\subseteq U \subseteq V$, then if we take $f\in \mathcal{C}_{c}^{\infty}(\mathcal{M})$ such that $f$ takes the value $1$ on $U$ and the value zero outside $V$ and $\overline{\chi}(f\otimes \psi)=f\psi=\psi$.
\par  On the other hand the map $\overline{\chi}$ is injective because if $\overline{\chi}(f\otimes \varphi)=0\in \Gamma_{c}(H)$, then there is no point $p\in \mathcal{M}$ such that both $f(p)\neq 0$ and $\varphi(p)\neq 0$. Let us consider an elemental sequence of compact sets $C_{n}\nearrow supp(\phi)$ and using the Urysohn's lemma a sequence of functions $g_{n}\in \mathcal{C}^{\infty}_{c}(\mathcal{M})$ such that $g\equiv 1$ over $C_{n}$ and $g\equiv 0$ over $supp(f)$. Then we have $f\otimes_{\mathcal{C}^{\infty}(\mathcal{M})} \varphi=\lim_{n\to \infty} f\otimes_{\mathcal{C}^{\infty}(\mathcal{M})}g_{n} \varphi=\lim_{n\to \infty} fg_{n}\otimes_{\mathcal{C}^{\infty}(\mathcal{M})} \varphi=0\in \mathcal{C}_{c}^{\infty}(\mathcal{M})\otimes_{\mathcal{C}^{\infty}(\mathcal{M})} \Gamma(H)$, as we want.

\par As the module $\Gamma(F)$ is $\mathcal{C}^{\infty}_{c}(\mathcal{M})$-projective then we can apply the functor $(-)\otimes_{\mathcal{C}^{\infty}_{c}(\mathcal{M})} \Gamma(F)$ to the precedent isomorphism 
$$\overline{\chi}:\mathcal{C}_{c}^{\infty}(\mathcal{M})\otimes_{\mathcal{C}^{\infty}(\mathcal{M})} \Gamma(E)\to \Gamma_{c}(E)$$
taking $H=E$ and still have an isomorphism of $\mathcal{C}^{\infty}_{c}(\mathcal{M})$-modules, $$\mathcal{C}_{c}^{\infty}(\mathcal{M})\otimes_{\mathcal{C}^{\infty}(\mathcal{M})} \Gamma(E)\otimes_{\mathcal{C}^{\infty}(\mathcal{M})} \Gamma(F) \to \Gamma_{c}(E)\otimes_{\mathcal{C}^{\infty}(\mathcal{M})} \Gamma(F). $$
\par Moreover by applying Theorem \ref{df} we have an isomorphism,
$$\mathcal{C}_{c}^{\infty}(\mathcal{M})\otimes_{\mathcal{C}^{\infty}(\mathcal{M})} \Gamma(E\otimes F) \to \Gamma_{c}(E)\otimes_{\mathcal{C}^{\infty}(\mathcal{M})} \Gamma(F) $$
and by applying the above result again using $H=E\otimes F$ then we have the isomorphism
$$ \Gamma_{c}(E\otimes F) \to \Gamma_{c}(E)\otimes_{\mathcal{C}^{\infty}(\mathcal{M})} \Gamma(F) $$
which is what we established.

\end{proof}

\begin{rem}\label{ghgh}
\par In this remark we will apply the above results to the case of sections of the vector bundle $J\Phi$ (always under the hypothesis of finite order jets).
\par Given $k\in \mathbb{N}$, the sections of the vector bundle $S^{k}J\Phi$, that is $\Gamma S^{k}J\Phi$ are isomorphic by Corollary \ref{dm} to $S^{k}_{\mathcal{C}^{\infty}(\mathcal{M})}\Gamma J\Phi$. Moreover by Theorem \ref{df} together with the recent comment we have an isomorphism between $\Gamma \omega S^{k}J\Phi$ and $\Gamma \omega \otimes_{\mathcal{C}^{\infty}(\mathcal{M})} S^{k}_{\mathcal{C}^{\infty}(\mathcal{M})}\Gamma J\Phi$.
\par Lastly by  Proposition \ref{uhuhuh} we have an isomorphism similar to the last one, but very important if we want integrate a local action over the spacetime, $\Gamma_{c} \omega S^{k}J\Phi\simeq\Gamma_{c} \omega \otimes_{\mathcal{C}^{\infty}(\mathcal{M})} S^{k}_{\mathcal{C}^{\infty}(\mathcal{M})}\Gamma J\Phi$.
\par As a consequence a \emph{monomial element} of the space of non-local actions $A\in S^{n}\Gamma \omega S^{k}J\Phi$ with $n\in \mathbb{N}$ can be written $A=\prod_{i=1}^{n} \alpha_{i}\otimes (\odot_{j=1}^{k} l_{j}^{i})$, $\alpha_{i}\in \Gamma(\omega)$ and $l_{j}^{i}\in \Gamma(J\Phi)$, and $\prod$ denotes the product of $SV$  for $V=\Gamma\omega SJ\Phi$. In view of the above remarks the densities $\alpha_{i}\in \Gamma(\omega)$ can be taken of compact support, i.e. $\alpha_{i}\in \Gamma_{c}(\omega)$, if $A$ has compact support.
\end{rem}
 
 \section{Propagators}

\subsection{Types of propagators}

Given a complex finite dimensional vector bundle $\pi:E\to \mathcal{M}$ over the spacetime and $\varphi \in \Gamma(J\Phi)$ we denote by $\varphi^{\ast}$ its complex conjugate section, i.e. the section such that $\varphi^{\ast}(p)=\overline{\varphi(p)}$ for all $p\in \mathcal{M}$ (the bar indicates the complex conjugate). And if $f=\delta \varphi \in \Gamma_{c}\omega J\Phi$, we denote by $f^{\ast}$ the section $\delta \varphi^{\ast}$ where $\ast$ does not work over the real densities, i.e. $f^{\ast}(p)=\delta(p)\otimes\overline{\varphi(p)}$ if  $\delta$ is a real density an $p\in \mathcal{M}$.

{\definicion\label{pro}{A \emph{propagator} associated to the vector bundle $\pi:E\to \mathcal{M}$ (or its associated sheaf $\Phi$) is a continuous and $\mathbb{R}$-bilinear function $$\Delta: \Gamma_{c}\omega J\Phi \times \Gamma_{c}\omega J\Phi \rightarrow \mathbb{C}.$$
We denote the space of propagators associated to  $\pi:E\to \mathcal{M}$ by $Prop(E)$.
We say that the propagator is
\begin{itemize}
\item  \emph{local} if $\Delta(f,g)=\Delta(g,f)$ for each $f$ and $g$ whose supports are spacelike separated (see Definition \ref{space}) .
\item  \emph{Feynman} if it is symmetric.
\item  \emph{Hermitian} if $\Delta^{\ast}=\Delta$, where  $\Delta^{\ast}(f^{\ast},g^{\ast}):=\overline{\Delta(g,f)}$ 
and $f^{\ast}$ is the section of $\Gamma_{c}\omega J\Phi$ given by a $\ast$-operation on it.
\item \emph{positive} if $\Delta(f^{\ast},f)\geqslant 0$.
\end{itemize} }}

 
Given two finite dimensional vector bundles $\pi_{E}:E\to M$ and $\pi_{F}:F\to N$, we consider the manifold $N\times M$ and pull-back these bundles by the projections $pr_{1}:N\times M \to N$ and   $pr_{2}:N\times M \to M$ to construct the bundles, $ pr_{1}^{\ast}(\pi_{F}):pr_{1}^{\ast}(F) \to N\times M$ and $ pr_{2}^{\ast}(\pi_{E}):pr_{2}^{\ast}(E) \to N\times M$, so we have two bundles over $N\times M$ and we denote by $F\boxtimes E$ the bundle $ pr_{1}^{\ast}(F)\otimes pr_{2}^{\ast}(E) $ over $N\otimes M$.

 \begin{prop}
 Given two finite dimensional $K$ (real or complex) vector bundles $E$ and $F$, we have an isomorphism  $\Gamma(E\boxtimes F)\simeq\Gamma(E)\hat{\otimes}_{K}\Gamma(F)$ where the completion is with respect to the just described Fréchet topologies.  
 \end{prop}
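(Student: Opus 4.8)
The plan is to write down a canonical map and then prove it is a topological isomorphism by reducing to the case of trivial bundles. First I would define the bilinear map
$$
\Phi_{0}\colon \Gamma(E)\times\Gamma(F)\longrightarrow\Gamma(E\boxtimes F),\qquad \Phi_{0}(s,t)(m,n)=s(m)\otimes t(n)\in E_{m}\otimes F_{n}=(E\boxtimes F)_{(m,n)},
$$
the product being taken over $M\times N$. A Leibniz estimate on an arbitrary product of compacta $K\times L$ shows that the seminorms of $\Phi_{0}(s,t)$ are bounded by products of seminorms of $s$ on $K$ and of $t$ on $L$, so $\Phi_{0}$ is jointly continuous; by the universal property of the projective topology it factors through a continuous linear map $\Gamma(E)\otimes_{\pi}\Gamma(F)\to\Gamma(E\boxtimes F)$, which extends to the completion to give $\Phi\colon\Gamma(E)\hat{\otimes}_{K}\Gamma(F)\to\Gamma(E\boxtimes F)$. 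Since $M$ and $N$ are second countable, $\Gamma(E)$ and $\Gamma(F)$ are nuclear Fréchet spaces, so $\hat{\otimes}_{\pi}$ coincides with the injective tensor product and the source is again Fréchet; the whole task is to prove that $\Phi$ is a linear homeomorphism.

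The base case is that of trivial bundles $E=M\times K^{p}$ and $F=N\times K^{q}$. Then $\Gamma(E)\cong\mathcal{C}^{\infty}(M)^{p}$, $\Gamma(F)\cong\mathcal{C}^{\infty}(N)^{q}$ and $E\boxtimes F\cong(M\times N)\times K^{pq}$, so $\Gamma(E\boxtimes F)\cong\mathcal{C}^{\infty}(M\times N)^{pq}$. Because $\hat{\otimes}_{\pi}$ commutes with finite direct sums, the whole statement in this case reduces to the classical identity $\mathcal{C}^{\infty}(M)\hat{\otimes}_{\pi}\mathcal{C}^{\infty}(N)\cong\mathcal{C}^{\infty}(M\times N)$ for second countable manifolds, a consequence of the nuclearity of $\mathcal{C}^{\infty}(M)$ and of the coincidence of the projective and injective topologies there. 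Under these identifications $\Phi$ is exactly this isomorphism, so the proposition holds for trivial bundles.

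For the general case I would invoke the Serre--Swan theorem (already used in Proposition \ref{uhuhuh}) together with the finite covering dimension of $M$ and $N$: every finite dimensional vector bundle over such a manifold is a direct summand of a trivial one. Hence there are idempotent bundle endomorphisms $e$ of $M\times K^{p}$ and $f$ of $N\times K^{q}$ whose images are isomorphic to $E$ and $F$. The family $(E,F)\mapsto\Phi_{E,F}$ is natural with respect to bundle morphisms, so the base-case isomorphism $\Phi_{\mathrm{triv}}$ intertwines the continuous idempotent $e_{*}\hat{\otimes}f_{*}$ on $\Gamma(M\times K^{p})\hat{\otimes}\Gamma(N\times K^{q})$ with the idempotent $(e\boxtimes f)_{*}$ on $\Gamma\big((M\times K^{p})\boxtimes(N\times K^{q})\big)$. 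Since $\Gamma$ is an additive functor it takes the split idempotent $e\boxtimes f$ to one with image $\Gamma(E\boxtimes F)$; since $\hat{\otimes}_{\pi}$ respects the topological splittings induced by the complementary projections $e_{*}$ and $f_{*}$, the image of $e_{*}\hat{\otimes}f_{*}$ is $\Gamma(E)\hat{\otimes}_{K}\Gamma(F)$. An isomorphism restricts to an isomorphism between the images of intertwined idempotents, and by naturality that restriction is precisely $\Phi_{E,F}$, which is therefore a topological isomorphism.

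The main obstacle is the base case: the identity $\mathcal{C}^{\infty}(M)\hat{\otimes}_{\pi}\mathcal{C}^{\infty}(N)\cong\mathcal{C}^{\infty}(M\times N)$ is the genuine analytic input, resting on nuclearity, and I would cite it rather than reprove it. The remaining items to check carefully are the joint continuity of $\Phi_{0}$, the fact that $\hat{\otimes}_{\pi}$ commutes with finite direct sums and with the complemented splittings produced by $e_{*}$ and $f_{*}$ (so that the images of the idempotents are computed as claimed), and the naturality square relating $\Phi_{E,F}$ to $\Phi_{\mathrm{triv}}$.
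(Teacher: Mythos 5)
Your proposal cannot be compared against the paper's argument for a simple reason: the paper states this proposition with no proof at all (it is the only proposition in that section left entirely unproved), so your write-up is in effect supplying the missing proof rather than replicating one. On its own merits the proposal is correct and complete in outline. The canonical map $\Phi_{0}(s,t)(m,n)=s(m)\otimes t(n)$ is the right one; joint continuity is even easier than your Leibniz estimate suggests, since the variables separate ($\partial_{m}^{\alpha}\partial_{n}^{\beta}$ of $s(m)\otimes t(n)$ is $(\partial^{\alpha}s)(m)\otimes(\partial^{\beta}t)(n)$), and in any case separately continuous bilinear maps on Fréchet spaces are automatically jointly continuous. The two genuine inputs are exactly the ones you isolate: Grothendieck's identity $\mathcal{C}^{\infty}(M)\hat{\otimes}_{\pi}\mathcal{C}^{\infty}(N)\cong\mathcal{C}^{\infty}(M\times N)$ (legitimate to cite; it rests on nuclearity, which holds here because the paper assumes all manifolds second countable), and the fact that every finite-dimensional bundle over such a manifold is a direct summand of a trivial one, which the paper itself already invokes via the Serre--Swan reference in Proposition \ref{uhuhuh}. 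The reduction from the trivial case to the general case via intertwined idempotents is sound: $\hat{\otimes}_{\pi}$ respects finite direct sums and hence complemented subspaces, $\Gamma$ sends the split idempotent $e\boxtimes f$ to a split idempotent with image $\Gamma(E\boxtimes F)$, and naturality of $\Phi_{E,F}$ with respect to the inclusions and projections forces the restriction of the trivial-case isomorphism to be $\Phi_{E,F}$ itself. One small presentational caveat: the remark that projective and injective tensor products coincide for nuclear spaces is not needed anywhere downstream and could be dropped; the argument runs entirely on $\hat{\otimes}_{\pi}$. What your approach buys, compared with the paper's bare assertion, is an actual reduction of the claim to two citable classical facts plus routine functoriality, which is precisely the level of detail the paper should have included.
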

 
 The next proposition give us a link between propagators and distributions over $\mathcal{M}\times \mathcal{M}$.

 \begin{prop}
 There is a linear map $\iota:Prop(E)\to \Gamma(\omega_{\mathcal{M}\times \mathcal{M}} \otimes (J\Phi \boxtimes J\Phi))'$ given by $\iota(\Delta)((\delta_{1}\otimes \delta_{2})\otimes (f\otimes g))=\Delta(\delta_{1}\otimes f,\delta_{2}\otimes g)$, for all $\delta_{1},\delta_{2}\in \Gamma(\omega)$ and $f,g\in \Gamma(J\Phi)$.
 \end{prop}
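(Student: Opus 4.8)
The plan is to obtain $\iota(\Delta)$ as the continuous linear functional corresponding to the bilinear map $\Delta$ through the universal property of the projective tensor product, after rewriting the space of test sections of the product bundle as a completed tensor product.

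First I would identify the target section space. On a product manifold the density bundle factors as $\omega_{\mathcal{M}\times\mathcal{M}}\simeq\omega_{\mathcal{M}}\boxtimes\omega_{\mathcal{M}}$, so that, after reshuffling the external and fibrewise tensor factors, there is a canonical isomorphism of vector bundles over $\mathcal{M}\times\mathcal{M}$
\[
\omega_{\mathcal{M}\times\mathcal{M}}\otimes(J\Phi\boxtimes J\Phi)\simeq(\omega_{\mathcal{M}}\boxtimes\omega_{\mathcal{M}})\otimes(J\Phi\boxtimes J\Phi)\simeq(\omega J\Phi)\boxtimes(\omega J\Phi).
\]
Applying the preceding proposition (in its compactly supported form) with both bundles equal to $\omega J\Phi$ then gives a topological isomorphism $\Gamma_{c}\big((\omega J\Phi)\boxtimes(\omega J\Phi)\big)\simeq\Gamma_{c}\omega J\Phi\,\hat{\otimes}_{K}\,\Gamma_{c}\omega J\Phi$, under which the decomposable section $(\delta_{1}\otimes\delta_{2})\otimes(f\otimes g)$ is carried to $(\delta_{1}\otimes f)\otimes(\delta_{2}\otimes g)$; here $\delta_{1}\otimes f$ is read as the section $\delta_{1}f$ of $\omega J\Phi$, whose support forces the densities to be taken with compact support for the formula to land in the domain of $\Delta$.

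Next I would invoke the universal property. By Definition \ref{pro} a propagator $\Delta$ is a continuous $\mathbb{R}$-bilinear map on $\Gamma_{c}\omega J\Phi\times\Gamma_{c}\omega J\Phi$, so by the defining property of the projective topology (Section \ref{1.2.2}, where $\otimes_{\pi}$ carries the finest LCS topology making the canonical bilinear map continuous) it factors uniquely as a continuous linear map $\tilde{\Delta}\colon\Gamma_{c}\omega J\Phi\otimes_{\pi}\Gamma_{c}\omega J\Phi\to\mathbb{C}$ with $\tilde{\Delta}(a\otimes b)=\Delta(a,b)$. Since $\mathbb{C}$ is complete, $\tilde{\Delta}$ extends uniquely to a continuous functional on the completion $\Gamma_{c}\omega J\Phi\,\hat{\otimes}_{\pi}\,\Gamma_{c}\omega J\Phi$. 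Composing with the isomorphism of the first step defines $\iota(\Delta)\in\Gamma_{c}\big(\omega_{\mathcal{M}\times\mathcal{M}}\otimes(J\Phi\boxtimes J\Phi)\big)'$, and evaluating it on the image of $(\delta_{1}\otimes\delta_{2})\otimes(f\otimes g)$ returns $\tilde{\Delta}\big((\delta_{1}\otimes f)\otimes(\delta_{2}\otimes g)\big)=\Delta(\delta_{1}\otimes f,\delta_{2}\otimes g)$, which is precisely the asserted formula. Linearity of $\Delta\mapsto\iota(\Delta)$ is then immediate, since passing to $\tilde{\Delta}$, extending to the completion, and composing with a fixed isomorphism are each linear in $\Delta$.

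The continuity and bilinearity needed along the way are built into the definition of a propagator and into the universal property, so these are not the difficulty. The main obstacle is the bookkeeping of the first paragraph: one must check that the density factorization $\omega_{\mathcal{M}\times\mathcal{M}}\simeq\omega_{\mathcal{M}}\boxtimes\omega_{\mathcal{M}}$ and the reshuffling to $(\omega J\Phi)\boxtimes(\omega J\Phi)$ are canonical and compatible with the LF topologies, so that the preceding proposition applies verbatim and carries $(\delta_{1}\otimes\delta_{2})\otimes(f\otimes g)$ to $(\delta_{1}\otimes f)\otimes(\delta_{2}\otimes g)$. Only once this identification is in place does the displayed formula pin down $\iota(\Delta)$ on the dense subspace of decomposable tensors, and hence on the whole test space.
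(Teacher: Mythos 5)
Your proposal is correct and follows essentially the same route as the paper: the paper's (very terse) proof likewise factors $\omega_{\mathcal{M}\times\mathcal{M}}=\omega_{\mathcal{M}}\boxtimes\omega_{\mathcal{M}}$, relies on the preceding proposition $\Gamma(E\boxtimes F)\simeq\Gamma(E)\hat{\otimes}_{K}\Gamma(F)$ to identify the test space with a completed tensor product, and concludes that the prescribed values on decomposable sections determine a unique $(J\Phi\boxtimes J\Phi)^{\ast}$-valued distribution. Your write-up simply makes explicit the reshuffling of tensor factors and the projective-tensor-product universal property that the paper leaves implicit.
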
 
 \begin{proof}
 
 Since $\Gamma(\omega_{\mathcal{M}\times \mathcal{M}}\otimes(J\Phi\boxtimes J\Phi))=\Gamma((\omega_{\mathcal{M}}\boxtimes \omega_{\mathcal{M}})\otimes (J\Phi \boxtimes J\Phi))$ for $\omega_{\mathcal{M}\times \mathcal{M}}=\omega_{\mathcal{M}}\boxtimes \omega_{\mathcal{M}}$, the previous conditions determine precisely one unique $(J\Phi\boxtimes J\Phi)^{\ast}$-valued distribution on $\mathcal{M}\times \mathcal{M}$
 
 
 \end{proof} 

\begin{definicion}
The space of bilinear maps with respect to $\mathcal{C}^{\infty}(M)^{\otimes 2}$ associated to a bundle $\pi:E\to \mathcal{M}$
\begin{equation*}
\Gamma_{c}J\Phi\times \Gamma_{c}J\Phi \to \{ \text{H-distibutions of compact support on }\mathcal{M}\times \mathcal{M}  \}=\Gamma(\omega_{\mathcal{M}\times \mathcal{M}})'
\end{equation*}
will be denoted by $Prop'(E)$
\end{definicion}


\begin{rem} \label{khkg}
There is a linear map $\xi: Prop(E)\to Prop'(E)$ satisfying that $\xi(\Delta)(A\otimes B)(f\otimes g)=\Delta(fA,gB)$ for all $A, B\in \Gamma_{c}J\Phi$ and $f,g\in \Gamma(\omega_{\mathcal{M}})$.
This map is clearly injective, because if $\xi(\Delta)=\xi(\tilde{\Delta})$ then by using  the Proposition \ref{uhuhuh} and the definition of $\xi$ we conclude that $\Delta(A,B)=\tilde{\Delta}(A,B)$ for all $A, B \in \Gamma_{c} \omega J\Phi$, i.e. $\Delta$ is equal to $\tilde{\Delta}$. 
\par Reciprocally each map in $Prop'(E)$ induces a morphism of $\mathcal{C}^{\infty}(M)^{\otimes 2}$-modules between $\Gamma_{c}J\Phi\hat{\otimes}_{K}\Gamma_{c}J\Phi$ and $\Gamma(\omega_{\mathcal{M}\times \mathcal{M}})'$, then $\xi$ is bijective.
\end{rem}

 \par There is an important family of propagators that we want to use. 
 
\begin{definicion}
Given a vector space $V$, we say that a set $C\subset V$ is a cone if $\xi \in C$ implies that $t\xi\in C$ for all $t>0$.
\end{definicion} 
 
If we are working in a t.v.s. we can talk about closed cones.
\begin{definicion}
A cone which is properly contained in a semispace is called  proper. 
\end{definicion} 
Note that a semispace is not a proper cone. 
 
\par  Given a closed proper convex cone $C$ of a t.v.s. $Z$, we can define a partial order on $Z$  of the form $x\curlyeqprec y$ if $y-x\in C$. 
 
\begin{definicion}\label{defcut}
A propagator $\Delta:\Gamma_{c} \omega J\Phi \times \Gamma_{c}\omega J\Phi \to \mathbb{C}$  is called \emph{cut}, if for each $\varphi, \psi \in\Gamma_{c}J\Phi$ and each  $z\in\mathcal{M}$ the  H-distribution over $\mathcal{M}\times\mathcal{M}$: $\xi(\Delta)(\varphi,\psi)$ possesses a partial order in the cotangent space defined by a proper closed convex cone $C_{z}$, such that if $(p,q)$ is in the wavefront set of $\xi(\Delta)(\varphi,\psi)$ at some point $(x,y)\in \mathcal{M}\times \mathcal{M}$ then $ p\curlyeqprec 0 $ and $0\curlyeqprec q $. Moreover if $x=y$ then $p+q=0$.
\end{definicion}

\begin{ej}\label{ejadv}
 In Quantum Field Theory one often encounters the so-called advanced propagator $\Delta_{(+)}$ such that its associated element in $Prop'(E)$. If it is evaluated in the scalar field, that is $\xi(\Delta_{(+)})(\phi,\phi)$, (see \cite{foll} chapter 6) in local coordinates is given by the function $$\Delta_{+}(x)=\frac{i}{2(2\pi)^{3}}\int_{\mathbb{R}^{3}}\frac{e^{-ip_{\mu}x^{\mu}}}{2p_{0}}dp_{1}dp_{2}dp_{3},$$ where 
  $p_{0}=\sqrt{|\vec{p}|^{2}+m^{2}}$. This means that the H-distribution $\xi(\Delta_{(+)})(\phi,\phi)$ evaluated in two compact support densities $f(x)d^{4}x$ and $g(y)d^{4}y$ is given by
\begin{equation*}
\xi(\Delta_{(+)})(\phi,\phi)(f(x)d^{4}x,g(y)d^{4}y)=-i\int_{supp(f)\times supp(g)} \Delta_{+}(x-y)f(x)g(y)d^{4}xd^{4}y.
\end{equation*}
\par  The wavefront set of this distributions was calculated in \cite{reed}, Theorem IX.48, and it gives 
\begin{equation}\label{wfffffff}
WF(\Delta_{+})= \{ (0,-|\vec{p}|,\vec{p})/ 0\in \mathbb{R}^{4}, \vec{p}\in \mathbb{R}^{3}\setminus \{0\}  \}\cup \{ (\pm |\vec{x}|,\vec{x},-\lambda |\vec{x}|,\mp\lambda\vec{x})/ \vec{x}\in\mathbb{R}^{3},\lambda >0 \}
\end{equation}
 

By pulling back this distribution with the map $\mathbb{R}^{4}\times \mathbb{R}^{4}\to \mathbb{R}^{4}$, $(x,y)\mapsto x-y$ we obtain the wavefront set of $\xi(\Delta_{(+)})(\varphi,\varphi)$ over $\mathbb{M}^{4}\times \mathbb{M}^{4}$, which is 

\begin{equation*}\label{ggg}
WF(\xi(\Delta_{(+)})(\varphi,\varphi))=\{ (x,p,y,-p)\in \mathbb{R}^{16}/ (x-y,p)\in WF(\Delta_{+})\}.
\end{equation*}


  \end{ej}

 \begin{ej}\label{fey}
 Consider the distribution $\theta_{x^{0}}$ on $\mathbb{M}^{4}$ defined by $\theta_{x^{0}}(f)=\int_{x^{0}\geqslant0} f(x) d^{3}\vec{x}$. In this case 
 \begin{equation}\label{wafff}
 WF(\theta_{x^{0}})= \{ (0,\vec{x},\lambda,\vec{0})/ \vec{x}\in \mathbb{R}^{3}, \lambda \in \mathbb{R}\setminus\{0\}\}
 \end{equation} (see \cite{ma}). Since $WF(\theta_{x^{0}})$ and $WF(\Delta_{+})$ satisfy that there is no point $(x,p)\in WF(\theta_{x^{0}})$ such that $(x,-p)\in WF(\Delta_{+})$, because the only $x$ which is in both singular supports is $0$ but in this case the cones has not intersections (see formulas \eqref{wfffffff} and \eqref{wafff}), we can define their product $\theta_{x^{0}} \Delta_{+}$ (see \cite{reed}). The Feynman propagator is given by  
 \begin{equation*}
 \Delta_{F}(x)=\theta_{x^{0}} \cdot \Delta_{+}(x)+\theta_{-x^{0}} \cdot \Delta_{+}(-x).
 \end{equation*}
 The wavefront set of this distribution was computed in \cite{viet}, Proposition 26. If we pull back the Feynman propagator by the diagonal map we obtain an H-distribution on $\mathbb{M}^{4}\times \mathbb{M}^{4}$.
 \end{ej} 
 
 \begin{definicion}
 A propagator  $\Delta:\Gamma_{c} \omega J\Phi \times \Gamma_{c}\omega J\Phi \to \mathbb{C}$ is called \emph{polynomially smooth} if the expression in local coordinates of $\xi(\Delta)(A,B)$ is a sum of products of smooth functions and powers of logarithms of polynomials, for any $A,B\in \Gamma_{c}J\Phi$.
 \end{definicion}
 
 \begin{ej}
 We recall that the Feynman propagator of the massive scalar field described in Example \ref{fey} is $$\Delta_{F}(t,\vec{x})=i\int\frac{e^{-i\omega_{p}|t|+i\vec{p\cdot \vec{x}}}}{2\omega_{p}}\frac{d^{3}\vec{p}}{(2\pi)^{3}},$$ where $\omega_{p}=\sqrt{|\vec{p}|^{2}+m^{2}}$. The integral can be explicitly computed (see \cite{foll} Chapter 6, Section 5) and it gives
 \begin{equation*}
 \Delta_{F}(t,\vec{x})=\frac{im}{4\pi^{2}\sqrt{|\vec{x}|^{2}-t^{2}}}K_{1}\Big( m\sqrt{|\vec{x}|^{2}-t^{2}} \Big) \hspace{.8cm} on \hspace{.8cm} \mathbb{R}^{4}\setminus \{ (t,\vec{x})/ |t|=|\vec{x}| \}, 
 \end{equation*}
where $K_{1}$ is the modified Bessel function of order $1$. Then $\Delta_{F}$ is polynomially smooth. For more examples of smooth polynomial propagators see the Dirac and the Proca propagators in Chapter 6, Section 5 of \cite{foll}. 
\end{ej}
 
 \subsection{Generalized propagators}
 

 \par We now proceed to extend  the propagator to a bigger domain. 
 \par We recall that a propagator $\Delta$  gives a bilinear map $\xi(\Delta):\Gamma_{c}J\Phi\times \Gamma_{c}J\Phi\to\Gamma(\omega_{\mathcal{M}\times \mathcal{M}})'$ (see Remark \ref{khkg}).
  
\par We first define  $\tilde{\Delta}:\Gamma_{c} SJ\Phi \times \Gamma_{c} SJ\Phi \rightarrow \Gamma(\omega_{\mathcal{M}\times \mathcal{M}})'$ by means of the formula
\begin{equation}\label{223}
\tilde{\Delta}(a_{1}\odot...\odot a_{n},b_{1}\odot... \odot b_{n})=\sum_{\sigma\in S_{n}}\xi(\Delta)(a_{1},b_{\sigma{1}})\cdot \xi(\Delta)(a_{2},b_{\sigma_{2}})\cdot ...\cdot \xi(\Delta)(a_{n},b_{\sigma_{n}}),
\end{equation}
where the $\cdot$ is a product of H-distributions  (recall the product of H-distribution is described in \cite{ma}, Chapter 2) and we define  $\tilde{\Delta}$ to be zero if it is evaluated at different degree elements. 

\par If $V=\Gamma SJ\Phi$, then $SV$ admits two structures of coalgebra. The first one is the cocommutative cofree coaugmented symmetric coalgebra of $V$, with this structure the elements of $V$ are primitives. Let us call $\Delta_{C}$ to the coproduct of this coalgebra.
\par For the other hand if we make the construction described in \ref{symme} to the object $J\Phi$, we obtain the symmetric coalgebra $SJ\Phi$, in which the elements of $J\Phi$ are primitive elements. This structure has a unique extension to the algebra $SV$, such that $SV$ is a Hopf algebra (see \cite{tak}). Let us denote by $\Delta_{T}$ the coproduct of $SV$ with this structure. 
\par If $\varphi, \psi \in J\Phi$, then $\varphi\psi\in S^{2}J\Phi\subseteq S^{1}\Gamma S^{2}J\Phi\subseteq SV$ (we will denote with an underline the elements of $J\Phi$ viewed as elements of $S^{1}J\Phi$) and,
\begin{equation*}
\Delta_{C}(\underline{\varphi\psi})=1\otimes \underline{\varphi\psi}+\underline{\varphi\psi}\otimes 1
\end{equation*}
but 
\begin{align*}
& \Delta_{T}(\underline{\varphi\psi})=\Delta_{T}(\underline{\varphi})\cdot \Delta_{T}(\underline{\psi})\\
& =(1\otimes \underline{\varphi}+ \underline{\varphi}\otimes 1) \cdot (1\otimes \underline{\psi}+\underline{\psi}\otimes 1)\\
& =1\otimes \underline{\varphi\psi}+\underline{\psi}\otimes \underline{\varphi}+\underline{\varphi}\otimes \underline{\psi}+\underline{\varphi\psi}\otimes 1.
\end{align*} 

\par In the following when we refer to the coproduct of an element in $SV$, it will be in the sense of $\Delta_{T}$. 

\par Finally, we extend $\tilde{\Delta}$ to a map $\hat{\Delta}$ from $S^{m}\Gamma_{c} SJ\Phi \times S^{n}\Gamma_{c} SJ\Phi$ to H-distributions of compact support on  $\mathcal{M}^{m}\times \mathcal{M}^{n}$ by the recursive expression
\begin{equation}\label{magenta}
\hat{\Delta}(A B,C)=\sum \hat{\Delta}(A,C')\otimes\hat{\Delta}(B,C''), 
\end{equation}
\begin{equation*}
\hat{\Delta}(A,1)=\varepsilon(A),
\end{equation*}
where $\sum C'\otimes C''$ is the coproduct of $C$ in $S^{n}\Gamma_{c} SJ\Phi $, and  $\hat{\Delta}(A,C')\otimes\hat{\Delta}(B,C'')$; an H-distribution on $\mathcal{M}^{m}\times \mathcal{M}^{n}$. Note that \eqref{magenta} is precisely the definition of a Laplace pairing (see \cite{su}, Definition 10.2). 
\par As an example suppose $A$, $B$ and $C$ are Lagrangian fields of compact support, i.e. $A,B,C\in \Gamma_{c}SJ\Phi$, then $AB\in S^{2}\Gamma_{c}SJ\Phi$ and $C\in  S^{1}\Gamma_{c}SJ\Phi$. Then the distribution $\hat{\Delta}(A B,C)$ must be evaluated in densities on  $\mathcal{M}^{2}\times \mathcal{M}$. Assume $\alpha,\beta,\gamma\in \Gamma \omega$, then $(\alpha\otimes \beta)\otimes \gamma$ is a density over $\mathcal{M}^{2}\times \mathcal{M}$ and the definition of $\hat{\Delta}$ indicates the following, 
\begin{equation}\label{gfgfgf}
\hat{\Delta}(A B,C)((\alpha\otimes \beta)\otimes \gamma)=\sum \tilde{\Delta}(A,C')(\alpha\otimes \gamma)\cdot\tilde{\Delta}(B,C'')(\beta\otimes \gamma) 
\end{equation}
where we put $\tilde{\Delta}$ in the right hand side, instead of $\hat{\Delta}$, because when we restrict to elements of degree zero or one it takes the same values, and $A,B,C',C''\in S^{\leqslant 1}\Gamma_{c}SJ\Phi$. The product $\cdot$ in \eqref{gfgfgf}, is a product of complex numbers.

\par With this successive extensions of a given propagator $\Delta$, we define $\hat{\Delta}$ whose restrictions $\hat{\Delta}|_{ S^{m}\Gamma_{c}S J\Phi \times S^{n}\Gamma_{c} SJ\Phi}$  satisfy the following. 

\begin{definicion}
A \emph{generalized propagator} is a family $\hat{\Delta}=\{\Delta_{m,n}\}_{m,n\in\mathbb{N}}$ of continuous and $\mathbb{R}$-bilinear functions $$\Delta_{m,n}: S^{m}\Gamma_{c}S J\Phi \times S^{n}\Gamma_{c} SJ\Phi \longrightarrow H_{d}(\mathcal{M}^{m}\times \mathcal{M}^{n})$$
where the topology for the codomain is the weak-$\ast$ topology (see \cite{ru}, Chapter 3, Section 11 and Chapter 6 for a general description of the weak-$\ast$ topology and Section 16 for the specific case $\mathcal{M}^{m}\times \mathcal{M}^{n}=\mathbb{R}^{k}$).
\end{definicion}

\begin{definicion}\label{newcut}
A generalized propagator $\Delta=\{\Delta_{m,n}\}_{m,n\in\mathbb{N}}$ is called \emph{cut} if for all $x\in \mathcal{M}$ there is a partial order in the cotangent space of $\mathcal{M}$ at $x$ defined by a proper closed convex cone $C_{x}$, such that for all $(m,n)\in \mathbb{N}^{2}$ and $(p_{1},\cdots,p_{m},q_{1},\cdots,q_{n})$ in the wavefront set of $\Delta_{m,n}$ at the point $(x_{1},\cdots,x_{m},y_{1},\cdots,y_{n})$ all the $p_{i} \curlyeqprec 0$ and all the $0 \curlyeqprec q_{j}$.  And if $(p_{1},\cdots,p_{m},q_{1},\cdots,q_{n})$ is in the wavefront set of $\Delta_{m,n}$ on the diagonal of $\mathcal{M}^{m}\times \mathcal{M}^{n}$ then $p_{1}+\cdots+p_{m}+q_{1}+\cdots+q_{n}=0$.
\end{definicion}

\section{Feynman measure}

\par We are interested in integrating expressions of the form $e^{iL_{F}}\mathcal{L}$, where $\mathcal{L}$ is a lagrangian density.

\par Let us define a collection of maps $\chi_{n}:(S\Gamma_{c}\omega SJ\Phi)'\times S^{n}\Gamma_{c}\omega SJ\Phi\to \mathcal{C}^{\infty}(\mathcal{M}^{n})'$ for $n\in \mathbb{N}$ by $\chi_{n}(\delta,A_{1}\odot \dots \odot A_{n})(f_{1}\otimes \cdots \otimes f_{n})=\delta(f_{1}A_{1}\odot \cdots \odot f_{n}A_{n})$, where $f_{i}\in \mathcal{C}^{\infty}(\mathcal{M})$
and $A_{i}\in \Gamma_{c}\omega SJ\Phi$, $\forall i=1,\cdots, n$. Since the subspace  of $\mathcal{C}^{\infty}(\mathcal{M})$ formed by sums of functions of the form $f_{1}\otimes \cdots \otimes f_{n}$ is dense, the maps $\chi_{n}$ are well-defined.
\par Set  $$ \chi: (S\Gamma_{c}\omega SJ\Phi)'\times S\Gamma_{c}\omega SJ\Phi\to \oplus_{n\in \mathbb{N}}\mathcal{C}^{\infty}(\mathcal{M}^{n})'$$ the direct sum of those maps.
\begin{definicion}\label{suavidad en la diagonal} 
Let $\delta: S\Gamma_{c}\omega S J\Phi\rightarrow \mathbb{C} $ be a continuous  and  linear map. We say that it is \emph{smooth on the diagonal} 
if  $(p_{1},...,p_{n})\in \Sigma_{(q,...,q)}\chi_{n}(\delta,A)$ implies that $p_{1}+\cdots+p_{n}=0$, for all $A\in S^{n}\Gamma_{c}\omega SJ\Phi$.
\end{definicion}

\par We are now ready to introduce the notion of Feynman measure.

\begin{definicion}\label{dia}
A Feynman measure is a continuous linear map from $S\Gamma_{c}\omega S J\Phi$ to $\mathbb{C}$.
Let $\delta:S\Gamma_{c}\omega S J\Phi\rightarrow \mathbb{C}$ be a continuous linear map and let $\Delta:\Gamma_{c}\omega J\Phi\times \Gamma_{c}\omega J\Phi \to \mathbb{C}$ be a propagator. Then, $\delta$ is said to be associated to the propagator $\Delta$ if it satisfies the following conditions:

\begin{enumerate}
\item $\delta$ is smooth in the diagonal;
\item (non-degeneracy) There is a smooth nowhere vanishing function $g$ so that 
 $\delta(v)=\int_{\mathcal{M}}gv$ for all $v\in S^{1}\Gamma_{c}\omega S^{0}J\Phi=\Gamma_{c}\omega$;
\item (Gaussian condition or weak translational invariance) Let $A\in S^{m}\Gamma_{c}\omega S J \Phi$ and $B\in S^{n}\Gamma_{c}\omega S J \Phi$ such that there is no point in $supp(A)$ that is $\leqslant$ to some point in the $supp(B)$. This means that there is not point at $supp(A)$ which is $\leqslant$  to some point at $supp(B)$, we actually should say that there is no point in $supp(A)\subseteq \mathcal{M}^{m}$ such that any of its coordinate be $\leqslant$ to some coordinate of some point in $supp(B)\subseteq \mathcal{M}^{n}$. Then the following equality holds 
\begin{equation}\label{gauss}
 \chi(\delta,A\cdot B)=\sum (\chi(\delta,A')\otimes \chi(\delta,B')) \cdot \hat{\Delta}(A'',B'')
\end{equation}
where $\sum A'\otimes A''\in S\Gamma_{c}\omega SJ\Phi \otimes S\Gamma_{c}SJ\Phi$ is the image of $A$ given by the coaction $S^{m}\Gamma_{c}\omega SJ\Phi \rightarrow S^{m}\Gamma_{c}\omega SJ\Phi \otimes S^{m}\Gamma_{c}SJ\Phi$. 
\end{enumerate}
 
 \end{definicion}
 \begin{rem}
\par The product in \eqref{gauss} will be regarded in the following manner. Given $A''\in S^{m}\Gamma_{c}SJ\Phi$ and $B''\in S^{n}\Gamma_{c}SJ\Phi$, $\Delta(A'',B'')$ defines an element of $H_{d}(\mathcal{M}^{m}\times \mathcal{M}^{n})$ that can be multiplied by $\chi(\delta,A')\otimes \chi(\delta,B')\in D_{d}(\mathcal{M}^{m}\times \mathcal{M}^{n})$ using the product in \ref{hd}.
\end{rem}
\begin{rem}\label{coac}
\par Let us explain the coaction involved in the Gaussian condition in more detail.


\par By using the Theorem \ref{df} and its Corollary \ref{dm}, $\Gamma_{c}\omega SJ\Phi$ can be written in the form $\Gamma_{c}\omega\otimes_{\mathcal{C}^{\infty}(\mathcal{M})} \Gamma SJ\Phi$. Then by using Proposition \ref{prop 1} we conclude that $\Gamma_{c} \omega SJ\Phi$ is a comodule over $\Gamma SJ\Phi$ (call $\rho$ the coaction), and consequently  $S\Gamma_{c} \omega SJ\Phi$ is a comodule over $S\Gamma SJ\Phi$ and then it has sense to talk about a coaction. 
\par In Sweedler notation  if $\mathcal{L}\in \Gamma\omega SJ\Phi$ then $\rho(\mathcal{L})=\sum_{()}\mathcal{L}_{(0)}\otimes \mathcal{L}_{(1)}$ where $\mathcal{L}_{(0)}\in \Gamma\omega SJ\Phi$ and $\mathcal{L}_{(1)}\in \Gamma SJ\Phi$, then we can define a coaction $\rho^{n}:S^{n}\Gamma\omega SJ\Phi\to S^{n}\Gamma \omega SJ\Phi \otimes S^{n}\Gamma SJ\Phi$ that over sections works as $$\rho^{n}(\mathcal{L}_{1}\odot\cdots\odot\mathcal{L}_{n})=\sum_{()} (\mathcal{L}_{1(0)}\odot \cdots \odot \mathcal{L}_{n(0)}) \otimes (\mathcal{L}_{1(1)}\odot \cdots \odot \mathcal{L}_{n(1)}) $$ where $\mathcal{L}_{j(0)}\in \Gamma \omega SJ\Phi $ for all $1\leqslant j \leqslant n$ and   $\mathcal{L}_{k(1)}\in \Gamma SJ\Phi$ for all $1\leqslant k \leqslant n$, note that $\mathcal{L}_{1(1)}\odot \cdots \odot \mathcal{L}_{n(1)}$ belongs to $\Gamma SJ\Phi$. We denote by $\sigma$ the coaction induced by $\oplus_{n\in \mathbb{N}_{0}}\rho^{n}$ which makes of $S\Gamma\omega SJ\Phi$ a $\Gamma SJ\Phi$-comodule.  
\end{rem}

\begin{rem}\label{obs4}
\par Provided our propagator $\Delta$ is cut, the smoothness on the diagonal allows us to compute the product at the right hand of \eqref{gauss}. Indeed, suppose we have $(p_{1},\cdots ,p_{m},q_{1},\cdots q_{n})$ in the wave front set of $\Delta(A'',B'')$ at a point $(x,\cdots, x,y,\cdots,y)$, then $p_{i}\curlyeqprec 0$ and $0\curlyeqprec q_{j}$ for all $1\leqslant i \leqslant m$ and   $1\leqslant j \leqslant n$.
\par  If $(-p_{1},\cdots ,-p_{m},-q_{1},\cdots -q_{n})$ is in the wave front set of $\chi(\delta,A')\otimes \chi(\delta,B')$ Lemma 2.175 in \cite{ma} implies that $q_{j}=0$ for all  $1\leqslant j \leqslant n$ or $(-q_{1},\cdots ,-q_{n})$ is in the wave front set of $\chi(\delta,B')$ at $(x,\cdots,x)$. Consequently $-q_{1}-q_{2}\cdots-q_{n}=0$ by the smoothness on the diagonal condition of the Feynman measure. In other words we get that $q_{1}+\cdots +q_{n}=0$. This condition together with the previous one stating that $0 \curlyeqprec q_{i}$ (i.e. $q_{i}$ is in the proper convex closed cone $C_{x}$) implies that all the $q_{j}$ must be zero. A similar argument can be used to see that all the $p_{i}$ are zero. This is absurd, for $(p_{1},\cdots ,p_{m},q_{1},\cdots q_{n})$ is in the wave front set of $\Delta(A'',B'')$. 
The contradiction came from the assumption that  $(-p_{1},\cdots ,-p_{m},-q_{1},\cdots -q_{n})$ was in the wave front set of $\chi_{m}(\delta,A')\otimes \chi_{n}(\delta,B')$. 
\par As a consequence, we conclude that the product between these distributions is well defined. 
\end{rem}

\par The idea is to think of the Feynman measure in $A$ as the integral of $A$ over all the classical fields, that is $\delta(A)=\int_{\Phi} A(\varphi)\mathcal{D}\varphi$.

\begin{rem}\label{obs5}
 Given $\Delta$ a Feynman measure and  $A\in S^{n}\Gamma_{c}\omega S J\Phi$, then  $\delta(A)$ is a complex number and  $\chi_{n}(\delta,A)$ is an element of $D_{d}(\mathcal{M}^{n})$. The latter  can be thought of a density over $\mathcal{M}^{n}$ in the sense that the densities take functions and assign numbers (by integration over $\mathcal{M}$), i.e. $\chi_{n}(\delta,A)(f)=\delta(fA)$ for $f\in \mathcal{C}^{\infty}(\mathcal{M}^{n})$. By using the notation for the complex number $\delta(A)=\chi_{n}(\delta,A)(1)$, where $1$ is the function identically $1$ over $\mathcal{M}^{n}$, and as a consequence of this, if we want to see that $\delta(A)=\delta(B)$ for two elements $A,B\in S^{n}\Gamma_{c}\omega S J\Phi$ then we need only to examine $\chi_{n}(\delta,A)$ and  $\chi_{n}(\delta,B)$ as elements of $D_{d}(\mathcal{M}^{n})$, and lastly conclude the equality by evaluating in the function $1$,  $\chi_{n}(\delta,A)(1)=\chi_{n}(\delta,B)(1)$ or $\delta(A)=\delta(B)$ as complex numbers.
 \par As a conclusion, if we want to prove that $\delta(A)=\delta(B)$ for two elements $A, B\in S\Gamma_{c}\omega S J\Phi$ then it is sufficient to prove it for $\chi(\delta,A)$ and  $\chi(\delta,B)$.
 
\end{rem}

\begin{rem}
\par A last comment concerning the physics picture of the Feynman measure  is that the smoothness on the diagonal is exactly the conservation of the momentum, because the momentum coordinates are the coordinates of the Fourier transform of a field. 
\end{rem}

\chapter{Renormalization}        
                  
\section{Characterization of the renormalization group}
   
  As we saw in the precedent chapter  the sheaf $S\omega SJ\Phi$ has a structure of comodule over $SJ\Phi$.
     
  \begin{definicion}
  A \emph{renormalization} is an automorphism of the sheaf of coalgebras $S\omega SJ\Phi$ which preserves the coaction of $SJ\Phi$. The set of renormalizations  is a group under composition and is called the renormalization group.
  \end{definicion}
  
  
     
     
  \par The next theorem characterizes the renormalization group in a nice manner. 
 
 \begin{thm}\label{teorema1}The elements of the renormalization group are in correspondence with the elements in  $Hom(S\omega SJ\Phi,\omega)$ that are zero over $S^{0}\omega SJ\Phi$ and are isomorphisms when they are restricted to $S^{1}\omega S^{0}J\Phi=\omega$.
 \end{thm}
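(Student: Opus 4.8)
The plan is to combine the cofreeness of the symmetric coalgebra (Proposition \ref{prop4}) with the comodule correspondence of Proposition \ref{prop5}, so as to reduce the target of the representing map from $\omega SJ\Phi$ all the way down to $\omega$. Write $W=\omega SJ\Phi$, so that the object in play is $SW=S\omega SJ\Phi$, which is cocomplete and lives in the abelian symmetric monoidal category of $\mathcal{C}^{\mathbb{R}}$-modules. First I would forget the coaction and describe the coalgebra automorphisms alone: by Proposition \ref{prop4}, a coalgebra endomorphism $P$ of $SW$ is determined by its sequential representation, i.e. a map $p=\pi\circ P|_{\overline{SW}}\in Hom_{\mathcal{C}}(\overline{SW},W)$, equivalently an element of $Hom(S\omega SJ\Phi,\omega SJ\Phi)$ vanishing on $S^{0}$; and $P$ is an isomorphism if and only if its degree-one component $p_{1}\colon S^{1}W=W\to W$ is an isomorphism.

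Next I would bring in the coaction. Recall that by Theorem \ref{df} and Proposition \ref{prop 1} the sheaf $W=\omega\otimes_{\mathcal{C}^{\mathbb{R}}}SJ\Phi$ is precisely the $SJ\Phi$-comodule $\omega\otimes SJ\Phi$, and that the coaction $\sigma$ on $SW$ recalled in Remark \ref{coac} preserves the outer symmetric degree; hence the degree-one projection $\pi\colon SW\to W$ is itself a morphism of $SJ\Phi$-comodules. The key lemma to establish is that $P$ preserves the coaction if and only if its representing map is a comodule morphism $p\in Hom_{Com_{SJ\Phi}}(SW,\omega\otimes SJ\Phi)$. One direction is immediate, since if $P$ is a comodule morphism then $p=\pi\circ P|_{\overline{SW}}$ is a composite of comodule morphisms; the converse uses the uniqueness in the cofreeness statement (Proposition \ref{prop3} and Remark \ref{correspondencia1}) to conclude that the coalgebra morphism reconstructed from a comodule map $p$ is again a comodule morphism.

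Having done this, I would apply Proposition \ref{prop5} with $M=SW$, $D=SJ\Phi$ and $W_{0}=\omega$, obtaining a bijection between $Hom_{Com_{SJ\Phi}}(SW,\omega\otimes SJ\Phi)$ and $Hom_{\mathcal{C}}(SW,\omega)$ that sends $p$ to $\tilde p=(Id_{\omega}\otimes\varepsilon_{SJ\Phi})\circ p$. This is exactly what turns the target $\omega SJ\Phi$ into $\omega$ and produces the desired element of $Hom(S\omega SJ\Phi,\omega)$, and the condition $p|_{S^{0}}=0$ becomes $\tilde p|_{S^{0}}=0$. The last point is to match the invertibility conditions: restricting to degree one, $p_{1}$ is a comodule endomorphism of $W=\omega\otimes SJ\Phi$, so by Proposition \ref{prop5} again it corresponds to $g=(Id_{\omega}\otimes\varepsilon)\circ p_{1}$ with $g|_{\omega S^{0}J\Phi}=\tilde p|_{S^{1}\omega S^{0}J\Phi}$. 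Since $\varepsilon_{SJ\Phi}$ projects onto the degree-zero part $S^{0}J\Phi$, the coaction $Id_{\omega}\otimes\Delta_{SJ\Phi}$ makes $p_{1}$ upper triangular for the grading of $SJ\Phi$ with diagonal $g|_{\omega}\otimes Id$, so $p_{1}$ is an isomorphism if and only if $\tilde p|_{S^{1}\omega S^{0}J\Phi}=\tilde p|_{\omega}$ is. Assembling these equivalences yields the stated correspondence of Theorem \ref{teorema1}.

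I expect the main obstacle to be the key lemma of the second paragraph: showing that coaction-preservation of the automorphism $P$ is faithfully reflected by its sequential representation $p$ being a comodule map. This forces one to interleave the two universal properties (cofreeness of $SW$ and the comodule adjunction of Proposition \ref{prop5}) and to verify carefully that $\sigma$ respects the outer grading, so that the reconstruction of $P$ from $p$ preserves the comodule structure. By comparison, the triangularity argument matching invertibility of $p_{1}$ with invertibility of $\tilde p|_{\omega}$ should be a routine filtration computation.
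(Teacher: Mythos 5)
Your proposal follows essentially the same route as the paper's own proof: Proposition \ref{prop4} reduces coalgebra automorphisms of $S\omega SJ\Phi$ to sequential representations $r\in Hom(S\omega SJ\Phi,\omega SJ\Phi)$ with $r_{0}=0$ and $r_{1}$ invertible, and Proposition \ref{prop5} (applied with $M=S\omega SJ\Phi$, $D=SJ\Phi$ and $W=\omega$) then trades comodule morphisms into $\omega SJ\Phi$ for morphisms into $\omega$. The two verifications you flag as the main work — that coaction-preservation of $P$ corresponds exactly to its sequential representation being a comodule map, and the filtration (triangularity) argument identifying invertibility of $r_{1}$ with invertibility of $\tilde{p}|_{\omega}$ — are precisely the points the paper's proof passes over in silence, so your write-up is, if anything, more complete than the original.
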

 \begin{proof}
  By Proposition \ref{prop4} there is a correspondence between the coalgebra automorphisms $R\in Hom_{Coalg}(S\omega SJ\Phi, S\omega SJ\Phi) $ and the morphisms  $r\in  Hom_{\mathcal{C}^{\mathbb{R}}-Mod}(S\omega SJ\Phi,\omega SJ\Phi)$ whose sequential representation satisfies $r_{0}=0$ and $r_{1}$ is an isomorphism. 
   
 \par  At the beginning of this chapter we mentioned that $S\omega SJ\Phi$ is a comodule over $SJ\Phi$. We use Proposition \ref{prop5} in the monoidal abelian category of  sheafs over the ringed space $(\mathcal{M},\mathcal{C}^{\mathbb{R}})$, taking $D$ as the sheaf $SJ\Phi$, $M$ as the sheaf $S\omega SJ\Phi$ of $SJ\Phi$-comodules and $W$ as the sheaf $\omega$. It gives a correspondence between the comodule morphisms $r\in Hom_{Com_{SJ\Phi}}(S\omega SJ\Phi,\omega SJ\Phi)$ and the morphisms $\eta\in Hom_{\mathcal{C}^{\mathbb{R}}-Mod}(S\omega SJ\Phi,\omega)$. 
 \par If we begin this process with an $R\in Hom_{Coalg}(S\omega SJ\Phi, S\omega SJ\Phi) $ which is a renormalization, then it is a coalgebra morphism and also a comodule morphism so we successively apply both correspondences to obtain the desired total correspondence between the renormalizations and the subset of $\eta\in Hom_{\mathcal{C}^{\mathbb{R}}-Mod}(S\omega SJ\Phi,\omega)$ such that $\eta_{0}=0$, $\eta_{1}= (Id_{\omega}\otimes \varepsilon_{SJ\Phi})\circ r_{1}|_{\omega S^{0}J\Phi}$ is an isomorphism (because $r_{1}$ is).      
 \end{proof}       

\section{The structure of the renormalization group}

\par Notice that the renormalization group (which we will denote by $G$) preserves the increasing  filtration $S^{\leqslant n}\omega SJ\Phi\subseteq S^{\leqslant n+1}\omega SJ\Phi$, because if we apply $P\in G$ to an element  $v\in S^{\leqslant n}\omega SJ\Phi$, from equation \eqref{3.33} is clear that $P(v)\in S^{\leqslant n}\omega SJ\Phi$. 

\par If $\mathcal{C}$ is a monoidal abelian subcategory of $_{A}Mod$ or $\mathcal{C}^{\mathbb{R}}$-Modules, we denote by $G'$ the group of automorphisms of the cocommutative coaugmented coalgebra  $S^{c}V$ (with $V\in Obj(\mathcal{C})$)  and $G'_{> a}=\{ g\in G'/g(\alpha)=\alpha, \forall \alpha \in S^{\leqslant a}V\}$, where $a\in \mathbb{N}_{0}$. Notice that $G'_{>0}$ is exactly $G'$.  

\begin{prop}
Given $a\in \mathbb{N}$ and $P\in G'$. Then $P\in G'_{>a}$ if and only if the sequential representation of $P$ satisfies $p_{1}=Id_{V}$ and $p_{n}=0$, $\forall 2\leqslant n \leqslant a$.
\end{prop}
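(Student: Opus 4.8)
The plan is to read off both implications directly from the explicit formula \eqref{3.33} for $P|_{S^{n}V}$ in terms of the sequential representation $\{p_n\}$, recalling that by definition $p_n = \pi\circ P|_{S^{n}V}$, where $\pi\colon SV\to V$ is the projection onto the degree-one component. No heavy machinery is needed beyond \eqref{3.33} and the fact that a morphism in $_{c}Coalg_{\mathcal{C}}$ commutes with the coaugmentation.

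First I would treat the forward implication. Assume $P\in G'_{>a}$, so that $P(\alpha)=\alpha$ for every $\alpha\in S^{\leqslant a}V$; in particular, for each $0\leqslant n\leqslant a$ the restriction $P|_{S^{n}V}$ is just the canonical inclusion $\iota_n\colon S^{n}V\hookrightarrow SV$. Composing with $\pi$ and using that $\pi$ vanishes on $S^{n}V$ for $n\neq 1$ while restricting to $Id_V$ on $S^{1}V=V$, I immediately obtain $p_1=\pi\circ\iota_1=Id_V$ and $p_n=\pi\circ P|_{S^{n}V}=0$ for all $2\leqslant n\leqslant a$.

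For the converse, assume $p_1=Id_V$ and $p_n=0$ for $2\leqslant n\leqslant a$, and I would show $P|_{S^{n}V}=Id$ for every $0\leqslant n\leqslant a$ by plugging into \eqref{3.33}. The case $n=0$ holds because $P$ preserves the coaugmentation and hence fixes $1_{SV}$ (i.e. $P\circ\eta=\eta$); the case $n=1$ is exactly $p_1=Id_V$. For $2\leqslant n\leqslant a$, every block $I_j$ occurring in the sum satisfies $1\leqslant |I_j|\leqslant n\leqslant a$, so the factor $p_{|I_j|}$ is $0$ unless $|I_j|=1$; note this uses only the prescribed values of $p_k$ for $k\leqslant a$, never those for $k>a$. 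Thus only the partitions of $\{1,\dots,n\}$ into $n$ singletons survive (there are $n!$ of them, one per bijection), and each contributes $\frac{1}{n!}\,p_1(v_{i_1})\odot\cdots\odot p_1(v_{i_n})=\frac{1}{n!}\,v_1\cdots v_n$ by commutativity of the symmetric product $\odot$. Summing yields $P|_{S^{n}V}(v_1\cdots v_n)=v_1\cdots v_n$, so $P$ fixes all of $S^{\leqslant a}V$ and therefore lies in $G'_{>a}$.

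I do not expect a genuine obstacle here, since both directions are bookkeeping with \eqref{3.33}. The only point requiring care is the combinatorial collapse in the converse: one must confirm that the surviving singleton partitions number exactly $n!$ and that the weight $\tfrac{1}{n!}$ cancels them to recover the identity, which is where an off-by-a-factorial slip would most easily creep in. It is also worth stating explicitly that $P|_{S^{n}V}$ for $n\leqslant a$ lands back in $S^{n}V$ (rather than merely in $S^{\leqslant n}V$) precisely because $P$ fixes $S^{\leqslant a}V$ pointwise, so that the projection argument in the forward direction is legitimate.
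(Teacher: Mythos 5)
Your proof is correct, and your forward implication takes a genuinely different (and shorter) route than the paper's. The paper proves the forward direction by induction using the explicit formula \eqref{3.33}: first $n=1$ gives $p_{1}=Id_{V}$, then $n=2$ gives $p_{2}=0$, and an induction on $j\leqslant a$ shows each $p_{j}$ vanishes after all lower-degree terms have been identified. You instead observe that, by the very definition of the sequential representation, $p_{n}=\pi\circ P|_{S^{n}V}$, and that $P\in G'_{>a}$ forces $P|_{S^{n}V}$ to be the canonical inclusion for $n\leqslant a$; composing with the degree-one projection $\pi$ then reads off $p_{1}=Id_{V}$ and $p_{n}=0$ for $2\leqslant n\leqslant a$ with no computation at all. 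This is cleaner and avoids the induction entirely; what the paper's route buys in exchange is that it exercises formula \eqref{3.33} directly, which is the same mechanism it reuses immediately afterwards (e.g.\ in Proposition \ref{bnbn} and Lemma \ref{lema1}), so the redundancy there is partly pedagogical. Your converse is essentially the paper's argument: only singleton partitions survive, there are $n!$ of them, and the weight $\frac{1}{n!}$ collapses the sum to the identity. Two points you handle with appropriate care, and which are worth keeping: the check that only the hypothesized values $p_{k}$, $k\leqslant a$, ever enter the converse (since all blocks of a partition of $\{1,\dots,n\}$ with $n\leqslant a$ have size at most $a$), and the $n=0$ case via preservation of the coaugmentation, which the paper leaves implicit.
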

\begin{proof} The proof follows directly from equation \eqref{3.33}. If we apply the formula for $n=1$ and $P\in G'_{>a}$, then we obtain:
$$v=P(v)=\sum_{m=1}^{1}\sum_{I_{1}\neq \emptyset; I_{1}=\{1\} }\frac{1}{1!}p_{1}(v_{I_{1}})=p_{1}(v),$$
that is $p_{1}$ is the identity of $V$.
\par  Let us do the same for $n=2$ (supposing that $a\geqslant 2$), in this case the equation \eqref{3.33} gives us,
$$v_{1}\cdot v_{2}=\frac{1}{2}p_{1}(v_{1})\cdot p_{1}(v_{2})+\frac{1}{2}p_{1}(v_{2})\cdot p_{1}(v_{1})+p_{2}(v_{1}\cdot v_{2}),$$
where we use the previous result, $p_{1}(v_{1})\cdot p_{1}(v_{2})=v_{1}\cdot v_{2}$, and $P(v_{1}\cdot v_{2})=v_{1}\cdot v_{2}$ and so $p_{2}(v_{1}\cdot v_{2})=0$. 
\par Let us proceed by induction. Given $j\in \mathbb{N}$ such that $2\leqslant j\leqslant a$ and supposing that $p_{k}=0$ for all $k\in \mathbb{N}$ satisfying $2\leqslant k<j$ and $p_{1}=Id_{V}$, we write the equation \eqref{3.33}(taking $n=j$),
\begin{align*}
& v_{1}\cdots v_{j}=P(v_{1}\cdots v_{j})=\sum_{\text{all permutations}} \frac{1}{j!}p_{1}(v_{1})\cdots p_{1}(v_{j}) \\
& +\sum_{\text{terms with $p_{k}$ with $2\leqslant k<j$}}\cdots +p_{j}(v_{1}\cdots v_{j}) \\
& =j!\frac{1}{j!}v_{1}\cdots v_{j}+0+\cdots+0+p_{j}(v_{1}\cdots v_{j}),
\end{align*} 
hence $p_{j}(v_{1}\cdots v_{j})=0$ for all $v_{1}\cdots v_{j}\in S^{j}V$.\\

\par Reciprocally if the sequential representation of $P$ satisfies $p_{1}=Id_{V}$ and $p_{n}=0$ for all $2\leqslant n \leqslant a$ then if we apply $P$ to $v_{1}\cdots v_{n}\in S^{n}V$ by means of equation \eqref{3.33}, only the terms such that $|I_{j}|=1$ survive, that is when $m=n$. Hence,

\begin{align*}
& P(v_{1}\cdot v_{2}\cdots v_{n})=\sum_{m=1}^{n}\sum_{I_{1}\cdots I_{m}\neq \emptyset; I_{1}\sqcup \cdots \sqcup I_{m}=\{1\cdots n\} }\frac{1}{m!}p_{|I_{1}|}(v_{I_{1}})\cdots p_{|I_{m}|}(v_{I_{m}})\\
& =\sum_{I_{1}\cdots I_{n}\neq \emptyset; I_{1}\sqcup \cdots \sqcup I_{n}=\{1\cdots n\} }\frac{1}{n!}p_{1}(v_{I_{1}})\cdots p_{1}(v_{I_{n}})\\
& =\sum_{I_{1}\cdots I_{n}\neq \emptyset; I_{1}\sqcup \cdots \sqcup I_{n}=\{1\cdots n\} }\frac{1}{n!}Id_{V}(v_{I_{1}})\cdots Id_{V}(v_{I_{n}})\\
& =\sum_{I_{1}\cdots I_{n}\neq \emptyset; I_{1}\sqcup \cdots \sqcup I_{n}=\{1\cdots n\} }\frac{1}{n!}v_{1}\cdots v_{n}=\frac{n!}{n!}v_{1}\cdots v_{n}=v_{1}\cdots v_{n},
\end{align*} 
where we use the fact that there are $n!$ forms of take intervals $I_{i}$ such that $|I_{i}|=1$ and $I_{1}\sqcup \cdots \sqcup I_{n}=\{ 1,\cdots, n \}$.  
 \end{proof}

\par If $P\in G'$, we call $p$ its sequential representation and usually say that $p\in G'$.

\begin{prop}\label{bnbn}
\label{sub}$G'_{>a}$ is a subgroup of $G'$ for all $a\in \mathbb{N}$
\end{prop}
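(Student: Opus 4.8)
The plan is to observe that $G'_{>a}$ is precisely the pointwise stabilizer, inside the group $G'$, of the subobject $S^{\leqslant a}V$, and then to invoke the standard fact that such a stabilizer is always a subgroup. Since $G'$ is already known to be a group under composition, I only need to check that $G'_{>a}$ is nonempty and closed under composition and inversion; staying inside $G'$ is automatic, because composites and inverses of coalgebra automorphisms are again coalgebra automorphisms. First I would note that the identity morphism $Id_{S^{c}V}$ fixes every element of $S^{\leqslant a}V$, so $Id_{S^{c}V}\in G'_{>a}$ and the set is nonempty.

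Next, for closure under composition, take $P,Q\in G'_{>a}$ and $\alpha\in S^{\leqslant a}V$. Then $Q(\alpha)=\alpha$, and hence $(P\circ Q)(\alpha)=P(Q(\alpha))=P(\alpha)=\alpha$, so $P\circ Q$ fixes $S^{\leqslant a}V$ pointwise; since $P\circ Q\in G'$, it lies in $G'_{>a}$. For inverses, take $P\in G'_{>a}$ and $\alpha\in S^{\leqslant a}V$; applying $P^{-1}$ to the identity $P(\alpha)=\alpha$ gives $\alpha=P^{-1}(P(\alpha))=P^{-1}(\alpha)$, so $P^{-1}$ also fixes $S^{\leqslant a}V$ pointwise and therefore belongs to $G'_{>a}$. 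These three facts establish the subgroup conditions.

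I expect no genuine obstacle here: the entire content is that a pointwise stabilizer is a subgroup, and the only point deserving a word of care is that composites and inverses remain coalgebra automorphisms, which is guaranteed by $G'$ being a group. As an alternative, one could phrase the same argument through the sequential-representation criterion of the preceding proposition (membership in $G'_{>a}$ being equivalent to $p_{1}=Id_{V}$ and $p_{n}=0$ for $2\leqslant n\leqslant a$) and verify the two closure properties via the composition formula \eqref{composicion} and the recursive inverse \eqref{3.35}; however, the direct stabilizer argument is cleaner and sidesteps those computations entirely.
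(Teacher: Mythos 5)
Your proof is correct, but it takes a genuinely different and more elementary route than the paper. You exploit the fact that $G'_{>a}$ is, by its very definition, the pointwise stabilizer of $S^{\leqslant a}V$ inside $G'$: the identity fixes it, $(P\circ Q)(\alpha)=P(\alpha)=\alpha$, and applying $P^{-1}$ to $P(\alpha)=\alpha$ gives $P^{-1}(\alpha)=\alpha$. This is airtight (and works equally well in the categorical formulation, reading the condition as $g\circ\iota=\iota$ for the inclusion $\iota\colon S^{\leqslant a}V\hookrightarrow S^{c}V$), and it sidesteps all computation. The paper instead proves the three subgroup axioms through the sequential-representation characterization of the preceding proposition: it checks that $1_{G'}$ has representation $\{Id_{V},0,0,\dots\}$, runs the composition formula \eqref{composicion} to show closure under products, and argues by contradiction (via the first nonvanishing $q_{n_{0}}$) that the inverse again satisfies $p_{1}=Id_{V}$, $p_{n}=0$ for $2\leqslant n\leqslant a$. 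What the paper's heavier approach buys is practice with exactly the computations that are indispensable later — the decomposition Lemma \ref{lema1}, the normality of $G'_{>a}$, and the commutator estimate all require manipulating \eqref{composicion} and \eqref{3.35} at the level of sequential representations, where no stabilizer shortcut is available. Your approach buys brevity and transparency for this particular statement; just be aware that it leans entirely on the set-theoretic definition of $G'_{>a}$ as a fixed-point set, so it would not transfer if one instead took the sequential-representation condition as the definition, whereas the paper's argument establishes the subgroup property in precisely that language.
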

\begin{proof}
By composing with $\pi:S^{c}V\rightarrow V$ it is clear that  $1_{G'}$ has the sequential representation  $\{Id_{V},0,0,\cdots \}$, then by the precedent proposition $1_{G'}\in G_{>a}$ for all $a\in \mathbb{N}$.
\par If the sequential representation of  $P$ and $Q\in G'$ are $\{p_{n}\}_{\mathbb{N}}$ and $\{q_{n}\}_{\mathbb{N}}$ respectively, then the product representation of $P\circ Q$ has sequential representation  $\{p_{n}\}_{\mathbb{N}}\cdot \{q_{n}\}_{\mathbb{N}}=\{(p\cdot q)_{n}\}_{\mathbb{N}}$ given by the following formula (see equation \eqref{composicion}),

$$(p\cdot q)_{n}(v_{1}\cdots v_{n})=\sum_{m=1}^{n}\frac{1}{m!}\sum_{I_{1}\cdots I_{m}\neq \emptyset; I_{1}\sqcup \cdots \sqcup I_{m}=\{1\cdots n\} }p_{m}(q_{|I_{1}|}(v_{I_{1}})\cdots q_{|I_{m}|}(v_{I_{m}})).$$

With this formula it is possible to show that $p\cdot q\in G'_{>a}$ provided that $p$ and $q$ are there. For example if $2\leqslant n \leqslant a$, having in account that  $q_{k}=0$ for all $2\leqslant k\leqslant a$, we see that in the sum only the terms with all $q_{1}$ survive, i.e. $m=n$. Also as $q_{1}$ is the identity: $q_{1}(v_{I_{1}})\cdots q_{1}(v_{I_{n}})=v_{I_{1}}\cdots v_{I_{n}}$, but $v_{I_{i}}=v_{j_{i}}$ because $|I_{i}|=1$ when $m=n$, then $q_{1}(v_{I_{1}})\cdots q_{1}(v_{I_{n}})=v_{j_{1}}\cdots v_{j_{n}}$. But as $\cdot$ is a symmetric product then all the terms $v_{j_{1}}\cdots v_{j_{n}}$ are equal, so there are $n!$ summands which are equal and consequently the last equality can be written as
$$(p\cdot q)_{n}(v_{1}\cdots v_{n})=n!\frac{1}{n!}p_{n}(v_{1}\cdots v_{m}).$$
\par This is zero because $p_{n}=0$. From the same equation \eqref{composicion}, but this time taking  $n=1$, we can conclude $(p\cdot q)_{1}=Id_{V}$.
\par Finally if $p\in G'_{>a}$, we take $q=p^{-1}$ in $G'$. Suppose $q\notin G'_{>a}$, and let $n_{0}$ be the first natural $\geqslant 2$ such that $q_{n_{0}}\neq 0$. Equation \eqref{composicion} show us that $q_{1}=p_{1}^{-1}$ (taking $n=1$). But $q\notin G'_{>a}$, then $n_{0}\leqslant a$. Then by using again the equation \eqref{composicion} taking $n=n_{0}$ and using the fact that  $\{(p\cdot q)_{n}\}_{\mathbb{N}}=\{Id_{V},0,0\cdots \}$ we obtain $0=p_{1}(q_{n_{0}}(v_{1}\cdots v_{n_{0}}))$.
But as $p_{1}=Id_{V}$ the equation turns to: $0=q_{n_{0}}(v_{1}\cdots v_{n_{0}})$  i.e. $q_{n_{0}}=0$, which is absurd. Hence $q\in G'_{>a}$. 
\end{proof}

\par Notice that the precedent proposition is trivially valid for the case $a=0$.
\par For $a\geqslant 2$ we define the sets
\begin{align*}
& G'_{a}=\{P\in G'/ p=\{Id_{V},0\cdots,0,p_{a},0\cdots\}\},\\
\text{and}\hspace{1cm}\\
& G'_{1}=\{P\in G'/ p=\{p_{1},0\cdots\}\}
\end{align*}
were we use the just described notation of sequential representation introduced in Chapter 1. An element in $P\in G'_{a}$  will be an automorphism $P\in G'$ such that when we restrict $\pi\circ P$ to $S^{n}V$ it is zero ($\pi:S^{n}V\to V$ is the canonical projection), and less $n=1$ or $n=a$.
\par The next proposition is proved by induction on $a$.

\begin{prop}\label{prop8}Let  $a\in \mathbb{N}$ and $P\in G'$. There are unique  $P_{1}, P_{2},\cdots P_{a}, P'\in G'$ such that  $P_{i}\in G'_{i}$, $P'\in G'_{>a}$ and $P=P_{1}\circ P_{2}\cdots P_{a}\circ P'$
\end{prop}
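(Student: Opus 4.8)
The plan is to prove existence and uniqueness simultaneously by induction on $a$, working entirely with sequential representations and using repeatedly the composition formula \eqref{composicion} together with the fact, established in Proposition \ref{sub}, that each $G'_{>b}$ is a subgroup of $G'$. Throughout I write $\{p_n\}$ for the sequential representation of $P$, and I recall from the proof of Proposition \ref{prop4} that $P\circ Q$ corresponds to the product $p\cdot q$ of representations.

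The technical heart — and the step I expect to be the main obstacle — is an additivity lemma at the lowest nontrivial degree: if $k\geqslant 2$ and $P,Q\in G'_{>(k-1)}$ (so that $p_1=q_1=Id_V$ and $p_j=q_j=0$ for $2\leqslant j\leqslant k-1$), then $(p\cdot q)_k=p_k+q_k$. To prove this I would specialize \eqref{composicion} to $n=k$ and note that a summand indexed by a partition $I_1\sqcup\cdots\sqcup I_m=\{1\cdots k\}$ survives only when every block has size $1$ or $k$ (otherwise some $q_{|I_i|}$ vanishes) and $m\in\{1,k\}$ (otherwise $p_m$ vanishes). Since the block sizes must sum to $k$, the only surviving configurations are the single block ($m=1$), contributing $q_k$ because $p_1=Id_V$, and the all-singletons partition ($m=k$), whose $k!$ equal terms cancel the factor $1/k!$ by symmetry of $\odot$ and $q_1=Id_V$, contributing $p_k$. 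Hence $(p\cdot q)_k=p_k+q_k$. Combined with Proposition \ref{sub}, which guarantees $P\circ Q\in G'_{>(k-1)}$ and so fixes all degrees $2$ through $k-1$, this determines $P\circ Q$ through degree $k$.

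For existence, the base case $a=1$ is settled by letting $P_1\in G'_1$ have sequential representation $\{p_1,0,0,\dots\}$ (legitimate, since $p_1$ is an isomorphism by Proposition \ref{prop4}) and setting $P'=P_1^{-1}\circ P$; then $(P')_1=p_1^{-1}\circ p_1=Id_V$, so $P'\in G'_{>1}$ and $P=P_1\circ P'$. For the inductive step I would assume the factorization for $a-1$, write $P=P_1\circ\cdots\circ P_{a-1}\circ\hat P$ with $\hat P\in G'_{>(a-1)}$, choose $P_a\in G'_a$ with $(P_a)_a=\hat p_a$, and put $P'=P_a^{-1}\circ\hat P$. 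Since $P_a,\hat P\in G'_{>(a-1)}$, Proposition \ref{sub} gives $P'\in G'_{>(a-1)}$, hence $(P')_1=Id_V$ and $(P')_j=0$ for $2\leqslant j\leqslant a-1$; the additivity lemma at $k=a$ then gives $(P')_a=(P_a^{-1})_a+\hat p_a=-\hat p_a+\hat p_a=0$, so $P'\in G'_{>a}$ and $P=P_1\circ\cdots\circ P_a\circ P'$.

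For uniqueness, suppose $P=P_1\circ\cdots\circ P_a\circ P'=\tilde P_1\circ\cdots\circ\tilde P_a\circ\tilde P'$. Because $G'_a\subseteq G'_{>(a-1)}$ and $G'_{>a}\subseteq G'_{>(a-1)}$, Proposition \ref{sub} shows that $P_a\circ P'$ and $\tilde P_a\circ\tilde P'$ both lie in $G'_{>(a-1)}$, so these two expressions are $(a-1)$-factorizations of $P$; the inductive hypothesis forces $P_i=\tilde P_i$ for $i\leqslant a-1$ and $P_a\circ P'=\tilde P_a\circ\tilde P'$. Applying the additivity lemma to the degree-$a$ component of this common element gives $(P_a)_a=(P_a\circ P')_a=(\tilde P_a\circ\tilde P')_a=(\tilde P_a)_a$, whence $P_a=\tilde P_a$, since an element of $G'_a$ is determined by its degree-$a$ component through the correspondence of Proposition \ref{prop3} and Remark \ref{correspondencia1}. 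Finally $P'=P_a^{-1}\circ(P_a\circ P')=\tilde P_a^{-1}\circ(\tilde P_a\circ\tilde P')=\tilde P'$, completing the induction.
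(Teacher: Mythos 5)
Your proof is correct and is essentially the paper's own argument: the paper isolates Lemma \ref{lema1} (peeling a single factor $P_{a+1}\in G'_{a+1}$ off an element of $G'_{>a}$) and deduces Proposition \ref{prop8} by iterating it, which is precisely your induction, with your choice $P'=P_a^{-1}\circ\hat P$ matching the paper's $P'$ with representation $\{(\tilde p^{-1}\cdot p)_n\}_{n\in\mathbb{N}}$. Your additivity lemma $(p\cdot q)_k=p_k+q_k$ is the same computation the paper performs inside Lemma \ref{lema1} — there it is obtained by explicitly computing $(\tilde p^{-1})_{a+1}=-\tilde p_{a+1}$ from \eqref{inv} and then showing $(\tilde p^{-1}\cdot p)_{a+1}=0$, and again in the uniqueness step, where the $(a+1)$-component of $P_{a+1}^{-1}\circ Q_{a+1}$ is found to be $-\tilde p_{a+1}+\tilde q_{a+1}$ — so your packaging merely streamlines the paper's explicit inverse computation (and note only that at degree $1$ you need the multiplicativity $(p\cdot q)_1=p_1\circ q_1$ rather than additivity, which your base case in fact uses).
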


\par The proof is deduced from the next lemma:

\begin{lemma}\label{lema1}Given $a\in \mathbb{N}$ and  $P\in G'_{>a}$ there exist unique $P_{a+1}$ and $P'\in G'$ such that $P_{a+1}\in G'_{a+1}$, $P'\in G'_{>a+1}$ and $P=P_{a+1}\circ P'$.
\end{lemma}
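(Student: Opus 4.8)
The plan is to carry out the entire argument in terms of the sequential representations introduced in Chapter~1, using the composition formula \eqref{composicion}. Write $\{p_n\}_{n\in\mathbb{N}}$ for the sequential representation of the given $P\in G'_{>a}$; by the proposition characterizing $G'_{>a}$ we have $p_1=Id_V$ and $p_n=0$ for all $2\leqslant n\leqslant a$, so the first component that may fail to vanish beyond $p_1$ is $p_{a+1}$. I would first single out a candidate for the factor in $G'_{a+1}$: let $P_{a+1}\in G'$ be the automorphism whose sequential representation is $\{Id_V,0,\dots,0,p_{a+1},0,\dots\}$, nonzero only in degrees $1$ and $a+1$. This is a legitimate element of $G'$ by Proposition \ref{prop4} (its degree-one part is $Id_V$, hence an isomorphism) and it lies in $G'_{a+1}$ by definition.

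Since $G'$ is a group, $P_{a+1}$ is invertible, so I would simply define $P'=P_{a+1}^{-1}\circ P$; then $P=P_{a+1}\circ P'$ holds automatically, which gives existence of the decomposition. It remains to verify that $P'\in G'_{>a+1}$. Because $G'_{a+1}\subseteq G'_{>a}$ and $G'_{>a}$ is a subgroup (Proposition \ref{bnbn}), we get $P_{a+1}^{-1}\in G'_{>a}$ and hence $P'=P_{a+1}^{-1}\circ P\in G'_{>a}$; thus its representation already satisfies $p'_1=Id_V$ and $p'_n=0$ for $2\leqslant n\leqslant a$. The only thing left to show is $p'_{a+1}=0$.

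The key computation is to evaluate the composition formula \eqref{composicion} in degree $a+1$ for $P=P_{a+1}\circ P'$, whose representation is the product $r\cdot p'$ with $r$ the representation of $P_{a+1}$. Since $r_m=0$ for $1<m<a+1$, only the terms with $m=1$ and $m=a+1$ survive. The term $m=1$ contributes $r_1\bigl(p'_{a+1}(v_1\cdots v_{a+1})\bigr)=p'_{a+1}(v_1\cdots v_{a+1})$, while the term $m=a+1$ forces every block $I_j$ to be a singleton, so the inner maps are all $p'_1=Id_V$ and the $(a+1)!$ ordered singleton partitions each reproduce $r_{a+1}(v_1\cdots v_{a+1})=p_{a+1}(v_1\cdots v_{a+1})$; after the factor $1/(a+1)!$ these combine into a single copy of $p_{a+1}$. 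Hence $(r\cdot p')_{a+1}=p_{a+1}+p'_{a+1}$. On the other hand $(r\cdot p')_{a+1}=p_{a+1}$, because the left-hand side is just the degree-$(a+1)$ part of $P$. Comparing the two expressions yields $p'_{a+1}=0$, so $P'\in G'_{>a+1}$.

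For uniqueness, I would observe that the same degree-$(a+1)$ identity forces any admissible factorization. If $P=\tilde P_{a+1}\circ\tilde P'$ with $\tilde P_{a+1}\in G'_{a+1}$ and $\tilde P'\in G'_{>a+1}$, then the displayed computation gives $p_{a+1}=\tilde r_{a+1}+\tilde p'_{a+1}=\tilde r_{a+1}$ since $\tilde p'_{a+1}=0$; thus the degree-$(a+1)$ component of $\tilde P_{a+1}$ equals $p_{a+1}$, and an element of $G'_{a+1}$ is completely determined by this component, so $\tilde P_{a+1}=P_{a+1}$ and consequently $\tilde P'=P_{a+1}^{-1}\circ P=P'$. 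The only delicate point in the whole argument is the degree-$(a+1)$ evaluation of \eqref{composicion}, that is, checking carefully that the intermediate terms $1<m<a+1$ vanish and that the singleton-partition count produces exactly one copy of $p_{a+1}$; everything else is a formal consequence of the group and subgroup structure already established.
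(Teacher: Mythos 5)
Your proof is correct, and it uses the paper's decomposition exactly: both you and the paper take $P_{a+1}$ to be the element of $G'_{a+1}$ whose degree-$(a+1)$ component is $p_{a+1}$, and both set $P'=P_{a+1}^{-1}\circ P$, so that existence of the factorization is automatic. Where you genuinely diverge is in the verification that $P'\in G'_{>a+1}$. The paper does this by explicitly computing the inverse $\tilde{p}^{-1}$ up to degree $a+1$ via equation \eqref{inv} --- showing $(\tilde{p}^{-1})_{k}=0$ for $2\leqslant k\leqslant a$ and $(\tilde{p}^{-1})_{a+1}=-\tilde{p}_{a+1}$, which is the bulk of its argument --- and then checking $(\tilde{p}^{-1}\cdot p)_{a+1}=0$ from \eqref{composicion}. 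You avoid computing any inverse: having placed $P'$ in $G'_{>a}$ by the subgroup property (Proposition \ref{bnbn}), you expand the degree-$(a+1)$ component of the identity $P=P_{a+1}\circ P'$ itself, where the vanishing of $r_{m}$ for $1<m<a+1$ leaves exactly $(r\cdot p')_{a+1}=p_{a+1}+p'_{a+1}$, forcing $p'_{a+1}=0$; the same expansion then drives your uniqueness step, whereas the paper again invokes the explicit form of the inverse there. Your route is shorter and less error-prone, since the singleton-partition count is the only combinatorial input needed. What the paper's route buys is the explicit formula $(\tilde{p}^{-1})_{a+1}=-\tilde{p}_{a+1}$, which it re-uses elsewhere (notably in the proof that $G'_{>a}$ is normal in $G'$, where the shape of $P^{-1}$ obtained from \eqref{3.35} is cited directly), so the extra computation is not wasted at the level of the chapter even though it is avoidable at the level of this lemma.
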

\begin{proof} We give the proof for the case $a\geqslant 2$, the case $a=1$ is similar. If the representation of  $P$ is given by  $\{Id_{V},0\cdots 0,p_{a+1},p_{a+2}\cdots \}$, we take $P_{a+1}$ as the only one whose sequential representation is $\tilde{p}=\{Id_{V},0\cdots p_{a+1},0 \cdots \}$ and  $P'$ whose representation is $\{(\tilde{p}^{-1}\cdot p)_{n}\}_{n\in \mathbb{N}}$. By using the equation \eqref{composicion} it is not difficult to see that $(\tilde{p}^{-1}\cdot p)_{1}=Id_{V}$, and as $\tilde{p},p\in G'_{>a}$ (which is a group as we saw in Proposition \ref{sub}) then $(\tilde{p}^{-1}\cdot p)_{n}=0$ for all $2\leqslant n \leqslant a$. By means of \eqref{composicion} and computing $\tilde{p}^{-1}$ up to degree $a+1$ with \eqref{3.35}, we can see that  $\tilde{p}^{-1}=\{Id_{V},0\cdots 0,-p_{a+1},\cdots\}$ and $(\tilde{p}^{-1}\cdot p)_{a+1}=0$, so $\tilde{p}^{-1}\cdot p \in G'_{>a+1}$. Let us see in detail that, if $\tilde{p}=\{Id_{V},0\cdots p_{a+1},0 \cdots \}$ then its inverse satisfies $(\tilde{p}^{-1})_{1}=(\tilde{p}_{1})^{-1}=Id_{V}^{-1}=Id_{V}$ and then by using the equation \eqref{inv} for $2\leqslant k\leqslant a$ we have

\begin{align*}
& (\tilde{p}^{-1})_{k}(v_{1}\cdot v_{2}\cdots v_{k})\\ 
& =-\sum_{m=2}^{k}\frac{1}{m!}\sum_{I_{1}\cdots I_{m}\neq \emptyset ; I_{1}\sqcup \cdots \sqcup I_{m}=\{1\cdots k\} }(\tilde{p}^{-1})_{1}(\tilde{p}_{m}((\tilde{p}^{-1})_{|I_{1}|}(v_{I_{1}})\cdots (\tilde{p}^{-1})_{|I_{m}|}(v_{I_{m}}))),\\
\end{align*}
but as we say $(\tilde{p}^{-1})_{1}$ is the identity and then 
\begin{align*}
& (\tilde{p}^{-1})_{k}(v_{1}\cdot v_{2}\cdots v_{k})\\ 
& =-\sum_{m=2}^{k}\frac{1}{m!}\sum_{I_{1}\cdots I_{m}\neq \emptyset ; I_{1}\sqcup \cdots \sqcup I_{m}=\{1\cdots k\} }\tilde{p}_{m}((\tilde{p}^{-1})_{|I_{1}|}(v_{I_{1}})\cdots (\tilde{p}^{-1})_{|I_{m}|}(v_{I_{m}})),\\
\end{align*}
which is zero because all the $\tilde{p}_{m}$ has $m\leqslant k\leqslant a$, and then   $\tilde{p}_{m}=0$.
\par Lastly if we take $n=a+1$ in \eqref{inv}, we obtain: 

\begin{align*}
& (\tilde{p}^{-1})_{a+1}(v_{1}\cdot v_{2}\cdots v_{a+1})\\ 
& =-\sum_{m=2}^{a+1}\frac{1}{m!}\sum_{I_{1}\cdots I_{m}\neq \emptyset ; I_{1}\sqcup \cdots \sqcup I_{m}=\{1,\cdots, a+1\} }\tilde{p}_{m}((\tilde{p}^{-1})_{|I_{1}|}(v_{I_{1}})\cdots (\tilde{p}^{-1})_{|I_{m}|}(v_{I_{m}})).
\end{align*}
By the same aforementioned considerations all the summands are zero except the only ones in which appears $\tilde{p}_{a+1}$, so
\begin{align*}
& (\tilde{p}^{-1})_{a+1}(v_{1}\cdot v_{2}\cdots v_{a+1})\\ 
& =-\frac{1}{(a+1)!}\sum_{I_{1}\cdots I_{a+1}\neq \emptyset ; I_{1}\sqcup \cdots \sqcup I_{a+1}=\{1,\cdots, a+1\} }\tilde{p}_{a+1}((\tilde{p}^{-1})_{|I_{1}|}(v_{I_{1}})\cdots (\tilde{p}^{-1})_{|I_{a+1}|}(v_{I_{a+1}})).
\end{align*}
The only way such that $a+1$ non-empty sets have disjoint union equal to $\{ 1,\cdots, a+1 \}$ is that each one of them has cardinal $1$ and then for all $1\leqslant j\leqslant a+1$ we have $(\tilde{p}^{-1})_{|I_{j}|}(v_{I_{j}})=(\tilde{p}^{-1})_{1}(v_{i_{j}})=I_{V}(v_{i_{j}})=v_{i_{j}}$. Consequently,
\begin{align*}
 & (\tilde{p}^{-1})_{a+1}(v_{1}\cdot v_{2}\cdots v_{a+1})\\
 & =-\frac{1}{(a+1)!}\sum_{I_{1}\cdots I_{a+1}\neq \emptyset ; I_{1}\sqcup \cdots \sqcup I_{a+1}=\{1,\cdots, a+1\} }\tilde{p}_{a+1}(v_{i_{1}}\cdots v_{i_{a+1}}).
 \end{align*}
Since the $\cdot$ is a symmetric product 
we can write,
\begin{eqnarray*}
  (\tilde{p}^{-1})_{a+1}(v_{1}\cdot v_{2}\cdots v_{a+1})=-\tilde{p}_{a+1}(v_{1}\cdots v_{a+1}).
\end{eqnarray*}
 
 That is $ (\tilde{p}^{-1})_{a+1}=-\tilde{p}_{a+1}$. Now we know that  $\tilde{p}^{-1}=\{Id_{V},0\cdots 0,-p_{a+1},\cdots\}$, it is no so difficult to see that  $(\tilde{p}^{-1}\cdot p)_{a+1}=0$ by using the equation \eqref{composicion} for $n=a+1$.

\par The uniqueness is deduced from a similar argument, if $P=P_{a+1}\cdot P' =Q_{a+1}\cdot Q'$ with $P_{a+1}, Q_{a+1}\in G'_{a+1}$ and $P', Q'\in G'_{>a+1}$ then $P_{a+1}^{-1}\cdot Q_{a+1}=P'\cdot Q'^{-1}$.

 If the sequential representation of $P_{a+1}$ is $\{ Id_{V},0\cdots,\tilde{p}_{a+1},0, \cdots \}$ the element $a+1$ of the sequential representation of $P^{-1}_{a+1}$ is $-\tilde{p}_{a+1}$ and if $Q_{a+1}$ has representation $\{ Id_{V},0\cdots,\tilde{q}_{a+1},0, \cdots \}$ so the $a+1$-element of the representation of $P^{-1}_{a+1}\cdot Q_{a+1}$ is $-\tilde{p}_{a+1}+\tilde{q}_{a+1}$, but $P_{a+1}^{-1}\cdot Q_{a+1}\in G'_{>a+1}$ so $-\tilde{p}_{a+1}+\tilde{q}_{a+1}=0$ and $P_{a+1}=Q_{a+1}$, which says us that $P'=Q'$. 
\end{proof}

\par The proof of Proposition \ref{prop8} is deduced by repeated application of the precedent lemma.

\par For Proposition \ref{prop8} and Lemma \ref{lema1} there exist similar statements in which the order of the compositions is the opposite. We rewrite Proposition \ref{prop8} in these terms:

{\prop\label{prop9}{Let  $a\in \mathbb{N}$ and $P\in G'$. There are unique  $P_{1}, P_{2},\cdots P_{a}, P'\in G'$ such that  $P_{i}\in G'_{i}$, $P'\in G'_{>a}$ and $P=P'\circ P_{a}\circ P_{a-1}\cdots P_{1}$
}}

{\prop{$G'_{>a}$ is a normal subgroup of $G'$ for all $a\in \mathbb{N}$.}}
\begin{proof}
 Thanks to Propositions \ref{bnbn} and \ref{prop8} it is sufficient to prove that $P^{-1}\cdot G'_{>a}\cdot P \subseteq G'_{>a}$, for all $P\in G'_{i}$ with $1\leqslant i \leqslant a$.
\par If $i=1$ and $Q\in G'_{>a}$ then abusing the notation $Q=\{ Id_{V},0\cdots,0,q_{a+1},\cdots \}$,  and with \eqref{composicion} is deduced $P^{-1}\cdot Q\cdot P=\{ Id_{V},0\cdots,0, p_{1}\circ q_{a+1}\circ p_{1}^{\odot (a+1)},\cdots \}$ which is in $G'_{>a}$  (where $\odot$ denotes the symmetric product).
\par If $2\leqslant i \leqslant a$; $Q=\{ Id_{V}, 0,\cdots,0,q_{a+1},\cdots \}$ and $P=\{ Id_{V},0,\cdots,0,p_{i},0\cdots \}$ by using \eqref{composicion} obtain $Q\cdot P=\{ Id_{V},0,\cdots,0,p_{i},0, \cdots ,0,q'_{a+1},\cdots \}$.
\par By using \eqref{3.35} $P^{-1}=\{ Id_{V},0\cdots,0,-p_{i},r_{i+1},\cdots, r_{a},\cdots \}$.But when we multiply $P^{-1}\cdot Q\cdot P$ the first $a$ elements are calculated in the same manner which the first $a$ elements of $P^{-1}\cdot P$, because the elements with order bigger than $a$ do not appear. So $(\pi\circ P^{-1}\cdot Q \cdot P)|_{S^{n}V}=(\pi\circ P^{-1}\cdot P)|_{S^{n}V}$ for all $1\leqslant n \leqslant a$, which implicates that $P^{-1}\cdot Q\cdot P\in G'_{>a}$.   
\end{proof}

\par The preceding proposition is valid in the case $a=0$ too but in this case the proof is immediate. The next corollary is a consequence of the above results.

\begin{coro}\label{iso}
For all $a\in \mathbb{N}$ we have  $G'/G'_{>a}\simeq G'_{\leqslant a}$. 
\end{coro}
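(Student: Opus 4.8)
The plan is to realize $G'_{\leqslant a}$ as a complete and irredundant set of representatives for the cosets of $G'_{>a}$ in $G'$, and then to transport the quotient group structure across the resulting set-theoretic bijection. First I would record the group-theoretic scaffolding already available: by the preceding proposition $G'_{>a}$ is a normal subgroup of $G'$, so $G'/G'_{>a}$ is a genuine group and the canonical projection $q\colon G'\to G'/G'_{>a}$ is a surjective homomorphism. The heart of the argument is Proposition \ref{prop8}: every $P\in G'$ factors uniquely as $P=P_1\circ\cdots\circ P_a\circ P'$ with $P_i\in G'_i$ and $P'\in G'_{>a}$.

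Writing $P_{\leqslant a}:=P_1\circ\cdots\circ P_a\in G'_{\leqslant a}$, the trailing factor $P'\in G'_{>a}$ gives $q(P)=q(P_{\leqslant a})$, whence $q|_{G'_{\leqslant a}}$ is onto; and the uniqueness clause of Proposition \ref{prop8} forces two elements of $G'_{\leqslant a}$ lying in the same $G'_{>a}$-coset to coincide, so $q|_{G'_{\leqslant a}}$ is also injective. This already yields the bijection between $G'_{\leqslant a}$ and $G'/G'_{>a}$ of underlying sets, and both verifications are immediate consequences of the uniqueness in Proposition \ref{prop8}.

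The delicate point, and the step I expect to be the main obstacle, is the group structure on $G'_{\leqslant a}$. Here one must be careful: $G'_{\leqslant a}$ is \emph{not} a subgroup of $G'$ for the composition law, since formula \eqref{composicion} shows that the composite of two elements whose sequential representation vanishes above degree $a$ can acquire nonzero components in degrees $>a$ (already visible in degree $a+1$ when $a=2$). The correct operation is therefore the transported one, $R*S:=(R\circ S)_{\leqslant a}$, namely the $G'_{\leqslant a}$-factor of $R\circ S$ in the factorization of Proposition \ref{prop8}. With this law one has $q(R*S)=q(R\circ S)=q(R)\,q(S)$, because $R\circ S$ and $(R\circ S)_{\leqslant a}$ differ by an element of $G'_{>a}$, so the bijection above is automatically a group isomorphism; associativity and inverses for $*$ need not be checked by hand, as they are inherited from $G'/G'_{>a}$ through the bijection. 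I would close by remarking that Proposition \ref{prop9} supplies the mirror-image factorization and can be used instead if one prefers to describe $G'_{\leqslant a}$ via the reversed products $P'\circ P_a\circ\cdots\circ P_1$.
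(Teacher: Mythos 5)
Your proof is correct and follows exactly the route the paper intends: the paper states this corollary without proof, as ``a consequence of the above results,'' meaning precisely the two ingredients you use --- the unique factorization $P=P_{1}\circ\cdots\circ P_{a}\circ P'$ of Proposition \ref{prop8} (giving that $G'_{\leqslant a}=G'_{1}\cdots G'_{a}$ is a complete, irredundant set of coset representatives) and the normality of $G'_{>a}$ (making the quotient a group). Your explicit treatment of the group law on $G'_{\leqslant a}$ --- transporting the structure through the bijection rather than using composition in $G'$, since the set of products $G'_{1}\cdots G'_{a}$ need not be closed under composition --- is a subtlety the paper glosses over, and you handle it correctly (the only slight overstatement being that for $a=1$ the set $G'_{1}$ actually is a subgroup, though this does not affect the argument).
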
 

 \begin{prop}For all $a,b\in \mathbb{N}$; $[G'_{>a},G'_{>b}]\subseteq G'_{>a+b}$
 \end{prop}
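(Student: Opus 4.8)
The plan is to realise each element of $G'_{>a}$ as a perturbation of the identity that strictly lowers the symmetric degree, and then to display the low-degree part of a commutator as a telescoping sum that cancels. Throughout I work with the increasing filtration $F_m:=S^{\leqslant m}V$ of $SV=\bigoplus_{n\geqslant 0}S^{n}V$, and I measure a morphism $T\colon SV\to SV$ by its \emph{order}, the largest $d$ with $T(F_m)\subseteq F_{m-d}$ for every $m$ (with $F_k=0$ for $k<0$). Orders add under composition and satisfy $\operatorname{ord}(S+T)\geqslant\min(\operatorname{ord}S,\operatorname{ord}T)$, so the whole argument reduces to an order estimate.

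First I would prove a degree-drop lemma: if $P\in G'_{>a}$ then $A:=P-Id$ has $\operatorname{ord}(A)\geqslant a$, and moreover $A$ maps $SV$ into strictly positive degrees. By the characterisation of $G'_{>a}$ in terms of sequential representations, $P\in G'_{>a}$ means $p_1=Id_V$ and $p_k=0$ for $2\leqslant k\leqslant a$. Inserting this into \eqref{3.33}, a summand of $P(v_1\cdots v_n)$ survives only if every block $I_j$ has $|I_j|=1$ or $|I_j|\geqslant a+1$; if such a summand is not the all-singletons (identity) term it has $s\geqslant 1$ blocks of size $\geqslant a+1$ and lands in $S^{m}V$ with $n=\sum_j|I_j|\geqslant m+sa\geqslant m+a$, whence $m\leqslant n-a$. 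Thus $A(S^{n}V)\subseteq\bigoplus_{1\leqslant m\leqslant n-a}S^{m}V\subseteq F_{n-a}$, so $\operatorname{ord}(A)\geqslant a$. Since $G'_{>a}$ is a subgroup (Proposition \ref{bnbn}), we also have $P^{-1}\in G'_{>a}$, so $A':=P^{-1}-Id$ has order $\geqslant a$; likewise $B:=Q-Id$ and $B':=Q^{-1}-Id$ have order $\geqslant b$ for $Q\in G'_{>b}$.

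The key step is the cancellation. Writing $C:=[P,Q]=P\circ Q\circ P^{-1}\circ Q^{-1}=(Id+A)(Id+B)(Id+A')(Id+B')$ and expanding, $C-Id$ is the sum over the $2^4-1=15$ nonempty choices of one non-identity factor from each bracket. Every product that mixes an $A$-type factor with a $B$-type factor, and every product of length $\geqslant 3$, has order $\geqslant a+b$; the only potentially low-order products are $A,A',AA'$ and $B,B',BB'$, which regroup as $(A+A'+AA')+(B+B'+BB')$. But $(Id+A)(Id+A')=P\circ P^{-1}=Id$ forces $A+A'+AA'=0$, and similarly $B+B'+BB'=0$. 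Hence $C-Id$ is a sum of terms each of order $\geqslant a+b$, so $\operatorname{ord}(C-Id)\geqslant a+b$ (the identical grouping handles the opposite commutator convention, using the subgroup property once more). Combining this with the positivity of degrees from the first step, for $n\leqslant a+b$ we get $(C-Id)(S^{n}V)\subseteq\overline{SV}\cap F_{n-a-b}=0$, so $C$ fixes $S^{\leqslant a+b}V$, i.e. $C\in G'_{>a+b}$. Since such commutators generate $[G'_{>a},G'_{>b}]$, the claim follows.

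The crux — and really the only nonroutine point — is recognising that the apparently dangerous order-$<(a+b)$ contributions to $C-Id$ reorganise exactly into the two trivially vanishing expressions $A+A'+AA'$ and $B+B'+BB'$; by contrast the degree-drop bookkeeping out of \eqref{3.33} is mechanical. The one place where care is genuinely needed is checking that $A'$ and $B'$ inherit the order bounds, which is precisely where the fact that each $G'_{>a}$ is a subgroup (Proposition \ref{bnbn}) enters the argument.
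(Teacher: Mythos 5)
Your proof is correct, but it follows a genuinely different route from the paper's. The paper proceeds by brute-force comparison of sequential representations: it applies the composition formula \eqref{composicion} separately to $P\circ Q$ and to $Q\circ P$, computes both representations degree by degree, and finds that each equals $\{Id_{V},0,\dots,0,p_{a+1},\dots,p_{b},p_{b+1}+q_{b+1},\dots,p_{a+b}+q_{a+b},\dots\}$ through the term $a+b$; since $G'_{>a+b}$ is normal and $G'/G'_{>a+b}\simeq G'_{\leqslant a+b}$ (Corollary \ref{iso}), the equality $P\circ Q=Q\circ P$ modulo $G'_{>a+b}$ gives the claim, with no need to control inverses at all. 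You instead extract from \eqref{3.33} a single qualitative fact --- your degree-drop lemma, that $P-Id$ lowers symmetric degree by at least $a$ and takes values in positive degrees --- and then let the group identities $PP^{-1}=Id$ and $QQ^{-1}=Id$ cancel exactly the dangerous pure-$A$ and pure-$B$ terms $A+A'+AA'$ and $B+B'+BB'$ in the sixteen-term expansion of $(Id+A)(Id+B)(Id+A')(Id+B')$, so that every surviving term has order at least $a+b$. Your argument is shorter and more structural: it never computes an inverse representation via \eqref{3.35} (you need Proposition \ref{bnbn} only to know that $P^{-1}\in G'_{>a}$ and $Q^{-1}\in G'_{>b}$, which is indeed where the subgroup property must enter), and it transfers verbatim to any filtered automorphism group in which $G'_{>a}$ is characterized as the identity plus a positive-degree-valued perturbation of order at least $a$. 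The paper's computation is heavier but yields strictly more information: it exhibits composition of elements of $G'_{>a}$ and $G'_{>b}$, modulo $G'_{>a+b}$, as termwise addition of the tails of the sequential representations, an explicit description in the same spirit as the surrounding structure results (Propositions \ref{prop8} and \ref{prop9} and the ensuing corollaries), whereas your order estimate certifies the inclusion without describing the quotient.
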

 \begin{proof}
 Without loss of generality we can suppose $a\leqslant b$. Let $P\in G'_{>a}$ and $Q\in G'_{>b}$. Then the sequential representation of $P$ is $\{ p_{n}\}_{n\in \mathbb{N}}=\{Id_{V},0,\cdots,0,p_{a+1},p_{a+2},\cdots\}$ and similarly for $Q$,   $\{ q_{n}\}_{n\in \mathbb{N}}=\{Id_{V},0,\cdots,0,q_{b+1},q_{b+2},\cdots\}$. By the composition formula we have
 
 \begin{equation}\label{ñññ}
 (p\cdot q)_{n}(v_{1}\cdots v_{n})=\sum_{m=1}^{n}\frac{1}{m!}\sum_{I_{i}\neq \emptyset; \sqcup_{i=1}^{m}I_{i}=\{ 1,\cdots,n\}}p_{m}(q_{|I_{1}|}(v_{I_{1}})\cdots q_{|I_{m}|}(v_{I_{m}}) )
 \end{equation}
 
 If $n=1$ in \eqref{ñññ} we have $(p\cdot q)_{1}=Id_{V}$. For $n\in \{ 2,\cdots,a \}$, in the sum over $m$ only the terms with $m=1$ survive, because $p_{j}=0$ for $2\leqslant j\leqslant n \leqslant a$, then $(p\cdot q)_{n}(v_{1}\cdots v_{n})=q_{n}(v_{1}\cdots v_{n})$ which is equal to zero because of $2\leqslant n \leqslant a\leqslant b$.
 
 For $a+1\leqslant n\leqslant b$, in \eqref{ñññ} only the terms with $m=1, a+1,a+2,\cdots n$ survive, so
 \begin{equation}\label{kbkb}
 (p\cdot q)_{n}(v_{1}\cdots v_{n})=q_{n}(v_{1}\cdots v_{n}) +\sum_{m=a+1}^{n}\frac{1}{m!}\sum_{I_{i}\neq \emptyset; \sqcup_{i=1}^{m}I_{i}=\{ 1,\cdots,n\}}p_{m}(q_{|I_{1}|}(v_{I_{1}})\cdots q_{|I_{m}|}(v_{I_{m}}) )
 \end{equation}
 but $q_{n}=0$. Moreover, if $m\in \{ a+1,\cdots, n-1 \}$ then there exists at least one $I_{i_{0}}$ such that $|I_{i_{0}}|\geqslant 2$ (and $|I_{i_{0}}|\leqslant b$) and consequently $q_{|I_{i_{0}}|}=0$ and \eqref{kbkb} becomes 
 \begin{equation*}
 (p\cdot q)_{n}(v_{1}\cdots v_{n})=p_{n}(v_{1}\cdots v_{n}).
 \end{equation*}
 
 Lastly if $b+1\leqslant n \leqslant a+b$, as before in the sum over $m$ only the terms with $m=1,a+1,a+2,\cdots n$ persist. Then with similar arguments as in the previous cases we have,
 \begin{equation*}
 (p\cdot q)_{n}(v_{1}\cdots v_{n})=q_{n}(v_{1}\cdots v_{n})+p_{n}(v_{1}\cdots v_{n}).
 \end{equation*}
 
 So we have 
 \begin{equation}\label{aaa}
 \{ (p\cdot q)\}_{n\in \mathbb{N}}=\{ Id_{v},0,\cdots,0, p_{a+1},\cdots, p_{b},p_{b+1}+q_{b+1},\cdots,p_{b+a}+q_{b+a},\cdots \}.
 \end{equation}
 
 The computation of the sequential representation of $Q\circ P$ is a bit easier and gives us the same result up to the $a+b$ term.
 For this composition we have the formula
 \begin{equation}\label{zz}
 (q\cdot p)_{n}(v_{1}\cdots v_{n})=\sum_{m=1}^{n}\frac{1}{m!}\sum_{I_{i}\neq \emptyset; \sqcup_{i=1}^{m}I_{i}=\{ 1,\cdots,n\}}q_{m}(p_{|I_{1}|}(v_{I_{1}})\cdots p_{|I_{m}|}(v_{I_{m}}) ).
 \end{equation}
 
 It is easy to check the $(q\cdot p)_{1}=Id_{V}$, and that if $n\in \{ 2,\cdots,a \}$ then  $(q\cdot p)_{n}=0$, because \eqref{zz} becomes 
 \begin{equation}
 (q\cdot p)_{n}(v_{1}\cdots v_{n})=q_{1}(p_{|I_{1}|}(v_{I_{1}}))=p_{n}(v_{1}\cdots v_{n})=0,
 \end{equation}for $I_{1}=\{ 1,\cdots ,n \}$.
 
 For  $n\in \{ a+1,\cdots,b \}$ we have 
 \begin{equation}
 (q\cdot p)_{n}(v_{1}\cdots v_{n})=q_{1}(p_{|I_{1}|}(v_{I_{1}}))=p_{n}(v_{1}\cdots v_{n}),
 \end{equation}but this time $p_{n}\neq 0$, because $n\geqslant a+1$.
 
 Finally for $n\in \{ b+1,\cdots , b+a \}$ in \eqref{zz} only the terms with $m=1,b+1,b+2,\cdots,n$ survive so, 
 
 \begin{equation*}
 (q\cdot p)_{n}(v_{1}\cdots v_{n})=q_{1}(p_{n}(v_{1}\cdots v_{n}))+\sum_{m=b+1}^{n}\frac{1}{m!}\sum_{I_{i}\neq \emptyset; \sqcup_{i=1}^{m}I_{i}=\{ 1,\cdots,n\}}q_{m}(p_{|I_{1}|}(v_{I_{1}})\cdots p_{|I_{m}|}(v_{I_{m}}) ),
 \end{equation*} and for $m\in \{ b+1,\cdots,n-1 \}$ there exists at least one $I_{i_{0}}$ with more than one element and less then $a+1$, hence $p_{|I_{i_{0}}|}=0$. That is, $(q\cdot p)_{n}(v_{1}\cdots v_{n})=p_{n}(v_{1}\cdots v_{n})+q_{n}(v_{1}\cdots v_{n})$.
 
 Then the sequential representation of $Q\circ P$ is the same that the right hand side of \eqref{aaa} up to the $a+b$-term.
 Whence $P\circ Q=Q\circ P$ in $G'_{\leqslant a+b}$, which proves what we want.
 \end{proof}

 \par  We have just seen that it has sense to talk about $G'/G'_{>a}$ for all $a\in \mathbb{N}$ it is possible to see that  $G'$ is the inverse limit of  $G'/G'_{>n}$.
  
{\lemma{$G'$ is the direct limit of the projective system $\{G'/G'_{>n}\}_{n\in \mathbb{N}}$ .}}
\begin{proof}
 If $a<b$ are natural numbers then we can define $\pi_{ab}:G'/G'_{>b}\rightarrow G'/G'_{>a}$ as $\pi_{ab}(P_{1}P_{2}\cdots P_{b})=P_{1}P_{2}\cdots P_{a}$ where we have used that each element that belongs to $G'/G'_{>b}$ can be written uniquely as a product by Lemma \ref{iso}. This function clearly satisfies $\pi_{ab}\circ \pi_{bc}=\pi_{ac}$ and so we have a projective system.
\par Moreover the morphisms $\pi_{a}:G'\rightarrow G'/G'_{>a}$ such that $\pi_{a}(P)=[P]_{G'/G'_{>a}}=P_{1}\cdots P_{a}$ commutes with $\pi_{ab}$ ($a<b$) because $\pi_{ab}\circ\pi_{b}(P)=\pi_{ab}(P_{1}\cdots P_{b})=P_{1}\cdots P_{a}=\pi_{a}(P)$.
\par Suppose that $W$ is another group with a family of arrows $f_{a}:W \rightarrow G'/G'_{>a}$ satisfying $\pi_{ab}\circ f_{b}=f_{a}$ too, for $a<b$. We can define a unique morphism $f:W\rightarrow G'$ such that the following diagram 
 \begin{eqnarray*}
\xymatrix{
G' \ar[rr]^{\pi_{b}}& & G'/G'_{>b} \ar[d]^{\pi_{ab}}\\
W \ar@{-->}[u]^{f} \ar[urr]^{f_{b}} \ar[rr]_{f_{a}} & & G'/G'_{>a}
}
  \end{eqnarray*} commutes.
\end{proof}

\par We will define a topology on $G'$ (its projective topology, the finest such that $\pi_{n}$ are all continuous), where a subbase of the open neighbourhoods of zero is given by the $G'_{>n}$. Then in the above diagram we can define as $f(w)=\lim_{n\to \infty}f_{n}(w)\in G'/G'_{>n}=G'_{1}\cdots G'_{n}\subseteq G'$ which exists because $f_{n}$ commutes with $\pi_{ab}$. With this definition we have $\pi_{b}(f(w))=\pi_{b}(\lim_{n\to \infty}f_{n}(w))=\lim_{n\to \infty}\pi_{b}f_{n}(w)$ where used the continuity of $\pi_{b}$ with respect to this topology. Note that for all  $n>b$,  $f_{n}(w)$ has more than $b$ "elements", and so $\pi_{b}(f_{n}(w))=f_{b}(w)$ and consequently $\pi_{b}\circ f=f_{b}$.

\par As a major consequence of the last results we can stablish the next corollary, its proof is immediate.

{\coro{ $G'$ is equal to $G'_{1}G'_{2}G'_{3}\cdots$in the sense that any element $P\in G'$ can be written as an infinite product $P=P_{1}\circ P_{2}\circ P_{3}\cdots$ of elements $P_{i}\in G'_{i}$ and conversely each of this infinite products in such conditions belongs to $G'$.}}
\begin{proof} 
Given $P\in G'$, call $P_{i}$ the morphisms belonging to $G'_{i}$ proportioned by Proposition \ref{prop8}. The limit $P_{1}\circ P_{2}\circ P_{3}\cdots$ (i.e. $\lim_{n\to \infty} P_{1}\cdots P_{n}$ ) where $P_{i}\in G'_{i}$ converges in $G'$ because a subbase of open sets for the projective topology in $G'$ are $G'_{>k}$ for $k\in \mathbb{N}$ and then the difference between $P$ and $P_{1}\cdots P_{n}$ belongs to any $G'_{>a}$, provided $n$ is sufficiently large.  
\end{proof}
\par This corollary has also a version with the products in the opposite order. 

\begin{coro}\label{hhhhhh}
$G'$ is equal to $\cdots G'_{3}G'_{2}G'_{1}$in the sense that any element $P\in G'$ can be written as an infinite product $P=\cdots P_{3}\circ P_{2}\circ P_{1}$ of elements $P_{i}\in G'_{i}$ and conversely each of this infinite products in such conditions belongs to $G'$.
\end{coro}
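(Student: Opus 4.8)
The plan is to mirror the proof of the preceding corollary, replacing Proposition \ref{prop8} by its opposite-order analogue Proposition \ref{prop9}. First I would fix $P\in G'$ and, for each $a\in \mathbb{N}$, apply Proposition \ref{prop9} to obtain unique factors $P_1,\dots,P_a\in G'$ with $P_i\in G'_i$ together with a unique $P'\in G'_{>a}$ such that
\[
P=P'\circ P_a\circ P_{a-1}\circ \cdots \circ P_1.
\]
The preliminary observation that makes the infinite product meaningful is that the factors $P_i$ do not depend on the chosen truncation level $a$: comparing the decomposition at level $a$ with the one at level $a+1$, the uniqueness clause of Proposition \ref{prop9} (resting on the opposite-order form of Lemma \ref{lema1}) forces the two families to agree for all indices $\leqslant a$. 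Hence a single canonical sequence $\{P_i\}_{i\in\mathbb{N}}$ with $P_i\in G'_i$ is attached to $P$.

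Next I would show that the partial products $Q_n=P_n\circ \cdots \circ P_2\circ P_1$ converge to $P$ in the projective topology of $G'$, whose subbase of neighbourhoods of the identity is given by the sets $G'_{>k}$. By the decomposition above we have $P\circ Q_n^{-1}=P'\in G'_{>n}$, so $P$ and $Q_n$ differ by an element of $G'_{>n}$. Since every basic neighbourhood of the identity contains some $G'_{>k}$, it follows at once that $Q_n\to P$, which is exactly the assertion $P=\cdots P_3\circ P_2\circ P_1$ in the stated sense.

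For the converse I would take an arbitrary family $P_i\in G'_i$ and argue that the partial products $Q_n=P_n\circ \cdots \circ P_1$ form a Cauchy sequence. Indeed, for $m>n$ one computes $Q_m\circ Q_n^{-1}=P_m\circ \cdots \circ P_{n+1}$, and each $P_j$ with $j>n$ lies in $G'_j\subseteq G'_{>n}$ (its sequential representation has $p_1=Id_V$ and vanishes in degrees $2,\dots,n$); since $G'_{>n}$ is a subgroup by Proposition \ref{bnbn}, the whole product lies in $G'_{>n}$. Thus the tail differences shrink into the subbase neighbourhoods, and the completeness of $G'$ furnished by its description as the limit of the $G'/G'_{>n}$ established above guarantees that $\cdots P_3\circ P_2\circ P_1$ exists in $G'$.

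The step requiring the most care is the consistency of the factors across truncation levels in the first paragraph: one must verify that passing from level $a$ to level $a+1$ genuinely reproduces the same $P_1,\dots,P_a$, and this is precisely where the uniqueness in Proposition \ref{prop9} (and in the underlying Lemma \ref{lema1}) does the essential work. Everything else is a formal consequence of the topology together with the subgroup, normality, and limit properties already established.
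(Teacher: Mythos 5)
Your proof is correct and takes essentially the same route as the paper: the paper proves the mirror-image corollary by extracting the factors $P_{i}$ from the finite decomposition (Proposition \ref{prop8}, here replaced by Proposition \ref{prop9}) and observing that the partial products converge to $P$ in the projective topology whose subbasic identity neighbourhoods are the $G'_{>k}$, and it then asserts the present opposite-order version by the same argument. Your extra steps — checking consistency of the factors across truncation levels via the uniqueness in Proposition \ref{prop9}, and establishing the converse through a Cauchy/completeness argument in the inverse limit — are correct refinements of details the paper leaves implicit, not a different method.
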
 

\par  It is of central interest for us the case $V=\omega SJ\Phi$. In this case $G'$ is not the renormalization group, because an element belonging to $G'$ may not preserve the coaction of $SJ\Phi$.\\




\par The natural map $S\Gamma \omega SJ\Phi\rightarrow \Gamma S\omega SJ\Phi$ is not an isomorphism because as we mentioned, there is a difference between the symmetric product in the category of sheafs of modules over the ringed space $(\mathcal{M},\mathcal{C}^{\mathbb{R}})$ and the symmetric product in the category of $\mathbb{R}$-vector spaces. For example if we denote by $\odot$ the symmetric product in the category  of sheafs of modules over the ringed space $(\mathcal{M},\mathcal{C}^{\mathbb{R}})$ and by $\cdot$ in the category of   $\mathbb{R}$-vector spaces, the natural map sends $\psi \cdot \eta\rightarrow \psi\odot\eta$ where $\psi$  and  $\eta$ are lagrangian densities.
\par The difference between these product is that if $f$ is a smooth function over $\mathcal{M}$ then $f\psi \cdot \eta\neq \psi\cdot f\eta$ but $f\psi \odot \eta=\psi\odot f\eta$, hence the natural map is not injective.

{\lemma\label{lema7}{The induced action of a renormalization over $\Gamma S\omega SJ\Phi$ can be lifted to an action over $S\Gamma \omega SJ\Phi$ preserving the coproduct, the coaction of $\Gamma SJ\Phi $ and the product of elements with disjoint support.}}
\begin{proof} By Theorem \ref{df} $\Gamma \omega SJ\Phi\simeq \Gamma\omega \otimes_{\mathcal{C}^{\infty}(\mathcal{M})}\Gamma SJ\Phi$, then applying Proposition \ref{prop 1}, $\Gamma\omega \otimes_{\mathcal{C}^{\infty}(\mathcal{M})}\Gamma SJ\Phi$ can be seen as a $\Gamma SJ\Phi$-comodule over $\mathcal{C}^{\infty}(\mathcal{M})$.

\par Call $N$ the natural map  $S\Gamma \omega SJ\Phi\rightarrow \Gamma S\omega SJ\Phi$, and $\varphi$ the isomorphism existent between $\Gamma S\omega SJ\Phi$ and $S_{\mathcal{C^{\infty}(\mathcal{M})}}\Gamma \omega SJ\Phi$ (see Corollary \ref{dm}). In this context, given a renormalization $P$, we still call $P$ the induced map on the sections of $S\omega SJ\Phi$ and then we consider the composition $p:=\pi_{1}\circ \varphi\circ P\circ N :S\Gamma \omega SJ\Phi \rightarrow \Gamma\omega SJ\Phi$.
\par By using Proposition \ref{prop4},  $p:S\Gamma \omega SJ\Phi \rightarrow \Gamma\omega SJ\Phi$  is in correspondence with a morphism which we call $P':S\Gamma \omega SJ\Phi \rightarrow S\Gamma\omega SJ\Phi$ and is written (as in Proposition \ref{prop3} )
\begin{equation}\label{dhdh}
P'(c)=\sum_{n\in\mathbb{N}}\frac{1}{n!}p^{\odot n}\circ \Delta_{\overline{C}}^{(n)}(c),
\end{equation} 
where $C=S\Gamma \omega SJ\Phi$ and $\Delta^{(n)}_{\overline{C}}$ was defined in Section \ref{sec111}.
\par Now we prove that $P'$ preserves the product of elements with disjoint support, so let $\varphi=\prod_{i=1}^{k}f_{i}$ and $\psi=\prod_{l=1}^{r}g_{l}$ be elements of $S\Gamma \omega SJ\Phi$ with disjoint support, $\varphi \cdot \psi \in S\Gamma \omega SJ\Phi$, call $\chi_{i}=f_{i}$ for all $1\leqslant i\leqslant k$ and $\chi_{k+l}=g_{l}$ for all $1\leqslant l\leqslant r$. From the statement over the supports we can suppose that $supp(f_{i})\subseteq U$ and $supp(g_{l})\subseteq V$ for some $U$ and $V$ disjoint open sets in $\mathcal{M}$.
\par Consider $$\Delta_{S\Gamma \omega SJ\Phi}^{(n)}(\varphi\cdot \psi)=\sum_{I_{1},\cdots,I_{n}\neq \emptyset; I_{1} \sqcup \cdots \sqcup I_{n}=\{1,\cdots,k+r \}} \chi_{I_{1}}\otimes \cdots \otimes \chi_{I_{n}},$$ so by applying $p^{\odot n}$ we obtain

\begin{equation}\label{343}
p^{\odot n}\circ\Delta_{S\Gamma \omega SJ\Phi}^{(n)}(\varphi\cdot \psi)=\sum_{I_{1},\cdots,I_{n}\neq \emptyset; I_{1} \sqcup \cdots \sqcup I_{n}=\{1,\cdots,k+r \}} p(\chi_{I_{1}})\odot \cdots \odot p(\chi_{I_{n}}),
\end{equation}

\par Let $I_{k}$ be a subset of $\{1,\cdots,k+r \}$ such that has at least one index $i_{0}\in \{1,\cdots,k\}$ and one index $j_{0}\in \{k+1,\cdots,k+r\}$, and $h\in \mathcal{C}_{0}^{\infty}(\mathcal{M})$ such that $h|_{Sup(\varphi)}\equiv 1$ and $h|_{Sup(\psi)}\equiv 0$ , this function exists because the spacetime is paracompact. Then,
\begin{align*}
& N(\chi_{I_{k}})=N(\chi_{\alpha}\cdots \chi_{i_{0}}\cdots\chi_{j_{0}}\cdots \chi_{|I_{k}|}) \\
& =\chi_{\alpha}\odot\cdots \odot \chi_{i_{0}}\odot \cdots\odot \chi_{j_{0}}\odot\cdots\odot \chi_{|I_{k}|}\\
& =\chi_{\alpha}\odot\cdots \odot h\chi_{i_{0}}\odot \cdots\odot \chi_{j_{0}}\odot\cdots\odot \chi_{|I_{k}|}\\
& =\chi_{\alpha}\odot\cdots \odot \chi_{i_{0}}\odot \cdots\odot h\chi_{j_{0}}\odot\cdots\odot \chi_{|I_{k}|}\\
& =\chi_{\alpha}\odot\cdots \odot \chi_{i_{0}}\odot \cdots\odot 0 \odot\cdots\odot \chi_{|I_{k}|}=0
\end{align*}where $\alpha$ is the the first element of $I_{k}$.

\par Then, in equation \eqref{343} only terms such that all the $I_{j}$ are totally included in $\{1,\cdots,k\}$ or in $\{k+1,\cdots,k+r\}$ survive. We will call $I$ to the first and $J$ to the seconds.
\par In the equation \eqref{343},  we have terms with only  one $I$ and $n-1$ subsets $J$, or terms with two $I$ and $n-2$ $J$'s, three $I$'s, four $I$'s ... $n-1$. In the following sums all the $I$'s are contained in $\{1,\cdots k\}$ and the $J$'s in $\{ k+1,\cdots k+r \}$, all of them are $\neq \emptyset$ and its disjoint union is $\{1,\cdots, k+r \}$.

\begin{align*}
& p^{\odot n}\circ\Delta_{S\Gamma \omega SJ\Phi}^{(n)}(\varphi \cdot \psi) \\
& =\sum_{J_{1}\cdots J_{n-1}} n [p(\chi_{I_{1}})\odot p(\chi_{J_{1}})\odot \cdots \odot p(\chi_{J_{n-1}})] \\
&+\sum_{I_{1}, I_{2}; J_{1}\cdots J_{n-3}=\{ k+1,\cdots,k+r \}} \binom{n}{2} [p(\chi_{I_{1}})\odot p(\chi_{I_{2}})\odot p(\chi_{J_{1}})\odot \cdots \odot p(\chi_{J_{n-2}})] \\
&+\sum_{I_{1}\cdots  I_{3}; J_{1}\cdots J_{n-3}} \binom{n}{3} [p(\chi_{I_{1}})\odot \cdots \odot  p(\chi_{I_{3}})\odot p(\chi_{J_{1}})\odot \cdots \odot p(\chi_{J_{n-3}})]+\cdots ,
\end{align*}
where the binomial coefficients appear because of the possible reorderings of the expression. We will make the computations for $n=1,2,3,4...$ For the general case one must proceed with a combinatorial argument as in the proof of Proposition \ref{prop3}. For the case $\textbf{n=1}$ the result of the above computation is zero because there are no subset $J$.
 \par Case $\textbf{n=2}$,   $p^{2}\circ\Delta_{S\Gamma \omega SJ\Phi}(\varphi\cdot \psi)=2p(\chi_{I_{1}})\odot p(\chi_{J_{1}})=2p(\varphi)\odot p(\psi)$.
 \par Case $\textbf{n=3}$: 
 $$p^{\odot 3}\circ \Delta_{S\Gamma \omega SJ\Phi}^{(3)}(\varphi\cdot\psi)=3p(\varphi)\odot[\sum_{J_{1},J_{2}}p(\chi_{J_{1}})\odot p(\chi_{J_{2}})]+3[\sum_{J_{1},J_{2}}p(\chi_{I_{1}})\odot p(\chi_{I_{2}})]\odot p(\psi)$$
 $$ =3p(\varphi)\odot p^{\odot 2}\circ \Delta_{S\Gamma \omega SJ\Phi}^{(2)}(\psi) +3p^{\odot 2}\circ\Delta_{S\Gamma \omega SJ\Phi}^{(2)}(\varphi)\odot p(\psi)$$

\par For $\textbf{n=4}$,
\begin{align*}
& p^{\odot 4}\circ \Delta_{S\Gamma \omega SJ\Phi}^{(4)}(\varphi\cdot\psi)=4p(\varphi)\odot p^{\odot 3}\circ \Delta_{S\Gamma \omega SJ\Phi}^{(3)}(\psi)\\
& +\binom{4}{2}p^{\odot 2}\circ \Delta_{S\Gamma \omega SJ\Phi}^{(2)}(\varphi) \odot p^{\odot 2}\circ \Delta_{S\Gamma \omega SJ\Phi}^{(2)}(\psi)+4p^{\odot 3}\circ \Delta_{S\Gamma \omega SJ\Phi}^{(3)}(\varphi)\odot p(\psi)
\end{align*} 

\par Multiplying the last equations by $\frac{1}{n!}$ and adding them as in equation \eqref{dhdh}, we can see how are the first terms of $P'(\varphi\cdot \psi)$, using easy facts as $\frac{1}{4!}\binom{4}{2}=\frac{1}{2}\frac{1}{2}$. Then we conclude,

\begin{align*}
& P'(\varphi\cdot \psi)=p(\varphi)\odot p(\psi)+\frac{1}{2}p(\varphi)\odot p^{\odot 2}\circ \Delta_{S\Gamma \omega SJ\Phi}^{(2)}(\psi) +\frac{1}{2}p^{\odot 2}\circ\Delta_{S\Gamma \omega SJ\Phi}^{2}(\varphi)\odot p(\psi)\\
& +\frac{1}{3!}p(\varphi)\odot p^{\odot 3}\circ \Delta_{S\Gamma \omega SJ\Phi}^{(3)}(\psi) 
+\frac{1}{2}\frac{1}{2}p^{\odot 2}\circ \Delta_{S\Gamma \omega SJ\Phi}^{(2)}(\varphi) \odot p^{\odot 2}\circ \Delta_{S\Gamma \omega SJ\Phi}^{(2)}(\psi)
\end{align*}
\begin{equation}\label{producto}
 +\frac{1}{3!}p^{\odot 3}\circ \Delta_{S\Gamma \omega SJ\Phi}^{(3)}(\varphi)\odot p(\psi)+\cdots
\end{equation}

\par On the other hand if we compute $P'(\varphi)\odot P'(\psi)$, we have

\begin{equation*}
\left(\sum_{k\in\mathbb{N}}\frac{1}{k!}p^{\odot k}\circ \Delta_{S\Gamma \omega SJ\Phi}^{(k)}(\varphi)\right)\odot \left(\sum_{k\in\mathbb{N}}\frac{1}{s!}p^{\odot s}\circ \Delta_{S\Gamma \omega SJ\Phi}^{(s)}(\psi)\right)
\end{equation*}
\begin{equation*}
=\left( p\circ Id_{S\Gamma \omega SJ\Phi}(\varphi)+\frac{1}{2}p^{\odot 2}\circ\Delta_{S\Gamma \omega SJ\Phi}^{(2)}(\varphi)+\frac{1}{3!}p^{\odot 3}\circ \Delta_{S\Gamma \omega SJ\Phi}^{(3)}(\varphi)+\cdots \right)
\end{equation*}
\begin{equation*}
\odot \left( p\circ Id_{S\Gamma \omega SJ\Phi}(\psi)+\frac{1}{2}p^{\odot 2}\circ\Delta_{S\Gamma \omega SJ\Phi}^{(2)}(\psi)+\frac{1}{3!}p^{\odot 3}\circ \Delta_{S\Gamma \omega SJ\Phi}^{(3)}(\psi)+\cdots \right)
\end{equation*}
\begin{align*}
&=p(\varphi)\odot p(\psi)+\frac{1}{2}p(\varphi)\odot p^{\odot 2}\circ \Delta_{S\Gamma \omega SJ\Phi}^{(2)}(\psi) +\frac{1}{2}p^{\odot 2}\circ\Delta_{S\Gamma \omega SJ\Phi}^{2}(\varphi)\odot p(\psi)\\
& +\frac{1}{3!}p(\varphi)\odot p^{\odot 3}\circ \Delta_{S\Gamma \omega SJ\Phi}^{(3)}(\psi) 
+\frac{1}{2}p^{\odot 2}\circ \Delta_{S\Gamma \omega SJ\Phi}^{(2)}(\varphi) \odot \frac{1}{2} p^{\odot 2}\circ \Delta_{S\Gamma \omega SJ\Phi}^{(2)}(\psi)\\
& +\frac{1}{3!}p^{\odot 3}\circ \Delta_{S\Gamma \omega SJ\Phi}^{(3)}(\varphi)\odot p(\psi)+\cdots ,
\end{align*}
which is equal to  $P'(\varphi\cdot\psi)$ by  equation \eqref{producto}. Notice that we used the fact that $\Delta_{S\Gamma \omega SJ\Phi}=Id_{S\Gamma SJ\Phi}$ (see the definition of $\Delta^{(n)}$ in Section \ref{sec111}).
\end{proof}
 
 \par As an observation we can say that renormalizations do not necessarily preserve the product of elements of $S\Gamma_{c}\omega SJ\Phi$ if they have not disjoint support.

\section{Renormalizations $\curvearrowright$ Feynman measures}
 
 
 \par Thanks to the Lemma \ref{lema7},  given a renormalization $P$ we can think it as an automorphism of $S\Gamma \omega SJ\Phi$ which preserves the coaction of $\Gamma SJ\Phi$. A Feynman measure (see Definition \ref{dia}) is a linear map from $S\Gamma_{c} \omega SJ\Phi$ to $\mathbb{C}$; then we can define the action from the renormalization group over the Feynman measures as follows. If  $P$ is a renormalization and $\delta$ is a Feynman measure, we can lift $P$ by using Lemma \ref{lema7} that is still called $P$. Then a new Feynman measure $P(\delta)$ is define by $P(\delta)(A):=\delta(P^{-1}(A))$ if $A\in S\Gamma_{c}\omega SJ\Phi$, . 
 \par It will be interesting to prove that $P(\delta)$ is still a Feynman measure associated with the propagator $\Delta$ if $\delta$ is associated with $\Delta$ too. First of all we will prove the smoothness on the diagonal. 
\par Suppose  $\delta$ is a Feynman measure and let $P$ the lift of some renormalization, and take $A\in S^{n}\Gamma_{c}\omega SJ\Phi$. $P^{-1}$ is an element of the renormalization group that preserves the filtration, so $P^{-1}(A)$ is a polynomial in monomials of degree at least $n$ , say $B_{1}+\cdots+B_{n}$. Given $(p_{1},\cdots,p_{k})\in \sum_{(q,\cdots,q)}\chi_{k}(\delta,B_{k})$ one can affirm that $p_{1}+\cdots +p_{k}=0$ because $\delta$ is smooth in the diagonal, as a conclusion \textbf{$P\delta$ is smooth in the diagonal}. \\

\par To see that $P\delta$ is non-degenerate we can use Theorem \ref{teorema1} to establish that if $P$ is a renormalization  $P|_{\Gamma_{c}\omega}$ is an isomorphism. But the isomorphisms of sheafs (induced by vector bundles) are in correspondence whith the isomorphisms of vector bundles supported by the identity over $\mathcal{M}$ . Given two vector bundles, an isomorphism between them must be linear on each fiber, but in our case the bundle has range $1$, because it is the density bundle. Hence the isomorphism  is given (on the fibers) by multiplication by a constant $\lambda\neq0$, so the bundle morphism is given globally by the multiplication by a nowhere-vanishing function $\lambda\in \mathcal{C}^{\infty}(\mathcal{M})$.
\par Applying this to $P^{-1}$, given $v\in \Gamma_{c} \omega$, we obtain,
$$ P\delta (v)=\delta(P^{-1}(v))=\int_{\mathcal{M}}gP^{-1}(v),$$
where $g$ is a nowhere-vanishing function that exists because $\delta$ is non-degenerate. Moreover,

$$ P\delta (v)=\int_{\mathcal{M}}(g\lambda)(v),$$ 
where $\lambda$ is the nowhere-vanishing function described above. Then \textbf{$P\delta$ is non-degenerate}.\\

\par Let $\delta$ be a Feynman measure, $Q$ an element of the renormalization group (whose inverse is $P$) and let $A\in S^{m}\Gamma_{c} \omega SJ\Phi$ and $B\in S^{n}\Gamma_{c} \omega SJ\Phi$  such that there is no coordinate of any  element in $Sup(A)$ which is $\leqslant$ to some coordinate of any element in $Sup(B)$, this particularly implies that $Sup(A)\cap Sup(B)=\emptyset$, and then by Lemma \ref{lema7}, $P$ preserves the product. Then,
\begin{equation*}
Q\delta(A\cdot B)=\delta(P(A\cdot B))=\delta(P(A)\cdot P(B))
\end{equation*} 
 and as $\Delta$ is cut the products involved in the Gaussian condition are well defined (see Observation \ref{obs4}), so
 
 \begin{equation}\label{345}
 Q\delta(A\cdot B)=\sum (\chi(\delta,(P(A))')\otimes \chi(\delta,(P(B))'))\cdot \hat{\Delta}((P(A))'',(P(B))'').
 \end{equation} 
 
 \par We will now compute $(P(A))'$, $(P(A))''$, $(P(B))''$ and $(P(B))'$. 
  Consider
  \begin{equation*}
      \xymatrix{
       S\Gamma \omega SJ\Phi \ar[rr]^{\sigma} \ar[d]_{P} & &    S\Gamma \omega SJ\Phi\otimes_{\mathcal{C}^{\infty}(\mathcal{M})} S\Gamma SJ\Phi \ar[d]^{P\otimes Id} \\
     S\Gamma \omega SJ\Phi \ar[rr]^{\sigma} &  &  S\Gamma \omega SJ\Phi\otimes_{\mathcal{C}^{\infty}(\mathcal{M})} S\Gamma SJ\Phi  \\}
     \end{equation*} 
   which is commutative.
  \par Applying this maps to the element $A$, and remembering that $\sigma(A)=\sum A'\otimes A''$ we have
  $$ A''=(P(A))'' $$ and the same for $B$. Substituting  the results in equation \eqref{345}, we obtain 
   
 \begin{equation*}
   Q\delta(A\cdot B)=\sum (\chi(\delta,(P(A))')\otimes \chi(\delta,(P(B))'))\cdot \hat{\Delta}(A'',B'').
 \end{equation*}

   \par The commutative diagram tells us that $(P(A))'=P(A')$, using this we have
   
\begin{align*}
  & Q\delta(A\cdot B)=\sum (\chi(\delta,P(A'))\otimes \chi(\delta,P(B')))\cdot \hat{\Delta}(A'',B'')\\
  & =\sum (\chi(Q\delta,A')\otimes \chi(Q\delta,B'))\cdot \hat{\Delta}(A'',B''),
\end{align*}
 which is exactly the Gaussian condition for $Q\delta$. As a conclusion we can say that if $P$ is a renormalization  and $\delta$ a Feynman measure then $P\delta$ is a Feynman measure too. So we have an action of the renormalization group over the set of Feynman measures associated to a fixed cut propagator.\\

  \par We will need the following result.
  
  {\lemma\label{lemacociente}{ Given $k\in \mathbb{N}$, if $\lambda:S^{k}\Gamma_{c}\omega SJ\Phi\rightarrow \mathbb{C} $ is a continuous function satisfying $Sup(\lambda)\subseteq \text{Diag}(\mathcal{M}^{k})$ then  $\lambda$ is $\mathcal{C}^{\infty}(\mathcal{M})$-balanced.}}
\begin{proof}
 Let $f$ be a smooth function over $\mathcal{M}$. To prove that\\ $\lambda(A_{1}\cdots fA_{i_{0}}\cdots A_{j_{0}}\cdots A_{k})=\lambda(A_{1}\cdots A_{i_{0}}\cdots fA_{j_{0}}\cdots A_{k})$ ($A_{i}$'s are Lagrangian densities of compact support) as complex numbers,  we use Remark \ref{obs5} and reduce it to prove the equality in $D_{d}(\mathcal{M}^{k})$. 
\par Let $g_{1}\otimes \cdots \otimes g_{k}$ be an element of $\mathcal{C}^{\infty}(\mathcal{M})^{\otimes k}$ ( recall these element are dense in  $\mathcal{C}^{\infty}(\mathcal{M}^{k})$, and $\lambda$ is continuous ) whose support is included in $Diag(\mathcal{M}^{k})$, then we must compare  $\chi_{k}(\lambda,A_{1}\cdots fA_{i_{0}}\cdots A_{j_{0}}\cdots A_{k})(g_{1}\otimes \cdots \otimes g_{k})$ and\\ $\chi_{k}(\lambda,A_{1}\cdots A_{i_{0}}\cdots fA_{j_{0}}\cdots A_{k})(g_{1}\otimes \cdots \otimes g_{k})$, so if they are equal then $\lambda(A_{1}\cdots fA_{i_{0}}\cdots A_{j_{0}}\cdots A_{k})$ and $\lambda(A_{1}\cdots A_{i_{0}}\cdots fA_{j_{0}}\cdots A_{k})$ will also be equal. But, 
$$\chi_{k}(\lambda,A_{1}\cdots fA_{i_{0}}\cdots A_{j_{0}}\cdots A_{k})(g_{1}\otimes \cdots \otimes g_{k})$$
$$=\lambda(g_{1}A_{1}\cdots g_{i_{0}}fA_{i_{0}}\cdots g_{j_{0}}A_{j_{0}}\cdots g_{k}A_{k})$$
\begin{equation}\label{3.46}
=\chi_{k}(\lambda,A_{1}\cdots A_{i_{0}}\cdots A_{j_{0}}\cdots A_{k})(g_{1}\otimes \cdots \otimes  fg_{i_{0}}\otimes\cdots \otimes g_{k})
\end{equation}
using that $g_{1}\otimes \cdots \otimes  fg_{i_{0}}\otimes\cdots \otimes g_{k}=g_{1}\otimes \cdots \otimes  fg_{j_{0}}\otimes \cdots \otimes g_{k}$ (that is because evaluating the left hand side in a point $(x_{1},\cdots,x_{k})\in \mathcal{M}^{k}$ we obtain $g_{1}(x_{1}) \cdots  f(x_{i_{0}})g_{i_{0}}(x_{i_{0}})\cdots g_{k}(x_{k})$, which is zero if there are two points $x_{i}\neq x_{j}$. But in the relevant case $(x_{1},\cdots,x_{k})\in \text{Diag}(\mathcal{M}^{k})$, that is  $(x,\cdots,x)$ we obtain  $g_{1}(x) \cdots   f(x)g_{i_{0}}(x)\cdots  g_{k}(x)=g_{1}(x) \cdots   g_{i_{0}}(x)  \cdots f(x)g_{j_{0}}(x) \cdots g_{k}(x)$ where we can move $f(x)$ because it is a product on $\mathbb{C}$) we conclude  that the last term in Equation \eqref{3.46} is equal to 
$$\chi_{k}(\lambda,A_{1}\cdots A_{i_{0}}\cdots A_{j_{0}}\cdots A_{k})(g_{1}\otimes \cdots \otimes  fg_{j_{0}}\otimes\cdots \otimes g_{k})$$
$$ \chi_{k}(\lambda,A_{1}\cdots fA_{j_{0}}\cdots A_{j_{0}}\cdots A_{k})(g_{1}\otimes \cdots \otimes g_{k}). $$  



\par Then from Remark \ref{obs5} we conclude that $\chi_{k}(\lambda,A_{1}\cdots fA_{i_{0}}\cdots A_{j_{0}}\cdots A_{k})=\chi_{k}(\lambda,A_{1}\cdots A_{i_{0}}\cdots fA_{j_{0}}\cdots A_{k})$ as element of  $D_{d}(\mathcal{M}^{k})$ for all $k\in \mathbb{N}$, which implies that $\lambda(A_{1}\cdots fA_{i_{0}}\cdots A_{j_{0}}\cdots A_{k})=\lambda(A_{1}\cdots A_{i_{0}}\cdots fA_{j_{0}}\cdots A_{k})$ as complex numbers. 
\end{proof}

\par To prove the next theorem we will use a well-known result,
\begin{lemma}[See \cite{ro}]\label{rot}
If $E\rightarrow \mathcal{M}$ is a  finite rank vector bundle over a manifold, then $\Gamma_{c}(E)$ is a projective $\mathcal{C}^{\infty}(\mathcal{M})$-module. 
\end{lemma}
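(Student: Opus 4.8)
The plan is to reduce the statement to the projectivity of a single concrete module, namely $\mathcal{C}^{\infty}_{c}(\mathcal{M})$ regarded as a module over $\mathcal{C}^{\infty}(\mathcal{M})$. First I would invoke the isomorphism of $\mathcal{C}^{\infty}(\mathcal{M})$-modules
\[
\Gamma_{c}(E)\;\cong\;\mathcal{C}^{\infty}_{c}(\mathcal{M})\otimes_{\mathcal{C}^{\infty}(\mathcal{M})}\Gamma(E),
\]
which is precisely the isomorphism $\overline{\chi}$ (taken for $H=E$) established inside the proof of Proposition \ref{uhuhuh}. Next I would use the Serre--Swan theorem (see \cite{serre}), already quoted in this chapter, which gives that $\Gamma(E)$ is a projective $\mathcal{C}^{\infty}(\mathcal{M})$-module, hence a direct summand of a free module $\mathcal{C}^{\infty}(\mathcal{M})^{(J)}$. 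Since $\mathcal{C}^{\infty}_{c}(\mathcal{M})\otimes_{\mathcal{C}^{\infty}(\mathcal{M})}(-)$ distributes over direct sums and preserves direct summands, $\Gamma_{c}(E)$ becomes a direct summand of $\mathcal{C}^{\infty}_{c}(\mathcal{M})^{(J)}$, a direct sum of copies of $\mathcal{C}^{\infty}_{c}(\mathcal{M})$. As a direct summand of a projective module is projective, the whole assertion reduces to showing that $\mathcal{C}^{\infty}_{c}(\mathcal{M})$ is projective over $\mathcal{C}^{\infty}(\mathcal{M})$.

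To prove this last claim I would appeal to the dual-basis characterization: a module $P$ over a ring $R$ is projective if and only if there are families $\{x_{i}\}_{i\in I}\subseteq P$ and $\{\phi_{i}\}_{i\in I}\subseteq\operatorname{Hom}_{R}(P,R)$ such that, for every $x\in P$, one has $\phi_{i}(x)=0$ for all but finitely many $i$ and $x=\sum_{i}\phi_{i}(x)\,x_{i}$. Because $\mathcal{M}$ is Hausdorff and second countable it is $\sigma$-compact and paracompact, so (as recalled at the beginning of Chapter 2) it admits a countable locally finite cover $\{U_{i}\}_{i\in I}$ by relatively compact open sets together with a subordinate partition of unity $\{\rho_{i}\}_{i\in I}$ with $\rho_{i}\in\mathcal{C}^{\infty}_{c}(\mathcal{M})$ and $\sum_{i}\rho_{i}=1$. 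For each $i$ I would choose $\sigma_{i}\in\mathcal{C}^{\infty}_{c}(\mathcal{M})$ equal to $1$ on a neighbourhood of $\operatorname{supp}\rho_{i}$, set $x_{i}=\sigma_{i}$, and define $\phi_{i}\colon\mathcal{C}^{\infty}_{c}(\mathcal{M})\to\mathcal{C}^{\infty}(\mathcal{M})$ by $\phi_{i}(g)=\rho_{i}g$.

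It then remains to verify the two dual-basis conditions. Each $\phi_{i}$ is $\mathcal{C}^{\infty}(\mathcal{M})$-linear, since $\phi_{i}(hg)=\rho_{i}hg=h\phi_{i}(g)$, and its values lie in $\mathcal{C}^{\infty}_{c}(\mathcal{M})\subseteq\mathcal{C}^{\infty}(\mathcal{M})$. Given $g\in\mathcal{C}^{\infty}_{c}(\mathcal{M})$, its compact support meets only finitely many of the $U_{i}$, whence $\rho_{i}g=0$ for all but finitely many $i$; this is the finiteness condition. Finally $\sum_{i}\phi_{i}(g)\,x_{i}=\sum_{i}\rho_{i}g\,\sigma_{i}=\sum_{i}\rho_{i}g=g$, where the middle equality uses $\sigma_{i}\equiv 1$ on $\operatorname{supp}\rho_{i}$ and the last uses $\sum_{i}\rho_{i}=1$ (all sums being finite after evaluation at the compactly supported $g$). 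This exhibits an explicit dual basis and establishes the projectivity of $\mathcal{C}^{\infty}_{c}(\mathcal{M})$.

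The conceptual ingredients (the tensor isomorphism and Serre--Swan) are imported wholesale from earlier results, so the only genuine content, and the main obstacle, is the projectivity of $\mathcal{C}^{\infty}_{c}(\mathcal{M})$ \emph{over} $\mathcal{C}^{\infty}(\mathcal{M})$. The subtle point is to resist tensoring up to $\mathcal{C}^{\infty}_{c}(\mathcal{M})$ as a base ring, which is non-unital and would only yield projectivity over the wrong ring, and instead to produce an honest $\mathcal{C}^{\infty}(\mathcal{M})$-linear dual basis; the delicate verification is the local finiteness of $\{\phi_{i}\}$, which is exactly where the compact support of the argument and the local finiteness of the cover enter. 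As an alternative route bypassing the tensor-product manipulation, I could embed $E$ as a subbundle of a trivial bundle $\mathcal{M}\times\mathbb{C}^{N}$, obtain a complement $E'$ with $E\oplus E'\cong\mathcal{M}\times\mathbb{C}^{N}$, and restrict the fibrewise splitting to compactly supported sections to get $\Gamma_{c}(E)\oplus\Gamma_{c}(E')\cong\mathcal{C}^{\infty}_{c}(\mathcal{M})^{N}$, which again reduces the problem to the same projectivity statement.
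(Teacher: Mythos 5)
Your proof is correct, but it takes a genuinely different route from the paper, because the paper gives no argument at all for this lemma: it is stated as a black box with the citation \cite{ro} (Finney--Rotman, a paper on paracompactness of locally compact Hausdorff spaces), just as in Borcherds' original article. You instead assemble a self-contained proof from ingredients already in the thesis: the isomorphism $\Gamma_{c}(E)\cong\mathcal{C}^{\infty}_{c}(\mathcal{M})\otimes_{\mathcal{C}^{\infty}(\mathcal{M})}\Gamma(E)$ from the proof of Proposition \ref{uhuhuh}, the Serre--Swan theorem for non-compact manifolds \cite{serre}, and then the only genuine content --- projectivity of $\mathcal{C}^{\infty}_{c}(\mathcal{M})$ over $\mathcal{C}^{\infty}(\mathcal{M})$ --- which you settle with an explicit dual basis $\phi_{i}(g)=\rho_{i}g$, $x_{i}=\sigma_{i}$, built from a partition of unity $\{\rho_{i}\}$ subordinate to a countable locally finite cover by relatively compact open sets (available because the manifolds in this thesis are assumed Hausdorff and second countable, hence $\sigma$-compact) and bump functions $\sigma_{i}\equiv 1$ near $\operatorname{supp}\rho_{i}$. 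All three verifications are sound: $\mathcal{C}^{\infty}(\mathcal{M})$-linearity of $\phi_{i}$, the finiteness condition (a compact set meets only finitely many members of a locally finite family, so $\rho_{i}g=0$ for almost all $i$), and the reconstruction $\sum_{i}\rho_{i}g\,\sigma_{i}=\sum_{i}\rho_{i}g=g$; this is exactly the dual-basis criterion for (not necessarily finitely generated) projective modules. Your alternative route --- embedding $E$ as a direct summand of a trivial bundle so that $\Gamma_{c}(E)\oplus\Gamma_{c}(E')\cong\mathcal{C}^{\infty}_{c}(\mathcal{M})^{N}$ --- is in fact preferable, since it bypasses Proposition \ref{uhuhuh}, whose injectivity argument is itself somewhat delicate. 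What the paper's citation buys is brevity; what your argument buys is an actual proof that makes explicit the topological input ($\sigma$-compactness producing the locally finite cover) which the reference to a paracompactness paper only supplies implicitly, so your text fills a real gap in the exposition rather than duplicating it.
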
 

\par Consider the vector bundle $\omega S^{k}J\Phi$,  which will be of finite rank if and only we take finite order jets $J\Phi$. Then from now on we must understand $J\Phi$ as the vector bundle of finite jets up to some fixed order, i.e. $J\Phi=J^{k}\Phi$ for some $k\in \mathbb{N}$.
\par Another result we will use is,

\begin{lemma}\label{simet}
If $A$ and $B$ are projective modules over a commutative ring $R$, then the symmetric tensor product $(A\otimes_{R} B)/I$ is also a projective $R$-module ($I$ is the ideal of the vectors which give symmetry).
\end{lemma}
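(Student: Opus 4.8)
The plan is to rely on the standard characterization that an $R$-module is projective if and only if it is a direct summand of a free $R$-module, and then to exhibit the symmetric tensor product as such a summand. (Since a \emph{symmetric} tensor product presupposes that the two factors coincide, I read the statement in the case $A=B$, which is also the case needed in the application.) As a preliminary step I would record that the ordinary tensor product of two projectives is projective: choosing free modules $F$ and $G$ together with decompositions $F\cong A\oplus A'$ and $G\cong B\oplus B'$, the module $F\otimes_{R}G$ is free and $A\otimes_{R}B$ is one of the direct summands appearing in $F\otimes_{R}G\cong\bigoplus(A\oplus A')\otimes_{R}(B\oplus B')$; hence $A\otimes_{R}B$ is projective. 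This settles the tensor-product case and fixes the idea for the symmetric one.

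For the symmetric quotient the cleanest route uses the symmetric algebra. Writing $A$ as a direct summand of a free module, $F\cong A\oplus A'$, the universal property of the symmetric algebra furnishes a natural isomorphism of graded $R$-algebras $S(F)\cong S(A)\otimes_{R}S(A')$, so that in each degree $n$ one obtains
\begin{equation*}
S^{n}F\cong\bigoplus_{i+j=n}S^{i}A\otimes_{R}S^{j}A'.
\end{equation*}
Since $F$ is free, $S^{n}F$ is again free (a basis being the degree-$n$ monomials in a basis of $F$), and the summand with $i=n$, $j=0$ is exactly $S^{n}A$. Thus $S^{n}A$ is a direct summand of a free $R$-module and is therefore projective; taking $n=2$ gives that $(A\otimes_{R}A)/I=S^{2}A$ is projective, which is the assertion.

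In the setting where this lemma is actually applied, $R=\mathcal{C}^{\infty}(\mathcal{M})$ is an $\mathbb{R}$-algebra, so $n!$ is invertible and an even more direct argument is available: the symmetrization operator $e=\frac{1}{n!}\sum_{\sigma\in\mathbb{S}_{n}}\sigma$ on $A^{\otimes n}$ is an idempotent $R$-linear endomorphism whose image is isomorphic to $S^{n}A$, so $S^{n}A$ is a direct summand of $A^{\otimes n}$, which is projective by the preliminary step. I expect the only genuinely delicate point to be the justification of the isomorphism $S(A\oplus A')\cong S(A)\otimes_{R}S(A')$ and of the freeness of $S^{n}F$ over an arbitrary commutative ring when $A'$ and $F$ may be infinitely generated, as happens for $\Gamma_{c}J\Phi$; both are standard consequences of the relevant universal properties, but must be invoked in their infinitely generated form. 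Note that the hypothesis $\mathbb{Q}\subseteq R$ is not needed for the symmetric-algebra argument, only for the idempotent shortcut.
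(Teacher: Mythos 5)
Your proof is correct, and it supplies an argument that the paper essentially omits: the paper's entire proof of this lemma is a citation to \cite{wei} plus the one-line remark that the claim ``is deduced from the fact that the tensor product commutes with the direct sum.'' Your first argument is the rigorous realization of that hint, transported to symmetric powers: from $F\cong A\oplus A'$ with $F$ free and the natural isomorphism $S(A\oplus A')\cong S(A)\otimes_{R}S(A')$ you obtain $S^{n}F\cong\bigoplus_{i+j=n}S^{i}A\otimes_{R}S^{j}A'$, with $S^{n}F$ free on the degree-$n$ monomials in a basis of $F$, so that $S^{n}A$ is a direct summand of a free module and hence projective. Both ingredients hold over any commutative ring and without finiteness hypotheses, and your attention to the infinitely generated case is warranted rather than pedantic: the modules to which the lemma is actually applied, such as $\Gamma_{c}\omega$, are projective but \emph{not} finitely generated over $\mathcal{C}^{\infty}(\mathcal{M})$ (that is the content of the paper's Lemma \ref{rot}). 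Your second argument, splitting $S^{n}A$ off the projective module $A^{\otimes n}$ via the symmetrization idempotent $\frac{1}{n!}\sum_{\sigma\in\mathbb{S}_{n}}\sigma$, has no counterpart in the paper; it is a genuinely different and shorter route, at the price of requiring $n!$ invertible in $R$, which, as you note, costs nothing in the intended application where $R=\mathcal{C}^{\infty}(\mathcal{M})$ is an $\mathbb{R}$-algebra. Finally, your observation that the printed statement only parses when $A=B$ is apt, and your reading — tensor products of projectives are projective, and every symmetric power $S^{n}A$ of a projective $A$ is projective — is exactly what the transitivity theorem requires, since $S_{\mathcal{C}^{\infty}(\mathcal{M})}\Gamma\omega SJ\Phi$ involves all symmetric powers, not only $S^{2}$.
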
     
\begin{proof}
Can be found in many books of elementary algebra as \cite{wei}, and it is deduced from the fact that the tensor product commutes with the direct sum. 
\end{proof}

\par With all these results we are ready to establish and prove the following theorem which gives us an important property of this action.

 \begin{thm}The group of renormalizations acts transitively on the set of Feynman measures associated with a given cut local propagator. 
 \end{thm}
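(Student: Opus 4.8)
The plan is to build the required renormalization as an infinite product $P=\cdots\circ P_3\circ P_2\circ P_1$ with $P_n\in G'_n$, constructed recursively so that after applying $P_n\circ\cdots\circ P_1$ the transported measure agrees with the target on $S^{\leqslant n}\Gamma_c\omega SJ\Phi$. I would fix two Feynman measures $\delta,\delta'$ associated with the same cut local propagator $\Delta$. Since every element of $G'$ is such a product (Corollary \ref{hhhhhh}) and each factor $P_n$ fixes all lower degrees, the product is well defined and lies in $G'$; as each $P_n$ will be a genuine renormalization and the filtration $S^{\leqslant n}$ is preserved, the restriction of $P$ to any degree equals a finite product of renormalizations, so $P$ preserves the coaction of $SJ\Phi$ and belongs to the renormalization group. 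For the base case I would use non-degeneracy: there are nowhere-vanishing $g,g'$ with $\delta(v)=\int_{\mathcal{M}}gv$ and $\delta'(v)=\int_{\mathcal{M}}g'v$ on $\Gamma_c\omega=S^1\omega S^0J\Phi$, and as explained before the statement, multiplication by the nowhere-vanishing $g'/g$ defines $P_1\in G'_1$ with $P_1\delta$ agreeing with $\delta'$ on $S^1$.

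For the inductive step, suppose $\tilde\delta:=P_{n-1}\circ\cdots\circ P_1\,\delta$ agrees with $\delta'$ on $S^{\leqslant n-1}\Gamma_c\omega SJ\Phi$. An element $P_n\in G'_n$ has sequential representation $\{Id,0,\ldots,0,p_n,0,\ldots\}$, so formula \eqref{3.33} collapses to $P_n(A)=A+p_n(A)$ for $A\in S^n\Gamma_c\omega SJ\Phi$, with $p_n(A)$ of outer degree one; hence $P_n^{-1}(A)=A-p_n(A)$ and $P_n$ fixes every degree $<n$. Therefore $P_n\tilde\delta$ still agrees with $\delta'$ below degree $n$, while on $S^n$ one computes $P_n\tilde\delta(A)=\tilde\delta(A)-\tilde\delta(p_n(A))=\tilde\delta(A)-\delta'(p_n(A))$, the last step because $p_n(A)$ has degree one $\leqslant n-1$. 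Thus matching $\delta'$ in degree $n$ reduces to solving $\delta'(p_n(A))=\mu_n(A)$, where $\mu_n:=(\tilde\delta-\delta')|_{S^n}$.

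The \emph{heart of the argument, and the step I expect to be the main obstacle}, is to show that $\mu_n$ is supported on the diagonal of $\mathcal{M}^n$ and to realize it as $\delta'\circ p_n$ for an admissible renormalization datum. For the support claim I would invoke the Gaussian condition \eqref{gauss}: off the diagonal, any point has a neighbourhood on which a relevant pair of factors has $\preceq$-incomparable supports, so both $\chi_n(\tilde\delta,-)$ and $\chi_n(\delta',-)$ are expressed through $\hat\Delta$ and lower-degree data that coincide by induction; a partition-of-unity localization then confines $\chi_n(\mu_n,-)$ to $\mathrm{Diag}(\mathcal{M}^n)$, the products being well defined since $\Delta$ is cut (cf. Remark \ref{obs4}). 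Being continuous and diagonally supported, $\mu_n$ is $\mathcal{C}^\infty(\mathcal{M})$-balanced by Lemma \ref{lemacociente}, hence descends to a functional on sections over a single copy of $\mathcal{M}$. Using the projectivity supplied by Lemmas \ref{rot} and \ref{simet}, together with non-degeneracy of $\delta'$ (dividing by the nowhere-vanishing $g'$), I would solve for a morphism $\eta_n\in Hom(S^n\omega SJ\Phi,\omega)$; by Theorem \ref{teorema1} this $\eta_n$ is precisely the datum of a renormalization $P_n$, whose associated $p_n$ satisfies $\delta'(p_n(A))=\mu_n(A)$, the coaction-preserving (comodule) constraint being automatic from that same characterization.

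With this choice $P_n\tilde\delta$ agrees with $\delta'$ on all of $S^{\leqslant n}$, closing the induction. Finally, because renormalizations preserve the filtration and $P_k$ fixes every degree $<k$, for any $A$ of degree $m$ one has $P\delta(A)=\delta(P^{-1}(A))=(P_m\circ\cdots\circ P_1\,\delta)(A)=\delta'(A)$; since this holds in every degree, $P\delta=\delta'$, which establishes transitivity.
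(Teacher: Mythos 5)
Your global strategy coincides with the paper's: write the sought renormalization as an infinite product $\cdots\circ P_{3}\circ P_{2}\circ P_{1}$ with $P_{n}\in G'_{n}$ (Corollary \ref{hhhhhh}) and match the two measures degree by degree. Your bookkeeping in the outer degree is also correct: by \eqref{3.33} an element of $G'_{n}$ satisfies $P_{n}(A)=A+p_{n}(A)$ on $S^{n}$ and fixes $S^{<n}$, so, granting agreement in lower degrees, matching in degree $n$ reduces to solving $\delta'(p_{n}(A))=\mu_{n}(A)$. Where you genuinely depart from the paper is in how $p_{n}$ is produced: you argue via the Gaussian condition that the discrepancy $\mu_{n}$ is supported on the diagonal and then apply Lemma \ref{lemacociente} (this is Borcherds' original route), whereas the paper's proof never invokes the Gaussian condition at this point; it trades $p_{n}$ for a sheaf morphism $h_{n}$ into $\omega$ via the comodule correspondence of Proposition \ref{prop5} and then builds $h_{n}$ by a projective-lifting argument (Lemmas \ref{rot} and \ref{simet}) against the surjection $\delta'$.

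The genuine gap in your proposal is that you never deal with the inner (jet) grading of $\omega SJ\Phi$, and this breaks the argument in two places. First, the base case is false as stated: multiplication by a ratio of the non-degeneracy functions matches $P_{1}\delta$ with $\delta'$ only on $S^{1}\Gamma_{c}\omega S^{0}J\Phi=\Gamma_{c}\omega$, because the non-degeneracy axiom of Definition \ref{dia} constrains the measures only there; on $S^{1}\Gamma_{c}\omega S^{k}J\Phi$ with $k\geqslant 1$ (the one-point functions of composite fields) the two measures are a priori unrelated, and no multiplication operator can match them. This is not cosmetic: your inductive step replaces $\tilde\delta(p_{n}(A))$ by $\delta'(p_{n}(A))$ using agreement in degree one, but $p_{n}(A)$ is a general element of $\Gamma\omega SJ\Phi$ with components in every jet degree, so agreement at $k=0$ alone does not justify that substitution. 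Second, the same omission undermines "solving $\delta'(p_{n}(A))=\mu_{n}(A)$ by dividing by $g'$": writing $p_{n}=(\eta_{n}\otimes \mathrm{Id})\circ\rho$ as in Proposition \ref{prop5}, one has $\delta'(p_{n}(A))=\sum\delta'\bigl(\eta_{n}(A_{(0)})\otimes A_{(1)}\bigr)$, and on the terms where $A_{(1)}$ has positive jet degree $\delta'$ is not integration against $g'$; the equation is a triangular system in the jet degree, to be solved recursively (lowest jet degree first), not by a single division. This inner recursion is precisely what the paper's lexicographic induction on $(n,k)$ and its successive corrections of the form $\delta-\delta'\circ\bigl(h_{1}|_{\Gamma\omega S^{<k}J\Phi}\otimes\mathrm{Id}\bigr)$ accomplish; your outline would be repaired by running your construction of $\eta_{n}$ jet degree by jet degree in the same way, with the $n=1$, $k=0$ step given by multiplication by $g/g'$ (note also the inversion: it is $P_{1}^{-1}$, not $P_{1}$, that should act by $g'/g$) and corrections added for $k\geqslant 1$.
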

\begin{proof}
 We already proved that the action is well defined in the sense that the result is also a Feynman measure associated to the same given cut local propagator. 
\par To finish the proof we must prove the transitivity. Given $\delta$ and $\delta'$ Feynman measures associated with the same cut local propagator,  we want to prove the existence of a $g\in G$ ($g:S\Gamma\omega SJ\Phi\to S\Gamma\omega SJ\Phi$ ) such that $g\delta=\delta'$, expressing $g$ as $\cdots g_{3}\cdot g_{2}\cdot g_{1}$ (see Corollary \ref{hhhhhh}). We proceed by induction, on the degree of the elements of $S\Gamma\omega SJ\Phi$ i.e. $S^{n}\Gamma\omega S^{k}J\Phi$ for $(n,k)\in \mathbb{N}\times \mathbb{N}_{0}$ in the lexicographic order.
\par We begin proving the existence of $g_{1}:S\Gamma \omega S J\Phi\to S\Gamma\omega SJ\Phi$ whose sequential representation is  $\{g_{1}\}_{n\in \mathbb{N}}=\{ f_{1},0,\cdots \}$ where $f_{1}:\Gamma\omega SJ\Phi\to \Gamma\omega SJ\Phi\simeq \Gamma_{c}\omega \otimes S_{\mathcal{C}^{\infty}(\mathcal{M})}\Gamma J\Phi$ (where we use Theorems \ref{df} and \ref{dm}). Hence by Proposition \ref{prop5}, there exits a map $h_{1}:\Gamma\omega SJ\Phi \to \Gamma_{c}\omega$ such that $f_{1}=(h_{1}\otimes Id_{\Gamma SJ\Phi})\circ(Id_{\Gamma_{c}\omega}\otimes \Delta_{\Gamma SJ\Phi})$. 
\par Taking $n=1$ and $k=0$ in $S^{n}\Gamma\omega S^{k}J\Phi$, we want to define $f_{1}$ over the elements $\mathcal{L}=\alpha\otimes l$ where $\alpha\in \Gamma_{c}\omega$ and $l=1 \in S^{0}\Gamma J\Phi$. We do that by means of $h_{1}$, 
\begin{align*}
& g_{1}(\mathcal{L})=f_{1}(\mathcal{L})\\
& =f_{1}(\alpha\otimes l)=(h_{1}\otimes Id_{\Gamma SJ\Phi})\circ(Id_{\Gamma_{c}\omega}\otimes \Delta_{\Gamma SJ\Phi})(\alpha\otimes 1) \\
& =(h_{1}\otimes Id_{\Gamma\omega SJ\Phi})\circ(\alpha \otimes (1\otimes 1)) \\
&  =(h_{1}\otimes Id_{\Gamma\omega SJ\Phi})\circ((\alpha \otimes 1)\otimes 1)  \\
& =h_{1}(\alpha\otimes 1)\otimes Id_{\Gamma SJ\Phi}(1)= h_{1}(\mathcal{L})\otimes 1 =h_{1}(\mathcal{L}).
\end{align*}
Applying $\delta'$ we have
\begin{align*}
& \delta'(g_{1}(\mathcal{L}))=\delta'(f_{1}(\mathcal{L}))=\delta'(h_{1}(\mathcal{L})).
\end{align*}
We want  $h_{1}$ to satisfy $\delta'\circ h_{1}=\delta$ over the elements with degree $n=1$ and $k=0$. In the case where $\Gamma\omega S^{0} J\Phi$ is a free $\mathcal{C}^{\infty}(\mathcal{M})$-module so it is easy to define $h_{1}|_{\Gamma\omega S^{0} J\Phi}$ over a basis such that $h_{1}|_{\Gamma\omega S^{0} J\Phi}$ satisfies $\delta'\circ h_{1}|_{\Gamma\omega S^{0} J\Phi}=\delta$ because $\delta'$ is epimorphism.\\
 In the general case were $\Gamma\omega S^{0} J\Phi$ (is not necessarily a free $\mathcal{C}^{\infty}(\mathcal{M})$-module) is a  projective $\mathcal{C}^{\infty}(\mathcal{M})$-module, then it is a direct summand of a free $\mathcal{C}^{\infty}(\mathcal{M})$-module (call it $F$), so we apply the previous argument to define $\tilde{h}_{1}$ over $F$ and then by restricting $\tilde{h}_{1}$ to the direct summand we are interested to have $h_{1}|_{\Gamma\omega S^{0} J\Phi}$ satisfying  $\delta'\circ h_{1}|_{\Gamma\omega S^{0} J\Phi}=\delta$. Hence the diagram,
\begin{equation}
   \xymatrix{
      & & &    \Gamma_{c}\omega \ar[d]^{\delta'}  \\
    S^{1}_{\mathcal{C}^{\infty}(\mathcal{M})}\Gamma\omega S^{0}J\Phi \ar[rrr]_{\delta} \ar@{-->}[rrru]^{h_{1}|_{ S^{1}_{\mathcal{C}^{\infty}(\mathcal{M})}\Gamma\omega S^{0}J\Phi}} & &  &  \mathbb{C}  \\}
 \end{equation} 
can be completed with a map $h_{1}|_{ S^{1}_{\mathcal{C}^{\infty}(\mathcal{M})}\Gamma\omega S^{0}J\Phi}$ or which is the same $h_{1}|_{\Gamma\omega S^{0} J\Phi}$. 
\par But $g_{1}\delta(\mathcal{L})=\delta'(\mathcal{L})$ if and only if $\delta(\mathcal{L})=\delta'(g_{1}(\mathcal{L}))$, and this is exactly what we have with the aforementioned election of $h_{1}|_{\omega S^{0}J\Phi}$. 
\par Continuing this process by defining $h_{1}$ over elements of $S^{1}_{\mathcal{C}^{\infty}(\mathcal{M})}\Gamma\omega S^{1}J\Phi$ (i.e. $n=1$ and $k=1$), we consider the Lagrangian density $\mathcal{L}=\alpha \otimes l$ where $\alpha \in \Gamma_{c}\omega$ and $l=\varphi \in S^{1}\Gamma J\Phi$ and compute 
\begin{align*}
& g_{1}(\mathcal{L})=f_{1}(\mathcal{L})\\
& =f_{1}(\alpha\otimes l)=(h_{1}\otimes Id_{\Gamma SJ\Phi})\circ(Id_{\Gamma_{c}\omega}\otimes \Delta_{\Gamma SJ\Phi})(\alpha\otimes \varphi) \\
& =(h_{1}\otimes Id_{\Gamma SJ\Phi})\circ(\alpha \otimes [1\otimes \varphi +\varphi \otimes 1]) \\
&  =(h_{1}\otimes Id_{\Gamma SJ\Phi})((\alpha\otimes 1)\otimes \varphi+(\alpha\otimes \varphi)\otimes 1) \\
& =h_{1}(\alpha\otimes 1)\otimes Id_{\Gamma SJ\Phi}(\varphi)+h_{1}(\alpha\otimes \varphi)\otimes Id_{\Gamma\omega SJ\Phi}(1) \\
&= h_{1}(\alpha\otimes 1)\otimes \varphi+h_{1}(\mathcal{L}).
\end{align*}
\par Applying $\delta'$ to both sides of this equation we have, 
\begin{equation*} \delta'(g_{1}(\mathcal{L})=\delta'(h_{1}(\alpha\otimes 1)\otimes \varphi)+\delta'\circ h_{1}(\mathcal{L}).
\end{equation*}
But as we want $\delta'\circ g_{1}=\delta$ then we are looking for a  $h_{1}|_{S^{1}_{\mathcal{C}^{\infty}(\mathcal{M})}\Gamma\omega S^{1}J\Phi}$ such that the equation
\begin{align*} 
& \delta'(h_{1}(\alpha)\otimes \varphi)+\delta'\circ h_{1}(\mathcal{L})=\delta(\mathcal{L}), \\
& \text{or } \delta'\circ h_{1}(\mathcal{L})=(\delta-\delta'\circ (h_{1}|_{S^{1}_{\mathcal{C}^{\infty}(\mathcal{M})}\Gamma\omega S^{0}J\Phi}\otimes Id_{\Gamma SJ\Phi}))(\mathcal{L})
\end{align*}
holds. Take a $h_{1}|_{S^{1}_{\mathcal{C}^{\infty}(\mathcal{M})}\Gamma\omega S^{1}J\Phi}$ such that the next diagram 
\begin{equation}
   \xymatrix{
      & & & & & & &  \Gamma_{c}\omega \ar[d]^{\delta'}  \\
    S^{1}_{\mathcal{C}^{\infty}(\mathcal{M})}\Gamma\omega S^{1}J\Phi \ar[rrrrrrr]_{\delta-\delta'\circ (h_{1}|_{S^{1}_{\mathcal{C}^{\infty}(\mathcal{M})}\Gamma\omega S^{1}J\Phi}\otimes Id_{\Gamma SJ\Phi})} \ar@{-->}[rrrrrrru]^{h_{1}|_{ S^{1}_{\mathcal{C}^{\infty}(\mathcal{M})}\Gamma\omega S^{0}J\Phi}} & & & & & &  &  \mathbb{C}  \\}
 \end{equation}
commutes. The existence of this  $h_{1}|_{S^{1}_{\mathcal{C}^{\infty}(\mathcal{M})}\Gamma\omega S^{1}J\Phi}$ follows from the same arguments given for $h_{1}|_{S^{1}_{\mathcal{C}^{\infty}(\mathcal{M})}\Gamma\omega S^{0}J\Phi}$.
\par Continuing this process we can recursively define $h_{1}|_{S^{1}_{\mathcal{C}^{\infty}(\mathcal{M})}\Gamma\omega S^{k}J\Phi}$ for all $k\in \mathbb{N}_{0}$ and so we have just defined $h_{1}:S^{1}\Gamma\omega SJ \Phi\to \Gamma_{c}\omega$. Proceeding by induction we can define $h_{n}:S^{n}\Gamma\omega SJ \Phi\to \Gamma_{c}\omega$ for each $n\in \mathbb{N}$ (using the Lemma \ref{rot} and Lemma \ref{simet} to affirm that $\Gamma\omega SJ \Phi$ is a projective module and so is $S\Gamma\omega SJ \Phi$. And consequently a is direct addend of a free $\mathcal{C}^{\infty}(\mathcal{M})$-module), each one of is in correspondence with a $g_{n}:S\Gamma\omega SJ \Phi\to S\Gamma\omega SJ \Phi$ whose sequential representation is $\{ 0,\cdots,0,f_{n},0,\cdots\}$. 

The maps $g_{n}$ define by Corollary \ref{hhhhhh} a renormalization $g=...g_{3}g_{2}g_{1}$, such that $\delta(\mathcal{L})=\delta'(g(\mathcal{L}))$ for all $\mathcal{L}\in S_{\mathcal{C}^{\infty}(\mathcal{M})}\Gamma\omega SJ \Phi$. That is $g(\delta)=\delta'$.  
\end{proof}

\end{document}